\documentclass[11pt, letterpaper]{article}

\usepackage[T1]{fontenc}
\usepackage[utf8]{inputenc}
\usepackage{lmodern}
\usepackage{amsmath,amssymb,amsthm,mathtools,bm}
\usepackage[margin=1in]{geometry}
\usepackage{microtype}
\usepackage{hyperref}
\usepackage{cleveref}
\usepackage{enumitem}
\usepackage{algorithm}
\usepackage{algpseudocode}
\usepackage{caption}
\usepackage{float}
\usepackage{tikz}
\usetikzlibrary{arrows.meta,positioning,shapes.geometric}
\usepackage{quantikz}
\usepackage{aliascnt}
\usepackage{xcolor}
\usepackage{booktabs}
\usepackage{xfrac}
\usepackage{graphicx}
\usepackage{authblk}

\hypersetup{colorlinks=true,linkcolor=blue,citecolor=blue,urlcolor=blue}
\allowdisplaybreaks

\newtheorem{theorem}{Theorem}[section]

\newaliascnt{lemma}{theorem}
\newtheorem{lemma}[lemma]{Lemma}
\aliascntresetthe{lemma}

\newaliascnt{proposition}{theorem}

\aliascntresetthe{proposition}

\newaliascnt{corollary}{theorem}

\aliascntresetthe{corollary}

\theoremstyle{definition}

\newaliascnt{definition}{theorem}

\aliascntresetthe{definition}

\newaliascnt{assumption}{theorem}

\aliascntresetthe{assumption}

\newaliascnt{remark}{theorem}

\aliascntresetthe{remark}

\newtheorem{promise}{Promise}
\newtheorem{problem}{Problem}

\DeclareMathOperator{\Tr}{Tr}
\newcommand{\E}{\mathbb E}

\newcommand{\e}{\mathrm e}
\newcommand{\iu}{\mathrm i}

\newcommand{\cSWAP}{\operatorname{cSWAP}}
\newcommand{\SWAPECHO}{\operatorname{SWAP\text{-}ECHO}}

\title{Near-Optimal Ground-State Preparation without Controlled Hamiltonian Evolutions}

\author[1]{Dhrumil Patel\thanks{djp265@vt.edu}}
\author[1,2]{Steven T. Flammia}
\author[2,3]{Raúl García-Patrón}

\affil[1]{Virginia Tech}
\affil[2]{Phasecraft}
\affil[3]{University of Edinburgh}

\begin{document}

\maketitle

\begin{abstract}
Ground-state preparation, energy estimation, and property estimation are fundamental tasks in quantum simulation.
Existing ground-state preparation algorithms achieve near-optimal complexity using block encodings or controlled Hamiltonian evolution, raising a natural question: is controlled Hamiltonian evolution necessary, or can the same scaling be attained in the weaker access model that provides access to only uncontrolled Hamiltonian evolution?
We resolve this question by giving two algorithms that access the input Hamiltonian $H$ only through queries to $U_H=\e^{-\iu H\tau}$ and its inverse $U_H^\dagger$, where $\|H\|\le\alpha_H$ and $\tau=\Theta(1/\alpha_H)$.
Given an input state with ground-state weight at least $\eta$ and a spectral gap of $H$ of at least $\gamma$, our first algorithm uses fresh copies of the input state and requires $\widetilde O(\alpha_H/(\gamma\eta))$ queries to $U_H$ and $U_H^\dagger$, while our second algorithm uses coherent access to the input-state preparation unitary and its inverse to improve this complexity to $\widetilde O(\alpha_H/(\gamma\sqrt{\eta}))$.
The latter bound matches the known lower bound up to logarithmic factors, showing that algorithms with additional access to controlled Hamiltonian evolution cannot improve the query complexity for ground-state preparation beyond logarithmic factors.
Both algorithms are based on a simple mechanism that we call the spectral descent principle, in which the algorithm starts from the input state $|\psi\rangle$ and progressively descends through the energy spectrum, moving closer to the ground state until the desired fidelity is reached.
We also extend this framework to ground-state property and energy estimation and give explicit complexity guarantees for both tasks.
\end{abstract}

\newpage

\tableofcontents

\section{Introduction}
\label{sec:intro}

Given a quantum Hamiltonian, preparing its ground state, determining its ground-state energy or estimating physical observables in that state are central tasks in quantum simulation.
These problems arise naturally in quantum chemistry, condensed-matter physics, and materials simulation, where ground-state energies determine quantities such as reaction energies and phase stability, while ground-state expectation values determine observables such as correlation functions, response properties, and molecular dipole moments~\cite{AspuruGuzik2005,Cao2019,McArdle2020,Bauer2020}. 
Ground-state preparation is also a basic subroutine
for other quantum simulation applications, like estimating Green's functions used in Dynamical Mean-Field Theory \cite{Ayral2025},
but also for purely quantum algorithmic applications like quantum linear algebra.

The difficulty is that these problems are computationally hard for a completely generic Hamiltonian. 
For example, deciding whether the ground-state energy of a local Hamiltonian lies below or above two promised thresholds is QMA-complete~\cite{Kitaev2002,Kempe2006}. 
The closely related problem of estimating a local observable in the ground state, formalized through the approximate-simulation problem APX-SIM, is in fact 
$\mathrm{P}^{\mathrm{QMA}[\log]}$-complete~\cite{Ambainis2014,Gharibian2020}. 
Thus, an efficient ground-state algorithm cannot be expected without some prior information about the particular instance. 
To circumvent this hardness, standard algorithms assume we are provided with an initial state that has a non-negligible overlap with the true ground state~\cite{Poulin2009,Ge2019}. 
For example, in quantum chemistry one may initialize the algorithm with a Hartree-Fock or Kohn-Sham state, a small multi-determinant expansion, or a classically computed tensor-network state; such approximations are routinely used as trial states and can have substantial overlap with the true ground state~\cite{Berry2018,Fomichev2024,Berry2025}. 
Furthermore, tasks such as preparing the ground state or estimating observables in the ground state require the ground state to be distinguished from the rest of the spectrum~\cite{Lin2020,Dong2022,Zhang2022}. For these tasks, we additionally assume a known spectral gap. In practical scenarios, the spectral gap may be partially inferred from classical excited-state calculations or spectroscopic information about the system~\cite{Dreuw2005}.

To make these promises precise, let $H\in\mathbb C^{d\times d}$ be a Hamiltonian with known spectral-norm bound $\|H\|\le\alpha_H$ and with spectral decomposition $H=\sum_{j=0}^{d-1}E_j|\phi_j\rangle\langle\phi_j|$, where $\{|\phi_j\rangle\}_{j=0}^{d-1}$ are the eigenvectors of $H$ and the corresponding eigenvalues are ordered as $E_0<E_1\le\cdots\le E_{d-1}$. 
The ground state $|\phi_0\rangle$ is assumed to be unique. 
We are also given an input state $|\psi\rangle=\sum_jc_j|\phi_j\rangle$ whose ground-state weight (squared overlap) satisfies $|\langle\phi_0|\psi\rangle|^2=|c_0|^2\ge\eta$ for some known $\eta>0$. 
We refer to this as the \emph{overlap promise}. 
When coherent access to this state is needed, we assume black-box access to a preparation oracle $U_\psi$ and its inverse $U_\psi^\dagger$, where $U_\psi$ acts on an $n$-qubit system register, with $n=\lceil \log_2 d\rceil$, and satisfies $U_\psi|0^n\rangle=|\psi\rangle$. As discussed before, we are often given a \emph{gap promise}, i.e., a known spectral-gap lower bound $E_1-E_0\ge\gamma>0$. Note that we do not require these quantities to be known exactly: conservative lower bounds on the ground-state weight and the spectral gap are sufficient.

With the overlap and gap promises in place, we can now state the ground-state tasks considered in this work. 
In \emph{ground-state preparation} (GSP), both promises are given, and for a target infidelity $\xi\in(0,1)$ the goal is to output a state $\rho$ satisfying $\langle\phi_0|\rho|\phi_0\rangle\ge1-\xi$. 
In \emph{ground-state property estimation} (GSPE), an observable $O$ is additionally given and the goal is to estimate $\langle\phi_0|O|\phi_0\rangle$ to additive error $\varepsilon>0$. 
We assume that $O$ is specified classically in a form that admits an efficient measurement procedure, for example through an efficiently measurable Pauli decomposition. 
\textit{Ground-state energy estimation} (GSEE) asks instead for an estimate of $E_0$ to additive error $\varepsilon>0$. 
When the gap promise is available and $H$ itself can be efficiently measured from its classical description, GSEE is contained in the GSPE setting by taking the observable to be $H$, or equivalently $H/\alpha_H$ after normalization. 
GSEE can also be formulated under the overlap promise alone, without assuming a spectral gap, 
which is the reason behind keeping it separate from the other two. 
We will formally define these tasks later in \Cref{sec:ground-state-tasks}.

The computational complexity of these tasks depends on how the algorithm accesses the Hamiltonian $H$.
Our focus is the distinction between uncontrolled and controlled Hamiltonian evolution.
In the uncontrolled time-evolution model, the algorithm has access to $U_H\coloneqq \e^{-\iu H\tau}$ and its inverse $U_H^\dagger=\e^{+\iu H\tau}$ for a specified elementary time step $\tau>0$.
We count the total number $N_H$ of queries to these oracles and write $T_H=N_H\tau$ for the total Hamiltonian-evolution time.
For our algorithms, we choose $\tau=\Theta(1/\alpha_H)$.
On the other hand, the controlled time-evolution model additionally provides access to $\mathrm{ctrl}\text{-}U_H\coloneqq |0\rangle\langle0|\otimes I+|1\rangle\langle1|\otimes U_H$ and its inverse $(\mathrm{ctrl}\text{-}U_H)^\dagger$.
For a general black-box unitary, access to $U_H$ and $U_H^\dagger$ does not imply access to $\mathrm{ctrl}\text{-}U_H$~\cite{Araujo2014}.
Controlled Hamiltonian evolution is therefore an additional oracle assumption.

Existing GSP algorithms achieve near-optimal complexity under stronger Hamiltonian-access assumptions.
Given an $\alpha_H$-block encoding $U_H^{\rm BE}$ of $H$, Lin and Tong gave a GSP algorithm using $\widetilde O(\alpha_H/(\gamma\sqrt{\eta}))$ queries to $U_H^{\rm BE}$, $(U_H^{\rm BE})^\dagger$, and their controlled versions~\cite{Lin2020}.\footnote{Throughout this paper, $\widetilde O(\cdot)$, $\widetilde \Omega(\cdot)$, and $\widetilde \Theta(\cdot)$ suppress polylogarithmic factors.}
For $\tau=\Theta(1/\alpha_H)$, the generic QET-U algorithm likewise uses $\widetilde O(\alpha_H/(\gamma\sqrt{\eta}))$ queries to $\mathrm{ctrl}\text{-}U_H$ and $(\mathrm{ctrl}\text{-}U_H)^\dagger$, corresponding to $\widetilde O(1/(\gamma\sqrt{\eta}))$ total Hamiltonian-evolution time~\cite{Dong2022}.
Concurrent work by Jeffery et al.~\cite{Jeffery2026} achieves optimal GSP query complexity using $\mathrm{ctrl}\text{-}U_H$ and $(\mathrm{ctrl}\text{-}U_H)^\dagger$.
Chen et al.~\cite{Chen2026} also obtain optimal query bounds using $U_H^{\rm BE}$, $(U_H^{\rm BE})^\dagger$, and their controlled versions, assuming a known energy threshold suitably located inside the spectral gap.

These results are now known to be optimal up to logarithmic factors, after the recent work by Somma and de Wolf~\cite{Somma2026} establishing an error-dependent lower bound that holds even when $\mathrm{ctrl}\text{-}U_H$ and $(\mathrm{ctrl}\text{-}U_H)^\dagger$ are available.
Translated into our notation, their result requires $\Omega(\alpha_H\log(1/\xi)/(\gamma\sqrt{\eta}))$ queries to $U_H$, $U_H^\dagger$, $\mathrm{ctrl}\text{-}U_H$, and $(\mathrm{ctrl}\text{-}U_H)^\dagger$ in total for worst-case GSP instances with $\tau=\Theta(1/\alpha_H)$.\footnote{Their GSP lower bound is stated in terms of trace-distance error $\varepsilon$.
For a target ground state $|\phi_0\rangle$ and an output state $\rho$, the trace distance satisfies
\begin{equation}
\frac12\left\|\rho-|\phi_0\rangle\langle\phi_0|\right\|_1 \le \sqrt{1-\langle\phi_0|\rho|\phi_0\rangle},
\end{equation}
where $\|\cdot\|_1$ denotes the trace norm.
Thus, the infidelity guarantee $1-\langle\phi_0|\rho|\phi_0\rangle\le\xi$ implies trace-distance error at most $\sqrt{\xi}$.
Applying their lower bound with $\varepsilon=\sqrt{\xi}$ and using $\log(1/\sqrt{\xi})=\tfrac12\log(1/\xi)$ gives the stated logarithmic dependence.}
The bound holds even when arbitrarily many queries to $U_\psi$ and $U_\psi^\dagger$ are free, and corresponds to $\Omega(\log(1/\xi)/(\gamma\sqrt{\eta}))$ total Hamiltonian-evolution time.

Our goal is to attain this lower bound up to logarithmic factors using only queries to $U_H$ and $U_H^\dagger$.
This uncontrolled time-evolution model is the weakest of the three models discussed above: it provides neither a block encoding of $H$ nor access to $\mathrm{ctrl}\text{-}U_H$ or $(\mathrm{ctrl}\text{-}U_H)^\dagger$.
While QET-U removes controlled evolution for certain spin Hamiltonians by exploiting anticommutation properties, that construction does not extend to generic Hamiltonians satisfying only the standard overlap and gap promises~\cite{Dong2022}.

Removing the controlled Hamiltonian evolution is particularly relevant when the Hamiltonian dynamics is easier to realize than coherent control of the entire Hamiltonian evolution. 
On a fully fault-tolerant digital quantum computer, controlled Hamiltonian evolution can in principle be synthesized when a suitable circuit description of the simulation is available, but this generally introduces additional logical gates and coherent control. 
This overhead is especially relevant in the early fault-tolerant regime, where circuit depth remains limited and reducing the controlled structure of an algorithm is itself an important design objective~\cite{Dong2022,Zhang2022}.

The distinction is even more direct in analog and hybrid analog-digital quantum simulators. 
In these platforms, the desired many-body Hamiltonian is engineered physically and the system then evolves under that Hamiltonian, so ordinary uncontrolled Hamiltonian evolution is a native operation rather than a digitally synthesized subroutine~\cite{Georgescu2014}. 
Programmable Rydberg-atom arrays realize tunable interacting spin Hamiltonians in this way~\cite{Bernien2017,Browaeys2020,Ebadi2021}. For example, an analog-Hamiltonian interface can directly program the coherent drive, detuning, and Rydberg interaction governing the atoms' many-body evolution \cite{wurtz2023aquila}. 
Long-range Ising dynamics is likewise implemented directly in trapped-ion quantum simulators~\cite{Zhang2017}, Hubbard-type Hamiltonians arise naturally in ultracold atoms in optical lattices~\cite{Esslinger2010,Schaefer2020}, and superconducting analog and analog-digital processors can realize interacting Bose-Hubbard or spin dynamics directly in hardware~\cite{Hangleiter2024,Andersen2025}. 
In all of these settings, native evolution under a Hamiltonian does not by itself supply an ancillary qubit that coherently switches the complete evolution on and off.

There has recently been growing interest in avoiding controlled Hamiltonian evolution in ground-state related tasks~\cite{Huggins2020,Lu2021,Russo2021,Schuckert2023,Hemery2024,Yang2024,Clinton2024,Chan2024,Lee2025,Kuji2025,Schwartzman2026,Patel2026}.
These approaches show that coherent control can often be avoided, but typically by exploiting additional structure, such as restricted variational ans\"atze, symmetries, or known reference eigenstates and eigenvalues.
Moreover, many of these works address energy or observable estimation rather than ground-state preparation itself; the few approaches aimed directly at GSP either rely on additional Hamiltonian structure~\cite{Dong2022} or do not provide a generic efficient algorithm with rigorous guarantees~\cite{Schwartzman2026}.
Thus, to the best of our knowledge, up to now there was no generic efficient GSP algorithm with optimal complexity that assumes only the standard overlap and gap promises and relies only on uncontrolled time-evolution oracles $U_H$ and $U_H^\dagger$.
Since efficient GSP can in turn be used to obtain ground-state property and energy-estimation algorithms, resolving this question also provides a general route to these tasks without controlled Hamiltonian evolution. This motivates the central question of this work:

\bigskip

\emph{Can we solve GSP using a near-optimal $\widetilde O(\alpha_H/(\gamma\sqrt{\eta}))$ number of queries to $U_H$ and $U_H^\dagger$ in the uncontrolled time-evolution model, with $\tau=\Theta(1/\alpha_H)$, under only the standard spectral-gap and overlap promises?}

\subsection{Contributions}
\label{sec:contributions}

We answer this question affirmatively by giving two generic GSP algorithms in the uncontrolled time-evolution model.
For $\tau=\Theta(1/\alpha_H)$, our first algorithm, \emph{Spectral Descent} (SD), uses $\widetilde O(\alpha_H/(\gamma\eta))$ queries to $U_H$ and $U_H^\dagger$ with fresh-copy access to $|\psi\rangle$.
Our second algorithm, \emph{Coherent Spectral Descent} (CSD), uses $\widetilde O(\alpha_H/(\gamma\sqrt{\eta}))$ queries to $U_H$ and $U_H^\dagger$ with access to $U_\psi$ and $U_\psi^\dagger$.
Since the matching lower bound holds even when controlled Hamiltonian evolution is available~\cite{Somma2026}, CSD establishes that controlled evolution offers no improvement beyond logarithmic factors in the worst-case Hamiltonian-query complexity of GSP under coherent input-state access.
Our algorithms are based on a simple mechanism that we call the \emph{spectral descent principle}.

\paragraph{Spectral Descent Principle.} A fundamental aspect of both algorithms is the use of two system registers. 
One is the \emph{anchor register}, which stores the current \emph{anchor state} that we seek to drive toward the ground state $|\phi_0\rangle$. 
The other is the \emph{candidate register}, which is prepared in the input state $|\psi\rangle$ and stores the \emph{candidate state}.
The basic idea is to compare these two states and update the anchor register with the candidate state only when the candidate state provides evidence of a lower-energy direction. 
We perform this comparison using an \emph{energy comparator} acting on the anchor register, candidate register, and a single \emph{witness qubit}. 
To understand this comparison spectrally, expand the anchor and candidate states in the eigenbasis of $H$, and consider a pair of eigenstate components $|\phi_a\rangle$ and $|\phi_b\rangle$, with respective energies $E_a$ and $E_b$.
The comparator effectively answers the question: is $E_b < E_a$ by a sufficient margin?
Crucially, it does this relative check without evaluating either $E_a$ or $E_b$ directly.
We measure the witness qubit in the computational basis $\{|0\rangle,|1\rangle\}$, where outcome $1$ indicates a \emph{witness}, meaning that the candidate state has sufficiently lower energy than the current anchor state, while outcome $0$ indicates \emph{no witness}.
Upon observing a witness, we swap the candidate and anchor registers.
By iteratively executing these successful updates, the anchor state monotonically descends through the energy spectrum until it reaches the ground state $|\phi_0\rangle$. 
This is why we refer to this mechanism as the spectral descent principle.
We describe this principle and its action on general anchor and candidate states in more detail in \Cref{sec:comparison-intuition}.

\paragraph{Energy Comparator and SWAP Echo.} The core technical challenge is implementing this energy comparator using only the uncontrolled oracles $U_H$ and $U_H^\dagger$. 
We solve this by introducing a new primitive that we refer to as the \emph{SWAP echo}.
For an anchor state and candidate state with energies $E_a$ and $E_b$ respectively, the SWAP echo interlaces forward and backward uncontrolled time evolution with controlled-SWAP gates (see \Cref{sec:swap-echo} for its circuit). 
This interference writes the relative energy difference $E_a-E_b$ directly into the phase of the witness qubit. 
We then process this relative phase using a Laurent quantum signal processing (QSP) sequence to construct the energy comparator. 
Because the QSP sequence only queries the SWAP echo directly, the entire comparator circuit does not require $\mathrm{ctrl}\text{-}U_H$.

\bigskip

Given that the candidate state $|\psi\rangle$ always possesses a ground-state component of weight at least $\eta$ from the overlap promise, an excited anchor state will trigger a successful downward update with a probability of order $\Omega(\eta)$. 
We present two variations of the algorithm that handle this success probability differently, depending on how the input state $|\psi\rangle$ is accessed.
Our first algorithm, SD, is designed for a weak input-state access model where the algorithm can only produce fresh copies of $|\psi\rangle$ and does not have access to its preparation unitary $U_\psi$. 
In each descent step, we prepare $|\psi\rangle$ in the candidate register, apply the energy comparator, and measure the witness qubit. 
If the measurement yields the outcome $0$ (no witness), we do not swap the anchor, discard the candidate register, and retry. 
Since every excited anchor component can pair with the ground-state component of the candidate state, each trial succeeds with a probability of order $\Omega(\eta)$. 
Therefore, $O(1/\eta)$ trials suffice to obtain an outcome 1 (witness) with constant probability.
This completes one descent step.
SD consists of multiple such descent steps.

\begin{theorem}[GSP via Spectral Descent, Informal]
\label{thm:intro-SD}
Assume the standard gap promise $\gamma$ and overlap promise $\eta$, and suppose we can prepare fresh copies of the input state $|\psi\rangle$. Then, the Spectral Descent algorithm (\Cref{alg:spectral-descent}) prepares a state $\rho$ satisfying $\langle\phi_0|\rho|\phi_0\rangle \ge 1-\xi$, using only the uncontrolled oracles $U_H$ and $U_H^\dagger$, fresh preparations of $|\psi\rangle$, controlled-SWAP gates, and single-qubit operations.

By setting the elementary time step to $\tau = \Theta(1/\alpha_H)$, the algorithm requires $\widetilde{O}(\alpha_H/(\gamma\eta))$ queries to $U_H$ and $U_H^\dagger$, and uses $\widetilde{O}(1/\eta)$ fresh copies of $|\psi\rangle$. 
This corresponds to a total Hamiltonian-evolution time of $\widetilde{O}(1/(\gamma\eta))$. Crucially, the algorithm does not require access to~$\mathrm{ctrl}\text{-}U_H$.
\end{theorem}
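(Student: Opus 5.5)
The proof has three stages: build the energy comparator from the SWAP echo, analyze a single descent step spectrally, and then bound the number of steps and the total query cost.

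\textbf{Stage 1: the comparator.} I would first make the SWAP echo precise. Take anchor and candidate registers together with a witness qubit. Sandwich a controlled-SWAP (controlled on the witness) between $U_H\otimes U_H^\dagger$ and its inverse, so that the uncontrolled evolutions cancel on the ``no swap'' branch. On the ``swap'' branch they leave a phase depending on the energy difference. Concretely, on an eigenpair $|\phi_a\rangle|\phi_b\rangle$ the witness picks up a relative phase $\e^{\pm\iu(E_a-E_b)\tau'}$. The circuit therefore acts on each two-dimensional eigen-subspace spanned by $|\phi_a\phi_b\rangle$ and $|\phi_b\phi_a\rangle$ as a controlled-$\e^{\iu\theta_{ab}\sigma}$-type signal unitary, with $\theta_{ab}\propto (E_a-E_b)\tau$. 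Choosing $\tau=\Theta(1/\alpha_H)$ gives $|\theta_{ab}|<\pi$, so there is no aliasing.

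With this signal oracle in hand, I would invoke Laurent QSP with a polynomial approximating a step function. It should be close to $1$ when $E_a-E_b\ge\gamma/2$ and close to $0$ when $E_a-E_b\le 0$ (or $\le\gamma/4$). Standard sign/erf-polynomial constructions give degree $O((\alpha_H/\gamma)\log(1/\delta))$ for transition width $\Theta(\gamma/\alpha_H)$ in the phase variable and error $\delta$. Each degree costs $O(1)$ SWAP-echo calls, hence $O(1)$ queries to $U_H,U_H^\dagger$, and no $\mathrm{ctrl}\text{-}U_H$ is ever used.

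\textbf{Stage 2: one descent step.} On outcome $1$ we swap the registers. I would track the post-measurement anchor in terms of its distribution over eigenvalues. The key claims are:
\begin{enumerate}[label=(\roman*)]
\item Up to $\delta$-errors, outcome $1$ only transfers weight from anchor components $a$ to candidate components $b$ with $E_b<E_a$. Since anchor and candidate are unentangled product states expanded in the eigenbasis, cross terms between distinct pairs must be handled. I would do this by noting that the comparator is block diagonal in the symmetrized pair subspaces, and then tracing out the discarded register.
\item If the anchor has weight $p\ge$ const on excited states, the success probability is at least $\Omega(\eta\, p)$, because every excited $a$ pairs with $b=0$ and $E_a-E_0\ge\gamma$.
\end{enumerate}
The conclusion I want from this stage is a monotone potential, namely the anchor's excited-state weight, or better its ``energy rank''. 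It must decrease in a controlled way after each successful swap. The argument should use the fact that after a swap the new anchor equals the candidate's post-selected lower-energy part, which puts ground weight at least $\Omega(\eta)$ relative to the remaining excited weight.

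\textbf{Stage 3: counting steps and cost.} I would show that after $K=O(\log(1/\xi))$ successful descents the excited weight is at most $\xi$. My first attempt at this is geometric contraction of excited weight by a constant factor per success. If that fails, I would argue directly that repeated comparisons leave the anchor supported on $\phi_0$: once the anchor is $\phi_0$, no witness can occur except with probability $\delta$. Each descent needs $O(1/\eta)$ trials in expectation. A Chernoff/Markov bound plus a repetition cap gives $\widetilde O(1/\eta)$ fresh copies overall, and total queries $\widetilde O(1/\eta)\cdot O((\alpha_H/\gamma)\log(1/\delta))$. Choosing $\delta=\mathrm{poly}(\xi\eta)$ makes the accumulated QSP error, summed over all trials by a union bound in trace norm, at most $\xi/2$.

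\textbf{Main obstacle.} The hardest step will be Stage 2's potential argument. Because QSP gives only an approximate step with a transition region $0<E_a-E_b<\gamma/2$, swaps between nearly degenerate excited states are unpredictable. Progress therefore has to be measured through the ground-state component alone: a witness from the ground branch strictly increases ground weight. Meanwhile I need to show that excited-excited swaps never decrease ground weight, since they leave $|c_0|^2$ contributions intact. I would try to prove the monotone inequality ``ground weight after a success is at least ground weight before'' first, and then lower-bound the per-success gain.
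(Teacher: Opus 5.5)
Your overall architecture (a SWAP-echo signal processed by Laurent QSP, $\Theta(1/\eta)$ fresh trials per stage, and a ground anchor that stays stable up to a $\mathrm{poly}(\eta\xi)$ comparator error) matches the paper's, but there are two genuine gaps.

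The first gap is that your Stage~1 gadget carries no energy information. With $G=U_H\otimes U_H^\dagger$, the circuit $G^{-1}\,\cSWAP\,G$ equals $|0\rangle\langle0|_q\otimes I+|1\rangle\langle1|_q\otimes G^{-1}\operatorname{SWAP}_{AB}\,G$. The operator $G^{-1}\operatorname{SWAP}_{AB}\,G$ is a reflection with eigenvalues $\pm1$, so the energies enter only its eigenvectors. On the swap branch, $|\phi_a\rangle|\phi_b\rangle$ with $a\neq b$ is sent to a multiple of the orthogonal vector $|\phi_b\rangle|\phi_a\rangle$, so nothing is kicked back onto $q$. For any QSP sequence built from this gadget, the witness statistics depend on the registers only through a SWAP-test overlap of $\sigma$ with $\rho$ evolved by $U_H^{\pm2}$. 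On eigenstates this overlap is just $|\langle\phi_a|\phi_b\rangle|^2$. The evolution must instead sit \emph{between} the two controlled-SWAPs: $\cSWAP\,(U_H\otimes U_H^\dagger)\,\cSWAP=\e^{-\iu Z_q\otimes D_{AB}\tau}$ with $D_{AB}=H_A\otimes I_B-I_A\otimes H_B$. Then both branches restore the register positions, the product eigenbasis diagonalizes the gadget, and each pair sees the signal $\e^{-\iu(E_a-E_b)\tau Z_q}$ that Laurent QSP needs. This also makes the comparator block diagonal in $\Pi_a\otimes\Pi_b$ rather than in symmetrized pair subspaces. That block structure is what guarantees a no-witness outcome leaves the anchor's energy sector untouched, and that off-diagonal anchor blocks vanish after swap-and-discard. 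Relatedly, $|\theta_{ab}|<\pi$ is not enough. The witness probability is a periodic function of the phase, so the occupied phases must lie in a window on which it can be monotone, e.g.\ $|\theta_{ab}|\le\pi/4$ after the shift by $-\gamma/2$.

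The second gap is the progress measure. Tracking ground-state weight can only certify that an $\Omega(\eta)$ fraction of the excited weight reaches $\phi_0$ per stage. Conditioned on a witness, the accepted candidate sits at level $k$ with probability $q_j(k)=\pi_kr_{jk}/\lambda_j$. When $\pi_0=\eta$ and the anchor lies above almost all of the input's weight ($\lambda_j\approx1$), this gives $q_j(0)\approx\eta$. So excited weight does not contract by a constant factor per success. Your route then needs $\Theta(\eta^{-1}\log(1/\xi))$ stages of $\Theta(1/\eta)$ trials, i.e.\ $\widetilde O(\alpha_H/(\gamma\eta^2))$ queries. If the remaining weight is spread over many $\gamma$-separated levels, even reaching $\phi_0$ takes about $\ln(1/\eta)$ successes, so $K=O(\log(1/\xi))$ is also too few.

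The missing idea is the descent potential $W_j=\sum_{k<j}\pi_k$, the input weight strictly below the anchor's level, averaged as $V_t=\sum_jW_j\Tr(\Pi_j\rho_t)$. The argument has four steps.
\begin{itemize}
\item The identity $\sum_{k<j}\pi_kW_k=\tfrac12\bigl(W_j^2-\sum_{k<j}\pi_k^2\bigr)$ shows that a candidate drawn from $\pi$ restricted below $j$ has $\E[W_k]\le W_j/2$.
\item A Chebyshev-sum (negative-correlation) inequality shows that reweighting by a witness probability monotone in $E_j-E_k$ can only lower this.
\item The $\zeta/32$ approximation error and false witnesses from $k\ge j$ add only $O(\zeta/\eta)$. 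With stage success at least $3/4$, this gives $V_{t+1}\le\frac{43}{64}V_t+M\zeta+\frac{\zeta}{\eta(1-\zeta)}$.
\item Because $W_j\ge\eta$ on every excited level, $V_R<\eta\xi$ already yields fidelity $1-\xi$. This happens after $R=O(\log(1/(\eta\xi)))$ stages with $\zeta=\eta^2\xi/256$.
\end{itemize}
This argument needs the witness probability to be monotone, up to $\zeta/32$, across the whole transition window $0<E_a-E_b<\gamma$. That is why the paper uniformly approximates an $\operatorname{erf}$. A generic approximate step is not enough: a filter that accepts levels just below the anchor but rejects the rest of the window can leave $\E[W_{\rm cand}]$ arbitrarily close to $W_j$.
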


The $1/\eta$ factor in the query complexity for SD arises solely from repeatedly preparing and measuring the candidate state to find a witness. 
If instead we have access to the preparation oracles $U_\psi$ and $U_\psi^\dagger$, then, by exploiting amplitude amplification ideas, we can perform this search quadratically faster. 
Our second algorithm, CSD, is based on this stronger access model.
CSD employs the exact same anchor register, candidate register, SWAP echo, and energy comparator as SD. The only modification lies in how the witness is located. 
Instead of measuring the witness qubit and re-preparing the candidate state upon failure, CSD leaves the candidate register and witness qubit coherent. 
It applies the fixed-point amplitude amplification primitive of Ref.~\cite{Yoder2014} to the witness branch, amplifying the initial $\Omega(\eta)$ probability of seeing a witness to a constant using only $O(1/\sqrt{\eta})$ coherent applications of the energy comparator.
This completes one descent step, and, as with SD, CSD also consists of multiple such descent steps.

\begin{theorem}[GSP via Coherent Spectral Descent, Informal]
\label{thm:intro-CSD}
Assume the standard gap promise $\gamma$ and overlap promise $\eta$, and suppose we have coherent access to the state preparation oracle $U_\psi$ and its inverse $U_\psi^\dagger$. 
Then, the Coherent Spectral Descent algorithm prepares a state $\rho$ satisfying $\langle\phi_0|\rho|\phi_0\rangle \ge 1-\xi$, using only the uncontrolled oracles $U_H$ and $U_H^\dagger$, the state oracles $U_\psi$ and $U_\psi^\dagger$, controlled-SWAP gates, and single-qubit operations. 

By setting the elementary time step to $\tau = \Theta(1/\alpha_H)$, the algorithm requires $\widetilde{O}(\alpha_H/(\gamma\sqrt{\eta}))$ queries to $U_H$ and $U_H^\dagger$, and uses $ \widetilde{O}(1/\sqrt{\eta})$ queries to $U_\psi$ and $U_\psi^\dagger$. 
This corresponds to a total Hamiltonian-evolution time of $\widetilde{O}(1/(\gamma\sqrt{\eta}))$. 
Crucially, the algorithm does not require access to~$\mathrm{ctrl}\text{-}U_H$.
\end{theorem}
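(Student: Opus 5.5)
The plan is to reuse the SD analysis step by step and replace only the witness search inside each descent step. The anchor register, candidate register, SWAP echo, and energy comparator $C$ are kept exactly as in SD. Let $U_\psi$ prepare the candidate register. Then $W \coloneqq C\,(I\otimes U_\psi\otimes I)$ acts on the anchor register, the candidate register initialized to $|0^n\rangle$, and the witness qubit. With the anchor fixed in an eigencomponent $|\phi_a\rangle$ with $a\ge1$, the witness-$1$ probability is $p_a\ge c\,\eta$ for a constant $c$. This follows because the comparator, built by Laurent QSP on the SWAP echo with threshold of order $\gamma$, fires with probability close to $1$ on the pair $(\phi_a,\phi_0)$, and $E_a-E_0\ge\gamma$.

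The first key step is to apply the fixed-point amplitude amplification of \cite{Yoder2014} to $W$. The "good" subspace is marked by the witness qubit being $|1\rangle$. The required reflections are
\begin{itemize}
\item a phase on the witness qubit, which is a single-qubit operation, and
\item a generalized reflection about $|0^n\rangle$ on the candidate register together with $|0\rangle$ on the witness qubit, which needs no access to $H$.
\end{itemize}
Using $O(\log(1/\delta)/\sqrt{\eta})$ applications of $W$ and $W^\dagger$ then drives the witness probability to at least $1-\delta$, uniformly over every amplitude $\ge\sqrt{c\eta}$.

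Uniformity over the anchor is essential, because the anchor is in a superposition or mixture of eigencomponents. Amplification acts coherently across anchor components, so I would treat the anchor as a control on which $W$ is block-diagonal. This holds only approximately: the comparator, and hence $W$, couples the anchor component $a$ with the candidate component $b$. After the SWAP echo, $W$ is block-diagonal in the joint eigenbasis pair $(a,b)$ only up to a relabeling. The claim to establish is therefore that the amplified map restricted to each anchor sector behaves like SD's "measure until witness" step, including the mass moving toward lower energies. That is, the output state after swapping on witness $1$ is a post-selected descent state, up to error $\delta$.

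Since $W^\dagger$ contains $C^\dagger$, and $C^\dagger$ is itself a Laurent QSP sequence in the SWAP echo and its inverse, controlled evolution is still never needed. Each application of $C$ costs $\widetilde O(\alpha_H/\gamma)$ queries to $U_H$ and $U_H^\dagger$ when $\tau=\Theta(1/\alpha_H)$. The number of descent steps is the same polylogarithmic count as in SD, needed to reach infidelity $\xi$ with ground-state anchors being stable fixed points. Multiplying gives $\widetilde O(\alpha_H/(\gamma\sqrt\eta))$ queries to $U_H,U_H^\dagger$ and $\widetilde O(1/\sqrt\eta)$ queries to $U_\psi,U_\psi^\dagger$.

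The main obstacle is the ground-state anchor branch. When the anchor is already $|\phi_0\rangle$, the witness probability is near $0$. Fixed-point amplification then does not preserve that branch faithfully; it may rotate it or amplify the comparator's leakage. My first attempt would be to bound the comparator's false-witness amplitude on $(\phi_0,\phi_b)$ by $\epsilon_{\rm QSP}$, so that amplification with $L$ iterations produces a total error of at most $O(L\,\epsilon_{\rm QSP})$. Choosing the QSP degree as $O(\log(L/\xi)/\gamma\tau)$ then suppresses this error. Afterwards, on witness $0$, I would keep the anchor and uncompute the candidate register, verifying that the uncomputed $\phi_0$ anchor is returned with fidelity $1-O(\xi/\text{steps})$. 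Summing these errors over the descent steps by a union bound gives the final fidelity $1-\xi$.
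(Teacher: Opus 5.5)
Your construction is the paper's: the same comparator $C_\zeta$, and FPAA driven by a phase on the witness qubit together with a source reflection built from $|0\rangle\langle0|_q\otimes I_A\otimes|0^n\rangle\langle0^n|_B$. You also correctly note that $C_\zeta^\dagger$ needs only SWAP-echo inverses, so no $\mathrm{ctrl}\text{-}U_H$ is used, and you plan to reuse the SD descent analysis.

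\textbf{The gap.} It lies in the one step that is actually specific to CSD. You assert that $W$ is block-diagonal in the anchor energy only approximately (``up to a relabeling''), and you leave as a claim to establish that the amplified step reproduces SD's post-selected descent ``up to error $\delta$.'' Both hedges are wrong, and the argument you need is missing.

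\emph{Exact sector preservation.} The shifted SWAP echo equals $\e^{-\iu Z_q\otimes(D_{AB}-\gamma I/2)\tau}$ exactly. This is a function of $H_A$ and $H_B$, so it commutes with every anchor spectral projector $\Pi_j^A$. The QSP rotations on $q$, $U_\psi$ on $B$, and $P_{\rm init}$ also commute with $\Pi_j^A$. On $\Pi_j^A\otimes\Pi_k^B$ the comparator acts only on $q$, by a $(j,k)$-dependent unitary. Hence every FPAA iterate preserves each anchor eigenspace exactly and never moves the anchor within it.

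\emph{Invariance of the witness direction.} Within sector $j$, the iterates therefore preserve the two-dimensional span of the normalized no-witness and witness components of $V|0\rangle_q|\varphi_j\rangle_A|0^n\rangle_B$. FPAA changes the weight on the witness component but not its direction. So, conditioned on $q=1$, the candidate energy is distributed exactly as $q_j(k)=\pi_kr_{jk}/\lambda_j$.

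\emph{Why this is what you need.} This exact invariance, not a success probability $\ge1-\delta$, is what lets you import the SD bound $\E[W_{\rm cand}\mid j,\text{witness}]\le\frac{9}{16}W_j+\zeta/(\eta(1-\zeta))$. That bound relies on the witness branch being tilted toward lower energies by the monotone comparator, and on the false-accept fraction being at most $\zeta/\lambda_j$. FPAA's $\delta$ controls only the success probability, so ``post-selected descent up to error $\delta$'' is the wrong target.

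\emph{General anchor states.} You also need that the blocks $\Pi_j\rho_t\Pi_{j'}$ with $j\neq j'$ contribute nothing to the output populations. On the no-witness branch they stay off-diagonal. On the witness branch they are traced out with the discarded old anchor. This is what makes the mixed-state potential obey the same recursion as in SD.

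\textbf{The ground-state branch.} Once this is in place, your worry there disappears. A ground-state anchor is never acted on by the search, so the only failure mode is a false witness, and no uncomputation of $B$ is needed.

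Your leakage bound is a valid alternative route. A hybrid argument works: each source reflection differs in norm by at most $2\sqrt{\lambda_0}$ from one that fixes the no-witness state. This gives witness amplitude $O(L\sqrt{\zeta})$, i.e.\ probability $O(L^2\zeta)$. That matches the paper's $P_L(\lambda)\le2L^2\lambda$, which the paper obtains from the Chebyshev form of FPAA and Markov's polynomial inequality.

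Union-bounding this leakage over the $R$ stages also works, via a coupling with a chain in which the ground level is absorbing. However, you must keep $\zeta=O(\eta^2\xi)$, not merely $O(\xi/(RL^2))$. The reason is that the excited-anchor false-accept term $\zeta/(\eta(1-\zeta))$ enters the potential, and converting potential to fidelity divides by $\eta$. This affects only logarithmic factors.
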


We now formalize these tasks (\Cref{sec:ground-state-tasks}) before introducing the 
spectral descent mechanism (\Cref{sec:comparison-intuition}), the SWAP echo (\Cref{sec:swap-echo}) and energy comparator (\Cref{sec:qsp-comparator}).
The main result concerning the spectral descent mechanism is presented in \Cref{sec:fresh-descent} and its coherent analogue in \Cref{sec:coherent-descent}. In addition, we also extend our result to ground-state property and energy estimation, with the corresponding theorems, complexity guarantees, and additional measurement-access assumptions given in \Cref{sec:ground-state-estimation}. Finally, we provide additional discussions in \Cref{sec:conclusions}.

\section{Ground-State Task Definitions}
\label{sec:ground-state-tasks}

We now formalize the ground-state tasks considered in this paper. 
Let $H\in\mathbb C^{d\times d}$ be a Hamiltonian satisfying $\|H\|\le\alpha_H$, with spectral decomposition $H=\sum_{j=0}^{d-1}E_j|\phi_j\rangle\langle\phi_j|$, where the eigenvalues are ordered as $E_0<E_1\le\cdots\le E_{d-1}$ and the ground state $|\phi_0\rangle$ is unique. We are additionally given an input state $|\psi\rangle=\sum_jc_j|\phi_j\rangle$.

\begin{promise}[Overlap promise]
\label{prom:overlap}
For a known parameter $\eta\in(0,1]$, the pair $(H,|\psi\rangle)$ satisfies the \emph{overlap promise} if the ground-state weight of $|\psi\rangle$ satisfies $|\langle\phi_0|\psi\rangle|^2=|c_0|^2\ge\eta$. Thus, $\eta$ is a known lower bound on the actual ground-state weight, which may be larger.
\end{promise}

\begin{promise}[Gap promise]
\label{prom:gap}
For a known parameter $\gamma>0$, the Hamiltonian $H$ satisfies the \emph{gap promise} if its ground-state spectral gap satisfies $E_1-E_0\ge\gamma$.
\end{promise}

With that in place, we next define the ground-state tasks.

\begin{problem}[Ground-State Preparation (GSP)]
\label{prob:gsp}
Given a Hamiltonian $H$ and an input state $|\psi\rangle$ satisfying Promises~\ref{prom:overlap} and~\ref{prom:gap}, together with a target infidelity $\xi\in(0,1)$, the \emph{ground-state preparation} (GSP) problem is to output a quantum state $\rho$ satisfying $\langle\phi_0|\rho|\phi_0\rangle\ge1-\xi$.
\end{problem}

\begin{problem}[Ground-State Property Estimation (GSPE)]
\label{prob:gspe}
Given a Hamiltonian $H$ and an input state $|\psi\rangle$ satisfying Promises~\ref{prom:overlap} and~\ref{prom:gap}, an observable $O$ satisfying $\|O\|\le1$, a target additive error $\varepsilon>0$, and a failure probability $\delta\in(0,1)$ the \emph{ground-state property estimation} (GSPE) problem is to output an estimate $\widehat \mu$ of $\mu \coloneqq \langle\phi_0|O|\phi_0\rangle$ satisfying $|\widehat \mu-\mu|\le\varepsilon$ with probability at least $1-\delta$.
\end{problem}

\begin{problem}[Ground-State Energy Estimation (GSEE)]
\label{prob:gsee}
Given a Hamiltonian $H$ and an input state $|\psi\rangle$ satisfying Promise~\ref{prom:overlap}, together with a target additive error $\varepsilon>0$ and a failure probability $\delta\in(0,1)$, the \emph{ground-state energy estimation} (GSEE) problem is to output a classical estimate $\widehat E_0$ satisfying $|\widehat E_0-E_0|\le\varepsilon$ with probability at least $1-\delta$. Unlike GSP and GSPE as defined above, GSEE does not intrinsically require the gap promise.
\end{problem}

GSEE may be viewed as the special case of GSPE obtained by taking the observable $O$ to be $H/\alpha_H$. 
Nevertheless, we distinguish the two problems because GSEE can be posed under Promise~\ref{prom:overlap} alone, whereas the formulation of GSPE requires both Promises~\ref{prom:overlap} and~\ref{prom:gap}.

\section{The Spectral Descent Principle}
\label{sec:comparison-intuition}

Before presenting the SD and CSD algorithms, we first describe in more detail the common mechanism underlying both algorithms, which we refer to as the spectral descent principle.

Both algorithms operate on two system registers. 
The first is the \emph{anchor register}, which stores the current \emph{anchor state}, that is, the state the algorithm aims to iteratively improve and drive it towards the true ground state $|\phi_0\rangle$. 
The second is the \emph{candidate register}, which stores a \emph{candidate state} initialized to fresh copies of the input state $|\psi\rangle$.

To rigorously understand this principle, it is useful to begin with an idealized picture in which the anchor state happens to be an energy eigenstate $|\phi_a\rangle$ with eigenvalue $E_a$.
This is strictly for intuition; the algorithm does not assume access to energy eigenstates, nor does it project the anchor register into the energy basis.
The full algorithms and their analysis will later show how the same mechanism acts on a general anchor state.

Now prepare the candidate register in the input state $|\psi\rangle$. 
Writing $ |\psi\rangle=\sum_b c_b|\phi_b\rangle$ with $p_b=|c_b|^2$, the candidate state contains the energy component $|\phi_b\rangle$ with weight $p_b$.

For the moment, assume that we have access to an ideal energy comparison test, which we call the \emph{energy comparator}. 
Given the anchor state $|\phi_a\rangle$ and a candidate component $|\phi_b\rangle$, this ideal comparator flags a success, producing a \emph{witness}, if and only if the candidate energy is strictly lower than the anchor energy:
\begin{equation}
E_b < E_a. \label{eq:ideal-comparison-rule}
\end{equation}
Operationally, observing a witness guarantees that the candidate state contains a spectral component whose energy is closer to the ground state energy $E_0$ than the energy of the current anchor state.

We analyze the behavior of this rule in two distinct cases.
First, suppose the anchor state has already reached the ground state, so $a=0$. Because $E_0$ is the absolute minimum of the spectrum, there exists no candidate energy $E_b$ satisfying $E_b < E_0$. 
Consequently, the ideal comparator will never produce a witness:
\begin{equation}
\Pr[\text{witness} \mid a=0] = 0. \label{eq:ideal-ground-no-witness}
\end{equation}
This ensures stability: once the algorithm finds the ground state $|\phi_0\rangle$, it halts any further updates.

Second, suppose the anchor state is an excited state, meaning $a>0$. 
By the gap promise introduced earlier, we know that $E_a - E_0 \ge \gamma > 0$. 
Therefore, the strict inequality $E_0 < E_a$ is satisfied. 
This guarantees that the ground-state component $|\phi_0\rangle$ present in every fresh candidate state $|\psi\rangle$ will always trigger the witness condition in~\eqref{eq:ideal-comparison-rule}.

We can lower-bound the probability of observing this witness using the overlap promise.
The candidate state $|\psi\rangle$ possesses a ground-state weight $p_0 \ge \eta$. 
Because the ideal energy comparator deterministically produces a witness for this ground-state component whenever $a>0$, the overall probability of observing a witness is bounded by:
\begin{equation}
\Pr[\text{witness} \mid a>0] \ge p_0 \ge \eta. \label{eq:ideal-excited-witness}
\end{equation}
Other candidate components $|\phi_b\rangle$ with energies $E_b < E_a$ will also contribute to this probability, making~\eqref{eq:ideal-excited-witness} a strict lower bound. 
The important point here is that regardless of the complex structure of the excited spectrum, the ground-state component alone ensures a success probability of at least $\eta$ for any excited anchor state.

This inherent probabilistic separation shown by~\eqref{eq:ideal-ground-no-witness} and~\eqref{eq:ideal-excited-witness}, that is, zero witness probability for the ground state and at least $\eta$ for any excited state, dictates the update protocol for both SD and CSD. 
The procedure for a single update attempt is illustrated in \Cref{fig:spectral-descent-step}. 
We start with the current anchor state in register $A$, prepare a fresh candidate state $|\psi\rangle$ in register $B$, and apply the energy comparator alongside a single witness qubit $q$. 
If the measurement yields outcome $q=0$ (no witness), we discard the candidate register $B$, retain the state remaining in the anchor register $A$, and repeat the process with a newly prepared candidate state. 
If the measurement yields outcome $q=1$ (witness), we execute a SWAP operation to replace the current anchor state with the successful candidate state, and continue the descent from this newly updated anchor.

\begin{figure}[H]
\centering
\begin{quantikz}[row sep=0.55cm,column sep=0.42cm]
\lstick{$q:\ |0\rangle$}
& \gate[3]{\substack{\mathrm{Energy}\\\mathrm{Comparator}}}
& \meter{}
& \cw
& \cwbend{1}
\\
\lstick{$A:\ \rho$}
& \qw
& \qw
& \qw
& \swap{1}
\\
\lstick{$B:\ |\psi\rangle$}
& \qw
& \qw
& \qw
& \targX{}
\end{quantikz}
\caption{
One update attempt in the spectral descent principle.
}
\label{fig:spectral-descent-step}
\end{figure}

An unsuccessful comparison ($q=0$) therefore does not mean that the ground-state preparation has failed. 
It only means that the particular candidate state used in that comparison did not produce a useful downward update. 
The anchor state is neither discarded nor re-prepared. 
It remains stored in the anchor register and is compared with another fresh candidate state.
Furthermore, in the case when $q=1$, a successful candidate state also need not be the ground state. 
If the candidate state is now an intermediate eigenstate $|\phi_b\rangle$ such that
\begin{equation}
E_0 \le E_b < E_a,
\end{equation}
the condition~\eqref{eq:ideal-comparison-rule} is still met. 
Replacing the anchor state $|\phi_a\rangle$ with $|\phi_b\rangle$ successfully lowers the energy of the anchor register, even if it has not yet reached $E_0$. 
Subsequent successful comparisons will continue to drive the anchor energy downward.

The overlap promise ensures that an excited anchor state cannot become trapped at a local minimum. 
If the anchor state is fixed at an excited eigenstate $|\phi_a\rangle$, each independent attempt with a fresh candidate state $|\psi\rangle$ yields a downward witness with probability at least $\eta$, as established in~\eqref{eq:ideal-excited-witness}. 
Therefore, if we execute $M$ independent comparison trials, the probability of failing to obtain any witness decays exponentially:
\begin{equation}
\Pr[\text{no witness in } M \text{ trials}] \le (1-\eta)^M \le \mathrm{e}^{-\eta M}. \label{eq:ideal-M-failure}
\end{equation}
Consequently, executing $M = \mathcal{O}(1/\eta)$ repetitions guarantees a successful downward update with constant probability.

Notice that this rigorous bound relies exclusively on the ground-state component of $|\psi\rangle$. 
We do not require any assumptions about the gaps between adjacent excited states. 
The excited spectrum can be arbitrarily closely spaced or highly degenerate. 
The only spectral separation required is the initial gap promise $E_1 - E_0 \ge \gamma$, which guarantees that the target ground state $|\phi_0\rangle$ is safely separated from any excited anchor state $|\phi_a\rangle$.

Up to this point, our analysis has relied on an ideal energy comparator that perfectly evaluates the strict inequality $E_b < E_a$. 
Also an algorithm must execute this relative comparison without measuring the registers in the energy basis and without utilizing $\mathrm{ctrl}\text{-}U_H$. 
To construct this practical comparator, we begin by introducing a new primitive in the next section called the \emph{SWAP echo}.

\section{The SWAP Echo}
\label{sec:swap-echo}

Let $A$ denote the anchor register and $B$ the candidate register. 
Define the two-register difference Hamiltonian as follows:
\begin{equation}
D_{AB}\coloneqq H_A \otimes I_B - I_A \otimes H_B. \label{eq:DAB}
\end{equation}
This Hamiltonian is useful because, on a joint energy eigenstate $|\phi_a\rangle_A|\phi_b\rangle_B$, it acts as
\begin{equation}
D_{AB} |\phi_a\rangle_A |\phi_b\rangle_B = (E_a - E_b) |\phi_a\rangle_A |\phi_b\rangle_B.
\end{equation}
Thus the eigenvalue of $D_{AB}$ is exactly the quantity needed for the energy comparison.

At first sight, implementing evolution under $D_{AB}$ for some time $\tau$ might appear nontrivial.
In fact, it follows directly from the fact that $H_A$ and $H_B$ act on different registers, they commute, and therefore
\begin{equation}
\e^{-\iu D_{AB}\tau} = \e^{-\iu H_A\tau}\otimes\e^{+\iu H_B\tau}. \label{eq:difference-evolution-factorization}
\end{equation}

This gives access to the phase $\e^{-\iu(E_a-E_b)\tau}$, but by itself it does not place that phase on some signal qubit that we can eventually process. 
For the energy comparator, we want a qubit whose two computational-basis states see opposite signs of the energy difference. 
A controlled-SWAP operation provides exactly the required mechanism.

Introduce a signal qubit $q$, and let
\begin{equation}
\cSWAP \coloneqq |0\rangle\!\langle0|_q\otimes I_{AB} + |1\rangle\!\langle1|_q\otimes\operatorname{SWAP}_{AB}, \label{eq:controlled-swap}
\end{equation}
denote the controlled-SWAP operation.
Now consider the following sequence. 
First apply $\cSWAP$ on all registers, then evolve $A$ forward and $B$ backward, and finally apply the $\cSWAP$ again on all registers (see \Cref{fig:echo} for its circuit, ignoring the first phase rotation gate on $q$). 
Formally, we denote this sequence as
\begin{equation}
\SWAPECHO(\tau) \coloneqq \cSWAP \bigl(I_q\otimes \e^{-\iu D_{AB}\tau} \bigr)\cSWAP . \label{eq:W}
\end{equation}

One can easily understand the role of the two $\cSWAP$s by following the two branches of the signal qubit. 
If $q=0$, neither $\cSWAP$ acts, so the two system registers undergo $\e^{-\iu D_{AB}\tau}$. 
If $q=1$, the first $\cSWAP$ exchanges $A$ and $B$ before the evolution. 
The forward evolution is therefore applied to what was originally the candidate register, while the backward evolution is applied to what was originally the anchor register. 
The second $\cSWAP$ restores the registers to their original locations. 
The net effect is to reverse the sign of the difference Hamiltonian, giving $\e^{+\iu D_{AB}\tau}$.
The resulting operation is therefore an exact qubit-controlled sign choice for the difference Hamiltonian.

\begin{lemma}[Energy-difference signal]
\label{lem:exact-signal}
The SWAP echo in~\eqref{eq:W} satisfies
\begin{equation}
\SWAPECHO(\tau) = |0\rangle\langle0|_q\otimes\e^{-\iu D_{AB}\tau} + |1\rangle\langle1|_q\otimes\e^{+\iu D_{AB}\tau} = \e^{-\iu Z_q\otimes D_{AB}\tau}, \label{eq:W-exact}
\end{equation}
where $Z_q$ is the Pauli-Z operator acting on the signal qubit $q$.
Furthermore, for any classically known scalar shift $s \in \mathbb{R}$, applying an initial single-qubit phase rotation $e^{-i s \tau Z_q}$ yields the ``shifted SWAP echo":
\begin{equation}
\SWAPECHO_s(\tau) \coloneqq \e^{-\iu s\tau Z_q}\SWAPECHO(\tau) = \e^{-\iu Z_q\otimes(D_{AB}+sI)\tau}. \label{eq:Ws}
\end{equation}
The quantum circuit for $\SWAPECHO_s(\tau)$ is provided in \Cref{fig:echo}.
\end{lemma}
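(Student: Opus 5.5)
The plan is a direct branch-by-branch computation on the signal qubit. First I would write $\cSWAP=\sum_{x\in\{0,1\}}|x\rangle\langle x|_q\otimes S^x$, where $S=\operatorname{SWAP}_{AB}$ and $S^0=I_{AB}$. Because $I_q\otimes\e^{-\iu D_{AB}\tau}$ is block diagonal in the computational basis of $q$, the product in~\eqref{eq:W} is also block diagonal:
\begin{equation}
\SWAPECHO(\tau)=\sum_{x\in\{0,1\}}|x\rangle\langle x|_q\otimes S^x\,\e^{-\iu D_{AB}\tau}S^x .
\end{equation}
The $x=0$ block is $\e^{-\iu D_{AB}\tau}$, as claimed.

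For the $x=1$ block, the key identity is the conjugation rule $S(X_A\otimes Y_B)S=Y_A\otimes X_B$. I would verify it on product basis vectors; both registers carry the same Hilbert space $\mathbb C^d$, so the swap is well defined. Applying the rule gives $S D_{AB} S = I_A\otimes H_B - H_A\otimes I_B=-D_{AB}$. Since $S$ is unitary and $S^{-1}=S$, conjugation commutes with the exponential (via the power series, or equivalently via functional calculus). Hence $S\e^{-\iu D_{AB}\tau}S=\e^{-\iu (SD_{AB}S)\tau}=\e^{+\iu D_{AB}\tau}$, which is the first equality in~\eqref{eq:W-exact}.

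For the second equality, I would note that $Z_q\otimes D_{AB}=|0\rangle\langle0|\otimes D_{AB}-|1\rangle\langle1|\otimes D_{AB}$ is block diagonal with blocks $\pm D_{AB}$. The exponential of a block-diagonal operator is the block-diagonal operator of the blockwise exponentials, which gives $\e^{-\iu Z_q\otimes D_{AB}\tau}$ in the required form.

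For the shifted version~\eqref{eq:Ws}, I would write $\e^{-\iu s\tau Z_q}=\e^{-\iu s\tau Z_q\otimes I_{AB}}$. The operators $Z_q\otimes I$ and $Z_q\otimes D_{AB}$ commute, because both are diagonal in $q$ and $I$ commutes with $D_{AB}$. The exponentials therefore combine into $\e^{-\iu Z_q\otimes(D_{AB}+sI)\tau}$; equivalently, one can check blockwise that $\e^{\mp\iu s\tau}\e^{\mp\iu D_{AB}\tau}=\e^{\mp\iu(D_{AB}+sI)\tau}$.

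There is no real obstacle in this argument. The only step that needs care is the swap-conjugation identity $SD_{AB}S=-D_{AB}$ and the fact that conjugation passes through the exponential. I would state both explicitly rather than rely on the informal description that the evolution is ``applied to what was originally the candidate register.''
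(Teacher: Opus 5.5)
Your proposal is correct and follows essentially the same route as the paper: expand the two controlled-SWAPs into branches on $q$, show that the $q=1$ branch is $\operatorname{SWAP}_{AB}\e^{-\iu D_{AB}\tau}\operatorname{SWAP}_{AB}=\e^{+\iu D_{AB}\tau}$, read off $\e^{-\iu Z_q\otimes D_{AB}\tau}$ from the spectral decomposition of $Z_q$, and combine the commuting exponentials for the shift. The only cosmetic difference is that you conjugate the generator ($S D_{AB} S=-D_{AB}$) and pass the conjugation through the exponential, whereas the paper conjugates the already-factorized product $\e^{-\iu H_A\tau}\otimes\e^{+\iu H_B\tau}$ directly.
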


\begin{proof}
See~\Cref{sec:proof-of-exact-signal-lemma}.    
\end{proof}

\begin{figure}[t]
\centering
\begin{quantikz}[row sep=0.42cm,column sep=0.42cm]
\lstick{$q$}
& \gate{\e^{-\iu s\tau Z}}
& \ctrl{1}
& \qw
& \ctrl{1}
& \qw \\
\lstick{$A$}
& \qw
& \swap{1}
& \gate{\e^{-\iu H\tau}}
& \swap{1}
& \qw \\
\lstick{$B$}
& \qw
& \targX{}
& \gate{\e^{+\iu H\tau}}
& \targX{}
& \qw
\end{quantikz}
\caption{
Circuit for the shifted SWAP echo:
$\SWAPECHO_s(\tau)=\exp[-\iu Z_q\otimes(D_{AB}+sI)\tau]$.
}
\label{fig:echo}
\end{figure}

Suppose, following the intuition of the previous section, that the anchor state has energy $E_a$ and the candidate state has energy $E_b$. 
Since $D_{AB}$ has eigenvalue $E_a-E_b$ on this pair, the shifted SWAP echo $\SWAPECHO_s(\tau)$ acts on the signal qubit $q$ as
\begin{equation}
\SWAPECHO_s(\tau) |q\rangle|\phi_a\rangle|\phi_b\rangle = \e^{-\iu(E_a-E_b+s)\tau z_q} |q\rangle|\phi_a\rangle|\phi_b\rangle, \label{eq:signal-on-eigenpair}
\end{equation}
where $z_q=+1$ for $q=0$ and $z_q=-1$ for $q=1$.

Thus the two energies $E_a$ and $E_b$ never need to be measured individually. 
Their difference appears directly as opposite phases on the two signal-qubit branches,
\begin{equation}
\e^{-\iu(E_a-E_b+s)\tau} \qquad\text{and}\qquad \e^{+\iu(E_a-E_b+s)\tau}.
\end{equation}
The shift $s$ allows us to choose where the comparison threshold lies. In the next section, we choose $s$ so that “candidate not below the anchor” and “candidate lower by at least $\gamma$” lie on opposite sides of the phase threshold.
Finally, it is important to note that $\SWAPECHO_s(\tau)$ does not use any ctrl-$U_H$.

\section{Energy Comparator}
\label{sec:qsp-comparator}

The SWAP echo from the previous section gives us the quantity we need for energy comparison, but not yet in the form required by our SD and CSD algorithms. 
For a given pair of energy eigenstates $|\phi_a\rangle_A|\phi_b\rangle_B$, we must process this phase so that the signal qubit acts as a binary indicator, explicitly flagging when the candidate state has a sufficiently lower energy than the anchor state.

More precisely, if the candidate state energy is higher than or equal to the anchor state energy ($E_b\ge E_a$), the energy comparator must almost never produce a witness. 
On the other hand, if the candidate state energy is lower than the anchor state energy by at least the promised gap $\gamma$, that is, $E_a-E_b\ge\gamma$, then it should produce a witness with high probability. 
We do not need to decide what happens sharply for the intermediate case when $0<E_a-E_b<\gamma$. 
In particular, the energy comparator does not need to resolve arbitrarily small spacings between excited energies.

We therefore aim to construct a \emph{soft energy comparator}, rather than the ideal energy comparator discussed previously in \Cref{sec:comparison-intuition}.
For a chosen error parameter $\zeta>0$, let $r_{ab}$ denote the witness probability of this comparator, namely, the probability of obtaining the outcome $q=1$.
We require that $r_{ab}$ satisfy
\begin{align}
E_a-E_b\le0 &\quad\Longrightarrow\quad r_{ab}\le\zeta, \label{eq:soft-comparator-left} \\
E_a-E_b\ge\gamma &\quad\Longrightarrow\quad r_{ab}\ge1-\zeta. \label{eq:soft-comparator-right}
\end{align}
The transition between these two regimes is allowed to be smooth.
We construct this comparator using $\SWAPECHO_s(\tau)$ defined in~\eqref{eq:Ws} in two steps. 
First, we choose the shift $s$ and evolution time $\tau$ so that the two regimes in~\eqref{eq:soft-comparator-left}--\eqref{eq:soft-comparator-right} map to separated regions of a nonaliased phase interval. 
We then approximate a smooth transition between these regions and realize the resulting witness probability using Laurent quantum signal processing (Laurent QSP) applied directly to the $\SWAPECHO_s(\tau)$.

\subsection{\texorpdfstring{Choosing the Shift $s$ and Evolution Time $\tau$}{Choosing the Shift s and Evolution Time tau}}

We begin by placing the two comparison regimes in~\eqref{eq:soft-comparator-left}--\eqref{eq:soft-comparator-right} symmetrically around zero. 
The shifted echo $\SWAPECHO_s(\tau)$ allows us to add any known scalar $s$ to the energy difference. 
We choose $s=-\gamma/2$, so that the midpoint between the two relevant thresholds, $E_a-E_b=0$ and $E_a-E_b=\gamma$, is mapped to zero.

With this specific shift, we define the gap-shifted SWAP echo as
\begin{equation}
\SWAPECHO_\gamma(\tau) \coloneqq \e^{+\iu(\gamma\tau/2)Z_q}\SWAPECHO(\tau) = \e^{-\iu Z_q\otimes(D_{AB}-\gamma I/2)\tau}. \label{eq:Wgamma}
\end{equation}
When applied to the energy eigenstate pair $|\phi_a\rangle_A|\phi_b\rangle_B$, this operator generates a shifted relative phase given by
\begin{equation}
\theta_{ab} \coloneqq \tau\left( E_a-E_b-\frac{\gamma}{2} \right). \label{eq:theta-ab}
\end{equation}
This shift correctly places the two regimes on strictly opposite sides of zero:
\begin{align}
E_a-E_b\le0 &\Longleftrightarrow \theta_{ab}\le-\frac{\gamma\tau}{2}, \label{eq:soft-comparator-left-theta} \\
E_a-E_b\ge\gamma &\Longleftrightarrow \theta_{ab}\ge+\frac{\gamma\tau}{2}. \label{eq:soft-comparator-right-theta}
\end{align}

Next, we choose the evolution time $\tau$. 
Because the signal is a phase, energy differences that wrap around the unit circle could become ambiguous. 
We therefore restrict all occupied phases to a single interval on which the ordering of energy differences is preserved.
Note that $|E_a-E_b|\le 2 \|H\| \leq 2 \alpha_H$
for every energy pair that can occur in the two system registers. 
After the shift by $-\gamma/2$, this gives $|E_a-E_b-\sfrac{\gamma}{2}| \le 2 \alpha_H +\sfrac{\gamma}{2}$.
With that, we choose
\begin{equation}
\tau \coloneqq \frac{\pi}{4\left(2 \alpha_H+\frac{\gamma}{2}\right)}. \label{eq:tau}
\end{equation}
Then every occupied energy pair satisfies $|\theta_{ab}| \le \tau (2 \alpha_H+\sfrac{\gamma}{2}) = \frac{\pi}{4}$.
Hence all relevant phases lie in the interval $[-\pi/4,\pi/4]$. 
This prevents phase wrapping and, equally importantly, gives us an interval on which a simple real function of the phase is monotone.

We therefore define
\begin{equation}
y_{ab} \coloneqq \sin(2\theta_{ab}), \qquad \Delta \coloneqq \sin(\tau\gamma). \label{eq:y-and-Delta}
\end{equation}
Note that the function $\sin(2\theta)$ is monotone increasing on $[-\pi/4,\pi/4]$. 
Therefore the ordering established above in~\eqref{eq:soft-comparator-left-theta}--\eqref{eq:soft-comparator-right-theta} is preserved:
\begin{align}
E_a-E_b\le0 &\Longleftrightarrow \theta_{ab}\le-\frac{\gamma\tau}{2} \Longleftrightarrow y_{ab}\le-\Delta, \label{eq:y-left} \\
E_a-E_b\ge\gamma &\Longleftrightarrow \theta_{ab}\ge+\frac{\gamma\tau}{2} \Longleftrightarrow y_{ab}\ge+\Delta. \label{eq:y-right}
\end{align}
The energy-comparison problem has thus been reduced to distinguishing two well-separated regions of a real variable $y_{ab}$: values satisfying $y_{ab} \le -\Delta$ must yield a low witness probability, while values satisfying $y_{ab} \ge +\Delta$ must yield a high witness probability.
The parameter $\Delta$ determines how difficult this approximation will be. 
Using the fact that $\sin u\ge 2u/\pi$ for $u\in[0,\pi/2]$, we have
\begin{equation}
\label{eq:Delta-lower} \Delta \ge\frac{2\tau\gamma}{\pi}=\frac{\gamma}{2\left(2 \alpha_H+\frac{\gamma}{2}\right)}.
\end{equation}
The next subsection constructs a bounded soft step function across this separation and then approximates it by a polynomial suitable for QSP.

\subsection{Soft Energy Comparison Probability}
\label{sec:soft-comparator}

Fix a target energy comparator error $\zeta\in(0,1/4)$. 
We now choose the probability with which the signal qubit should indicate a witness as a function of the real signal variable $y$.

A sharp step at $y=0$ would reproduce the ideal comparison rule, but such a discontinuous function cannot be implemented exactly by a finite-degree polynomial. 
It is also unnecessary for spectral descent. 
We only require a small witness probability when $y\le-\Delta$ and a large witness probability when $y\ge+\Delta$; inside the transition region $(-\Delta,\Delta)$, the probability may vary smoothly.

We additionally require this variation to be monotone. 
For a fixed anchor state, a candidate state lying farther below it in energy corresponds to a larger value of $y$. 
Its witness probability should therefore not be smaller. 
This monotonicity will later be used to show that conditioning on a witness biases the candidate state toward lower energies.

For this purpose, we employ the following smooth function of $y$:
\begin{equation}
r_\star(y) \coloneqq \frac{1+\operatorname{erf}(\kappa y)}{2}, \qquad \kappa \coloneqq \frac{1}{\Delta} \sqrt{\log\frac{8}{\zeta}}. \label{eq:rstar}
\end{equation}
Note that the function $r_\star$ is monotone increasing, with values between $0$ and $1$. 
The parameter $\kappa$ controls how rapidly it changes around $y=0$. 
Because $\kappa$ scales as $1/\Delta$, the transition becomes sharper when the two comparison regions are closer together.
The particular constant in the definition of $\kappa$ is chosen so that the two regions already have a comfortable error margin before any polynomial approximation is introduced. 

Using the standard bound $\operatorname{erf}(u)\ge 1-\e^{-u^2}$ for $u\ge0$ in~\eqref{eq:rstar}, we obtain
\begin{align}
E_a-E_b\le0 &\Longleftrightarrow \theta_{ab}\le-\frac{\gamma\tau}{2} \Longleftrightarrow y_{ab}\le-\Delta \Longrightarrow r_\star(y) \le \frac{\zeta}{16}, \label{eq:rstar-left} \\
E_a-E_b\ge\gamma &\Longleftrightarrow \theta_{ab}\ge+\frac{\gamma\tau}{2} \Longleftrightarrow y_{ab}\ge+\Delta \Longrightarrow r_\star(y) \ge 1-\frac{\zeta}{16}. \label{eq:rstar-right}
\end{align}
Thus $r_\star$ already behaves as the desired soft energy comparator, with some room left for polynomial approximation error.

The remaining issue is that $r_\star$ is not itself a polynomial, whereas its Laurent-QSP construction, which we will present in the next section, requires a bounded polynomial of $y$. 
We therefore state the approximation result in the following lemma that we will use later.

\begin{lemma}[Polynomial approximation of $r_\star$]
\label{lem:soft-comparator-polynomial}
For every $\zeta\in(0,1/4)$, there exists a real polynomial $\widetilde r$ of degree
$O(\sfrac{1}{\Delta}\log\sfrac{1}{\zeta})$
such that $0\le\widetilde r(y)\le1$ for every $y\in[-1,1]$ and
\begin{equation}
\sup_{y\in[-1,1]} \left| \widetilde r(y)-r_\star(y) \right| \le \frac{\zeta}{32}. \label{eq:rtilde-monotone-close}
\end{equation}
\end{lemma}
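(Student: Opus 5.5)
We approximate the error function by a polynomial, then rescale it so that it maps $[-1,1]$ into $[0,1]$. The starting point is the standard polynomial approximation of $\operatorname{erf}(\kappa y)$ on $[-1,1]$ (Low--Chuang; see also Gilyén et al.). For every $\epsilon\in(0,1/2)$ there is an odd real polynomial $p_{\kappa,\epsilon}$ of degree
\[
n=O\bigl(\sqrt{(\kappa^2+\log(1/\epsilon))\log(1/\epsilon)}\bigr)
\]
with $\sup_{y\in[-1,1]}|p_{\kappa,\epsilon}(y)-\operatorname{erf}(\kappa y)|\le\epsilon$. The proof of that result runs as follows. One writes $\operatorname{erf}(\kappa y)=\frac{2}{\sqrt\pi}\int_0^{\kappa y}\e^{-t^2}\,dt$ and approximates the Gaussian $\e^{-\kappa^2y^2}$ by its truncated Jacobi--Anger/Chebyshev expansion; the Chebyshev coefficients decay like modified Bessel functions $I_k(\kappa^2/2)\e^{-\kappa^2/2}$. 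Integrating term by term then gives the error-function approximant. If one prefers to avoid citing it, this truncation argument can be reproduced directly.

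Next we plug in the parameters. With $\kappa=\Delta^{-1}\sqrt{\log(8/\zeta)}$ and $\epsilon=\zeta/128$, we have $\kappa^2=\Delta^{-2}\log(8/\zeta)$, which dominates $\log(1/\epsilon)$ because $\Delta\le1$. The degree is therefore
\[
n=O\bigl(\kappa\sqrt{\log(1/\zeta)}\bigr)=O\bigl(\Delta^{-1}\log(1/\zeta)\bigr),
\]
which is the claimed bound.

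It remains to enforce $0\le\widetilde r\le1$, since $(1+p_{\kappa,\epsilon})/2$ may overshoot the interval by up to $\epsilon/2$. We define
\[
\widetilde r(y)\coloneqq\frac{1}{1+\epsilon}\cdot\frac{1+\epsilon+p_{\kappa,\epsilon}(y)}{2}.
\]
Because $|p_{\kappa,\epsilon}|\le|\operatorname{erf}|+\epsilon\le1+\epsilon$ on $[-1,1]$, the numerator $1+\epsilon+p_{\kappa,\epsilon}$ lies in $[0,2(1+\epsilon)]$. Hence $\widetilde r\in[0,1]$ on $[-1,1]$. For the approximation error, set $q=(1+\epsilon+p_{\kappa,\epsilon})/2$, so that $\widetilde r=q/(1+\epsilon)$ and $|q-r_\star|\le\epsilon$. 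Then
\[
|\widetilde r-r_\star|\le\frac{|q-r_\star|+\epsilon\, r_\star}{1+\epsilon}\le 2\epsilon=\frac{\zeta}{64}\le\frac{\zeta}{32}.
\]
The degree is unchanged by this rescaling.

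The main obstacle is obtaining the $\Delta^{-1}\log(1/\zeta)$ degree scaling rather than a cruder bound. The naive Taylor truncation of $\operatorname{erf}$ would require degree of order $\kappa^2$, that is, $\Delta^{-2}$. We therefore rely on the Chebyshev/Bessel coefficient tail bound for $\e^{-\kappa^2y^2}$: the coefficients decay like $\e^{-k^2/\kappa^2}$, so $k=O(\kappa\sqrt{\log(\kappa/\epsilon)})$ terms suffice. A possible extra factor of $\log\kappa$ is avoided because the coefficient tails become Gaussian once $k\gtrsim\kappa^2$ and the sum of the tail is controlled by the $\epsilon$-dependence. If only the weaker $\sqrt{\log(\kappa/\epsilon)}$ version can be established, the degree would be $O(\Delta^{-1}\sqrt{\log(1/\zeta)\log(1/(\Delta\zeta))})$. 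That bound is still $\widetilde O(\Delta^{-1}\log(1/\zeta))$, which suffices downstream.
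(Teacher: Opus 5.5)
Your proposal is correct and follows the paper's proof. You invoke the Low--Chuang odd-polynomial approximation of $\operatorname{erf}(\kappa y)$ and bound the degree via $\kappa^2+\log(1/\epsilon)=O(\Delta^{-2}\log(1/\zeta))$. Your normalization $\frac{1}{1+\epsilon}\cdot\frac{1+\epsilon+p}{2}=\frac12\bigl(1+p/(1+\epsilon)\bigr)$ is exactly the paper's $\widetilde r=(1+\bar p_n)/2$; the paper uses $\epsilon=\zeta/32$ where you use $\zeta/128$.

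Your closing fallback paragraph is unnecessary. The cited degree bound already gives $O(\Delta^{-1}\log(1/\zeta))$ with no extra $\log\kappa$ factor.
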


\begin{proof}
    See~\Cref{app:soft-comparator-polynomial}.
\end{proof}

Combining \Cref{lem:soft-comparator-polynomial} with
\eqref{eq:rstar-left}--\eqref{eq:rstar-right}, we obtain
\begin{align}
E_a-E_b\le0 &\Longleftrightarrow \theta_{ab}\le-\frac{\gamma\tau}{2} \Longleftrightarrow y_{ab}\le-\Delta \Longrightarrow r_\star(y_{ab}) \le \frac{\zeta}{16}\Longrightarrow \widetilde r(y_{ab})\le\zeta, \label{eq:rtilde-left} \\
E_a-E_b\ge\gamma &\Longleftrightarrow \theta_{ab}\ge+\frac{\gamma\tau}{2} \Longleftrightarrow y_{ab}\ge+\Delta \Longrightarrow r_\star(y_{ab}) \ge 1-\frac{\zeta}{16}\Longrightarrow \widetilde r(y_{ab})\ge1-\zeta. \label{eq:rtilde-right}
\end{align}
Thus $\widetilde r(y)$ has exactly the behavior required of the witness probability.
Finally, using~\eqref{eq:Delta-lower}, the degree bound in~\Cref{lem:soft-comparator-polynomial} becomes $O((\alpha_H/\gamma+1)\log(1/\zeta))$.

The polynomial $\widetilde r$ specifies the witness probability we want, but we have not yet shown that a quantum circuit constructed using the SWAP echo can realize this probability. 
The next step is to connect $\widetilde r(y)$ to Laurent QSP.

\subsection{Realizing the Soft Energy Comparator with Laurent QSP}
\label{sec:laurent-qsp-synthesis}

Recall that for an energy pair $|\phi_a\rangle_A|\phi_b\rangle_B$, the shifted SWAP echo $\SWAPECHO_\gamma(\tau)$ supplies the phase $\theta_{ab}$ given in~\eqref{eq:theta-ab}.
We aim to construct a circuit that produces a witness outcome with probability $\widetilde{r}(y_{ab})$, where $y_{ab} = \sin(2\theta_{ab})$.

To connect this real variable to the Laurent-QSP signal, let $t\coloneqq\e^{\iu\theta}$ and $z\coloneqq t^2=\e^{2\iu\theta}.$
Then $\sin(2\theta) = \sfrac{(z-z^{-1})}{2\iu}.$
Consequently, the target probability polynomial $\widetilde{r}(y)$ can be mapped to a Laurent polynomial in $z$, which we define as
\begin{equation}
\mathcal R(z) \coloneqq \widetilde r\!\left( \frac{z-z^{-1}}{2\iu} \right). \label{eq:R-Laurent}
\end{equation}
Because $\widetilde{r}(y)$ is a polynomial of degree $n$, $\mathcal{R}(z)$ is a Laurent polynomial of degree at most $n$ in $z$. 
Moreover, from the range of $\widetilde r$ given in the statement of \Cref{lem:soft-comparator-polynomial}, we get $0\le\mathcal R(z)\le1$ for $|z|=1$.

We next relate this Laurent variable $z$ to the gap-shifted SWAP echo. 
On the fixed energy pair $|\phi_a\rangle_A|\phi_b\rangle_B$,~\eqref{eq:Wgamma} and~\eqref{eq:theta-ab} give
\begin{equation}
\SWAPECHO_\gamma(\tau)\big|_{a,b} = \e^{-\iu\theta_{ab}Z_q} = \begin{pmatrix} \e^{-\iu\theta_{ab}} & 0\\ 0 & \e^{+\iu\theta_{ab}} \end{pmatrix}.
\end{equation}
Defining $t_{ab}\coloneqq\e^{\iu\theta_{ab}}$, we can rewrite this as
\begin{equation}
\SWAPECHO_\gamma(\tau)\big|_{a,b} = \begin{pmatrix} t_{ab}^{-1} & 0\\ 0 & t_{ab} \end{pmatrix}. \label{eq:physical-Vt}
\end{equation}
This is precisely the signal form used by Laurent QSP: repeated applications of the SWAP echo generate powers of $t_{ab}$ and $t_{ab}^{-1}$, while interleaved single-qubit rotations acting on $q$ combine these powers into the desired Laurent polynomial.
Then since $z=t^2$, we have
$z_{ab} = t_{ab}^2 = \e^{2\iu\theta_{ab}},$
and hence
\begin{equation}
\mathcal R(z_{ab}) = \widetilde r\!\left( \frac{t_{ab}^2-t_{ab}^{-2}}{2\iu} \right) = \widetilde r\!\left( \sin(2\theta_{ab}) \right) = \widetilde r(y_{ab}). \label{eq:R-physical}
\end{equation}
Thus the probability polynomial constructed in the previous subsection is a Laurent polynomial of the phase variable supplied directly by the SWAP echo.

We formally state this required synthesis in the following lemma using the Laurent-QSP product-decomposition
theorem~\cite[Theorem~2]{Haah2019}.

\begin{lemma}[Laurent-QSP realization of $\widetilde{r}$]
\label{lem:laurent-qsp-comparator}
Let $\widetilde{r}(y)$ be the degree-$n$ polynomial defined in \Cref{lem:soft-comparator-polynomial}. 
There exists a unitary circuit $C_\zeta$, constructed using at most $2n$ queries to the shifted SWAP echo $\SWAPECHO_\gamma(\tau)$ interleaved with single-qubit gates on $q$, such that its action on any occupied energy pair is
\begin{equation}
C_\zeta |0\rangle_q |\phi_a\rangle_A |\phi_b\rangle_B = A(z_{ab}) |0\rangle_q |\phi_a\rangle_A |\phi_b\rangle_B + B(z_{ab}) |1\rangle_q |\phi_a\rangle_A |\phi_b\rangle_B, \label{eq:comparator-action}
\end{equation}
where $z_{ab} = e^{2i\theta_{ab}}$, and the amplitude coefficient explicitly satisfies
\begin{equation}
|B(z_{ab})|^2 = \mathcal{R}(z_{ab}) = \widetilde{r}(y_{ab}). \label{eq:qsp-witness-probability}
\end{equation}
Crucially, this circuit only queries $\SWAPECHO_\gamma(\tau)$ directly and does not require controlled access to it.
\end{lemma}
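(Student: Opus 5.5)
The plan is to apply the Laurent-QSP product-decomposition theorem of Haah~\cite[Theorem~2]{Haah2019} in the variable $t=\e^{\iu\theta}$, after building a complementary polynomial by the Fejér--Riesz theorem. The work is first to produce a pair of Laurent polynomials $(\mathcal A,\mathcal B)$ in $t$ with $|\mathcal A|^2+|\mathcal B|^2=1$ on $|t|=1$ and $|\mathcal B(t)|^2=\mathcal R(t^2)$. One then invokes the decomposition to write the corresponding $\mathrm{SU}(2)$-valued Laurent polynomial as a product of signal operators $\mathrm{diag}(t^{-1},t)$ interleaved with fixed single-qubit unitaries.

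\textbf{Step 1 (factorization).} Since $0\le\mathcal R(z)\le 1$ on $|z|=1$, both $\mathcal R$ and $1-\mathcal R$ are nonnegative Laurent polynomials of degree at most $n$ in $z$. By Fejér--Riesz there are ordinary polynomials $b(z),a(z)$ of degree at most $n$ with $|b(z)|^2=\mathcal R(z)$ and $|a(z)|^2=1-\mathcal R(z)$ on the unit circle. Set $\mathcal B(t)\coloneqq t^{-n}b(t^2)$ and $\mathcal A(t)\coloneqq t^{-n}a(t^2)$. These are Laurent polynomials in $t$ of degree at most $2n$ whose exponents all have the parity of $n$, and they satisfy $|\mathcal A(t)|^2+|\mathcal B(t)|^2=1$ on $|t|=1$. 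Consequently the matrix
\begin{equation}
F(t)\coloneqq\begin{pmatrix}\mathcal A(t) & -\overline{\mathcal B}(t)\\ \mathcal B(t) & \overline{\mathcal A}(t)\end{pmatrix},
\end{equation}
where $\overline{\mathcal A}(t)$ denotes the Laurent polynomial obtained by conjugating coefficients and inverting $t$, is unitary on the circle with determinant $1$. The parity and degree conditions are exactly those demanded by \cite[Theorem~2]{Haah2019}.

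\textbf{Step 2 (decomposition).} Haah's theorem will give $F(t)=E_0\prod_{k=1}^{2n}\bigl(\mathrm{diag}(t^{-1},t)\,E_k\bigr)$ up to the paper's ordering conventions, with each $E_k\in\mathrm{SU}(2)$ independent of $t$. Here I would need to check that the theorem is stated for the signal $\mathrm{diag}(t,t^{-1})$ or an equivalent; a conjugation by $X$ or a relabelling $t\to t^{-1}$ fixes any mismatch. Define $C_\zeta$ to be the same product, with each signal operator replaced by $\SWAPECHO_\gamma(\tau)$ and each $E_k$ acting on $q$.

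\textbf{Step 3 (action on the energy basis).} $\SWAPECHO_\gamma(\tau)$ is block diagonal in the joint eigenbasis $|\phi_a\rangle_A|\phi_b\rangle_B$, and on each block it equals $\mathrm{diag}(t_{ab}^{-1},t_{ab})$ by \eqref{eq:physical-Vt}; the single-qubit gates act only on $q$. Hence $C_\zeta$ restricted to the block $(a,b)$ is $F(t_{ab})$. Applying it to $|0\rangle_q$ gives \eqref{eq:comparator-action} with $A(z_{ab})=\mathcal A(t_{ab})$ and $B(z_{ab})=\mathcal B(t_{ab})$. These amplitudes depend on $t_{ab}$ only through $t_{ab}^2=z_{ab}$, up to the global factor $t^{-n}$, which does not affect moduli. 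By \eqref{eq:R-physical}, $|B|^2=|b(z_{ab})|^2=\mathcal R(z_{ab})=\widetilde r(y_{ab})$. The query count is $2n$, and only uncontrolled uses of $\SWAPECHO_\gamma(\tau)$ appear.

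\textbf{Main obstacle.} I expect the hard part to be matching the hypotheses of \cite[Theorem~2]{Haah2019} exactly: its degree and parity convention, and its requirement that the matrix be a unitary Laurent polynomial of the correct determinant. The degree doubling from $z$ to $t=z^{1/2}$ is what gives the $2n$ query count. If Fejér--Riesz leaves the degree of $a$ or $b$ strictly less than $n$, I will pad by allowing zero leading coefficients, which is harmless since the theorem only needs an upper bound on the degree.
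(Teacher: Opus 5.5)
Your proposal is correct and follows essentially the paper's route: Fej\'er--Riesz factors for $\mathcal R$ and $1-\mathcal R$, an $\mathrm{SU}(2)$-valued Laurent polynomial whose first column is $(\mathcal A,\mathcal B)$, Haah's Theorem~2, and then $\SWAPECHO_\gamma(\tau)$ in place of the signal $\mathrm{diag}(t^{-1},t)$ on each energy block.

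There is one degree and parity slip to fix.
\begin{itemize}
\item Because of your centering $\mathcal B(t)=t^{-n}b(t^2)$, every entry of your $F$ has exponents in $\{-n,-n+2,\dots,n\}$. So $F$ has degree $n$ in $t$, not $2n$, with parity $n$.
\item Haah's theorem must therefore be applied with degree parameter $n$. The product $\prod_{k=1}^{2n}\mathrm{diag}(t^{-1},t)E_k$ you wrote has only even exponents, so it cannot equal $F$ when $n$ is odd. Your statement that the parity and degree conditions for degree $2n$ are met is false in that case.
\item Applied correctly, your construction uses $n\le 2n$ signal calls. That is a factor of two fewer than the paper's bound, which uses the uncentered $F(t^2)$: even parity, degree $2n$ in $t$.
\item Your remark that the $z\to t$ degree doubling produces the $2n$ count therefore describes the paper's construction, not yours.
\end{itemize}

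A smaller point: the factor $t_{ab}^{-n}$ is a phase that depends on the sector $(a,b)$, not a global one. This is harmless. Later arguments use only $|B|^2$ and the block-diagonal structure. Also, since $\theta_{ab}\in[-\pi/4,\pi/4]$, your amplitudes are still well-defined functions of $z_{ab}$ through the principal square root, even though for odd $n$ they are not Laurent polynomials in $z$.
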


\begin{proof}
See~\Cref{app:laurent-qsp-synthesis}.
\end{proof}

We take $q=1$ to be the witness outcome and define its probability by
\begin{equation}
r_{ab} \coloneqq \Pr[q=1\mid a,b] = |B(z_{ab})|^2 = \widetilde r(y_{ab}). \label{eq:rab}
\end{equation}
The construction developed across the previous subsections can now be read as a direct chain from the energy difference to this witness probability:
\begin{align}
E_a-E_b\le0 &\Longleftrightarrow \theta_{ab}\le-\frac{\gamma\tau}{2} \Longleftrightarrow y_{ab}\le-\Delta \Longrightarrow r_{ab}\le\zeta, \label{eq:pair-pass} \\
E_a-E_b\ge\gamma &\Longleftrightarrow \theta_{ab}\ge+\frac{\gamma\tau}{2} \Longleftrightarrow y_{ab}\ge+\Delta \Longrightarrow r_{ab}\ge1-\zeta. \label{eq:pair-reject}
\end{align}
These are exactly the two comparison guarantees required by spectral descent.

We will also need the behavior of the comparator inside the transition region $0<E_a-E_b<\gamma$. 
Although no sharp decision is required there, the physical witness probability should remain close to a monotone function of the energy difference. 
Using~\eqref{eq:theta-ab}, we view the smooth target from~\eqref{eq:rstar} as
\begin{equation}
r_\star(E_a-E_b) \coloneqq r_\star\!\left( \sin\left[ 2\tau\left( E_a-E_b-\frac{\gamma}{2} \right) \right] \right). \label{eq:rstar-energy}
\end{equation}
Since all occupied phases lie in $[-\pi/4,\pi/4]$, the inner sine is monotone increasing in $E_a-E_b$, and $r_\star$ is itself monotone increasing. 
Thus $r_\star(E_a-E_b)$ is monotone throughout the occupied spectral range. 
Using the uniform approximation bound from~\eqref{eq:rtilde-monotone-close}, the physical witness probability tightly tracks this monotonic curve:
\begin{equation}
\left| r_{ab} - r_\star(E_a - E_b) \right| \le \frac{\zeta}{32}. \label{eq:rab-monotone-close}
\end{equation}
This property will later be used to show that conditioning on a witness biases the candidate state toward lower energies.

We summarize the energy comparator $C_\zeta$ and its cost below.

\begin{theorem}[Energy comparator cost]
\label{thm:qsp-filter}
For every $\zeta\in(0,1/4)$, there exists a unitary comparator circuit $C_\zeta$ whose witness probability $r_{ab}$ satisfies the strict comparison bounds in~\eqref{eq:pair-pass} and~\eqref{eq:pair-reject}, and the monotonic tracking bound in~\eqref{eq:rab-monotone-close} for all occupied energy pairs.
The comparator $C_\zeta$ uses
\begin{equation}
d = O\!\left( \left( \frac{\alpha_H}{\gamma}+1 \right) \log\frac{1}{\zeta} \right) \label{eq:qsp-degree}
\end{equation}
SWAP-echo queries, with sequential Hamiltonian-evolution time
\begin{equation}
T_{\rm cmp} = O\!\left( \frac{1}{\gamma} \log\frac{1}{\zeta} \right). \label{eq:qsp-comparator-time}
\end{equation}
The construction requires no ctrl-$\e^{-\iu Ht}$.
\end{theorem}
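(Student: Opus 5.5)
The theorem mainly assembles results already proved. I would take $C_\zeta$ to be the circuit of \Cref{lem:laurent-qsp-comparator}, built from the polynomial $\widetilde r$ of \Cref{lem:soft-comparator-polynomial}. By \eqref{eq:rab} its witness probability on any occupied pair $|\phi_a\rangle_A|\phi_b\rangle_B$ is exactly $r_{ab}=\widetilde r(y_{ab})$. The theorem then splits into three parts: correctness, query count, and evolution time, plus a check that no controlled evolution appears.

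\emph{Correctness.} With $\tau$ chosen as in \eqref{eq:tau}, every occupied $\theta_{ab}$ lies in $[-\pi/4,\pi/4]$. Hence $y_{ab}=\sin(2\theta_{ab})\in[-1,1]$, which is the interval on which \Cref{lem:soft-comparator-polynomial} holds. The equivalences \eqref{eq:y-left}--\eqref{eq:y-right} follow from monotonicity of $\sin(2\theta)$ on that interval. The erf tail bound $1-\operatorname{erf}(u)\le \e^{-u^2}$, evaluated at $u=\kappa\Delta=\sqrt{\log(8/\zeta)}$, gives $r_\star\le \zeta/16$ on the left region and $r_\star\ge 1-\zeta/16$ on the right. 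Adding the uniform approximation error $\zeta/32$ yields \eqref{eq:pair-pass}--\eqref{eq:pair-reject}. The tracking bound \eqref{eq:rab-monotone-close} is the uniform bound \eqref{eq:rtilde-monotone-close} evaluated at $y_{ab}$, after rewriting $r_\star(y_{ab})$ as $r_\star(E_a-E_b)$ via \eqref{eq:rstar-energy}.

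\emph{Cost.} \Cref{lem:soft-comparator-polynomial} gives degree $n=O(\Delta^{-1}\log(1/\zeta))$. By \eqref{eq:Delta-lower}, $\Delta^{-1}\le 2(2\alpha_H+\gamma/2)/\gamma=4\alpha_H/\gamma+1$, so $n=O((\alpha_H/\gamma+1)\log(1/\zeta))$. \Cref{lem:laurent-qsp-comparator} uses at most $2n$ SWAP-echo queries, which gives \eqref{eq:qsp-degree}. Each $\SWAPECHO_\gamma(\tau)$ is one forward evolution on $A$ and one backward evolution on $B$ for time $\tau$. These act on disjoint registers and run in parallel, so each query contributes sequential time $\tau$. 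Thus $T_{\rm cmp}=2n\tau$.

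\emph{Evolution time.} I expect this to be the only step needing care. One has to see that the $\alpha_H/\gamma$ term in $n$ is cancelled by $\tau=\Theta(1/(\alpha_H+\gamma))$, and also that the ``$+1$'' term does not spoil the bound. Explicitly,
\begin{equation}
n\tau = O\!\left(\frac{4\alpha_H+\gamma}{\gamma}\cdot\frac{\pi}{8\alpha_H+2\gamma}\log\frac1\zeta\right)=O\!\left(\frac1\gamma\log\frac1\zeta\right),
\end{equation}
since $(4\alpha_H+\gamma)/(8\alpha_H+2\gamma)=1/2$ exactly. This gives \eqref{eq:qsp-comparator-time}.

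Finally, the circuit consists only of single-qubit gates on $q$, $\cSWAP$ gates, and the uncontrolled oracles $U_H$ and $U_H^\dagger$. Controlled access to the echo is not needed, because Laurent QSP with the signal operator \eqref{eq:physical-Vt} only interleaves the echo itself, as stated in \Cref{lem:laurent-qsp-comparator}. Therefore no $\mathrm{ctrl}\text{-}\e^{-\iu Ht}$ is ever used.
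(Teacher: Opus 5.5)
Your proposal is correct and follows essentially the same route as the paper. Like the paper, you assemble the polynomial-approximation lemma and the Laurent-QSP realization lemma, using the erf tail bound and the lower bound on $\Delta$. Your explicit identity $(4\alpha_H+\gamma)/(8\alpha_H+2\gamma)=1/2$ is a sharper form of the paper's remark that the $\alpha_H$-dependence cancels in $d\tau$.
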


The scaling of the total evolution time $T_{\mathrm{cmp}}$ relies on the following argument. 
While the total number of SWAP-echo queries $d$ scales proportionally to $\alpha_H/\gamma$, each individual query applies time evolution for an elementary duration $\tau = \Theta(1/(\alpha_H + \gamma))$. 
Consequently, the dependence on $\alpha_H$ cancels out in the product $\Theta(d\tau)$, yielding a total evolution time proportional to $\gamma^{-1} \log(1/\zeta)$.

\section{Spectral Descent}
\label{sec:fresh-descent}

We now use the energy comparator $C_\zeta$ to turn the spectral descent principle of \Cref{sec:comparison-intuition} into an explicit ground-state preparation algorithm called Spectral Descent (SD).
This algorithm relies on a weak access model for the input state; that is, we assume that we have the ability to prepare fresh copies of the state $|\psi\rangle$ and do not have access to $U_\psi$.

SD maintains two system registers: the anchor register $A$ stores the current anchor state, while the candidate register $B$ is repeatedly prepared in the input state $|\psi\rangle$. 
One application of $C_\zeta$ to registers $A$, $B$, and the witness qubit $q$ is what we refer to as a \emph{comparison trial}. 
After each trial, SD measures the witness qubit $q$ in the computational basis $\{|0\rangle, |1\rangle\}$.
In what follows, the notation $q=0$ or $q=1$ refers to the corresponding classical measurement outcome, while $q$ by itself denotes the witness qubit.
If the outcome is $q=1$ (witness), the candidate and anchor registers are swapped. 
If the outcome is $q=0$ (no witness), the candidate register is discarded and the algorithm proceeds with the state remaining in the anchor register.

A sequence of at most $M$ comparison trials is what we refer to as a \emph{descent stage}. 
A stage ends as soon as the first witness outcome $q=1$ is observed, or after $M$ trials if no witness is found. 
The algorithm performs $R$ such stages.

\subsection{Spectral Descent Algorithm}
\label{sec:measured-algorithm}

The complete procedure is given in \Cref{alg:spectral-descent}. 
The parameter choices for $\zeta, M$, and $R$ appearing in Step 1 of the algorithm will be justified in \Cref{sec:convergence-sd}.

\begin{algorithm}[t]
\caption{Spectral Descent (SD)}
\label{alg:spectral-descent}
\begin{algorithmic}[1]
\Statex \hspace{-\algorithmicindent} \textbf{Input:} Fresh-copy access to the input state $|\psi\rangle$; known overlap lower bound $\eta$; known gap lower bound $\gamma$; target infidelity $\xi\in(0,1)$.

\Statex \hspace{-\algorithmicindent} \textbf{Output:} An anchor state $\rho_R$ satisfying $\langle\phi_0|\rho_R|\phi_0\rangle\ge 1-\xi$.

\State Set
\begin{equation}
\zeta\leftarrow\frac{\eta^2\xi}{256}, \qquad M\leftarrow \left\lceil \frac{\log 4}{\eta(1-\zeta)} \right\rceil, \qquad R\leftarrow \left\lceil \frac{\log\!\left(2/(\eta\xi)\right)} {\log(64/43)} \right\rceil \notag.
\end{equation}
\State Construct the energy comparator $C_\zeta$ from
\Cref{sec:qsp-comparator}.
\State Prepare the anchor register $A$ in the state $|\psi\rangle$.

\For{$t=1,\ldots,R$}
    \For{$m=1,\ldots,M$}
        \State Prepare the candidate register $B$ in the state
        $|\psi\rangle$.
        \State Initialize the witness qubit $q$ in $|0\rangle$.
        \State Apply $C_\zeta$ to joint register $qAB$.
        \State Measure $q$ in the computational basis $\{|0\rangle, |1\rangle\}$.
        \If{$q=1$}
            \State Apply $\operatorname{SWAP}_{AB}$, so that the
            accepted candidate becomes the new anchor.
            \State Discard $B$ and reset $q$.
            \State Terminate the current descent stage.
        \Else
            \State Discard $B$ and reset $q$.
        \EndIf
    \EndFor
\EndFor

\State \Return the state $\rho_R$ in the anchor register $A$.
\end{algorithmic}
\end{algorithm}

To illustrate, one descent stage, corresponding to the inner loop of \Cref{alg:spectral-descent}, is shown in \Cref{fig:measured-descent-stage}.

\begin{figure}[h]
\centering
\resizebox{\textwidth}{!}{
\begin{quantikz}[
    row sep=0.48cm,
    column sep=0.18cm
]
\lstick{$q:\ |0\rangle$}
&
\qw
&
\gate[3]{C_\zeta}
\gategroup[
    wires=3,
    steps=5,
    style={
        densely dashed,
        rounded corners,
        inner xsep=4pt,
        inner ysep=3pt
    },
    label style={
        label position=above,
        anchor=south,
        yshift=0.08cm
    }
]{Comparison Trial $1$}
&
\meter{}
&
\cw
&
\cwbend{1}
&
\setwiretype{n}
&
\setwiretype{n}\phantom{|\psi\rangle}
&
\gate[
    style={draw=none,minimum width=0.9cm}
]{\substack{q=0\\[1.3mm]\Longrightarrow}}
&
\inputD[
    draw=none,
    fill=none,
    inner sep=3pt
]{|0\rangle}
&
\gate[3]{C_\zeta}\setwiretype{q}
\gategroup[
    wires=3,
    steps=5,
    style={
        densely dashed,
        rounded corners,
        inner xsep=0pt,
        inner ysep=3pt
    },
    label style={
        label position=above,
        anchor=south,
        yshift=0.08cm
    }
]{Comparison Trial $2$}
&
\meter{}
&
\cw
&
\cwbend{1}
&
\setwiretype{n}
&
\setwiretype{n}\phantom{|\psi\rangle}
&
\gate[
    style={draw=none,minimum width=0.9cm}
]{\substack{q=0\\[1.3mm]\Longrightarrow}}
&
\setwiretype{n}\cdots
&
\gate[
    style={draw=none,minimum width=0.9cm}
]{\substack{q=0\\[1.3mm]\Longrightarrow}}
&
\inputD[
    draw=none,
    fill=none,
    inner sep=3pt
]{|0\rangle}
&
\gate[3]{C_\zeta}\setwiretype{q}
\gategroup[
    wires=3,
    steps=5,
    style={
        densely dashed,
        rounded corners,
        inner xsep=0pt,
        inner ysep=3pt
    },
    label style={
        label position=above,
        anchor=south,
        yshift=0.08cm
    }
]{Comparison Trial $M$}
&
\meter{}
&
\cw
&
\cwbend{1}
&
\setwiretype{n}
&
\setwiretype{n}
&
\setwiretype{n}
\\
\lstick{$A:\ \rho_{t-1}$}
&
\qw
&
\qw
&
\qw
&
\qw
&
\swap{1}
&
\qw
&
\qw
&
\qw
&
\qw
&
\qw
&
\qw
&
\qw
&
\swap{1}
&
\qw
&
\qw
&
\qw
&
\qw
&
\qw
&
\qw
&
\qw
&
\qw
&
\qw
&
\swap{1}
&
\qw
&
\qw
&
\rstick{$\rho_t$}
\\
\lstick{$B:\ |\psi\rangle$}
&
\qw
&
\qw
&
\qw
&
\qw
&
\targX{}
&
\gate[
    style={
        fill=gray!25,
        minimum width=0.48cm,
        minimum height=0.48cm,
        inner sep=0pt
    }
]{\phantom{X}}
&
\setwiretype{n}\phantom{|\psi\rangle}
&
\setwiretype{n}
&
\inputD[
    draw=none,
    fill=none,
    inner sep=0pt
]{|\psi\rangle}
&
\setwiretype{q}\qw
&
\qw
&
\qw
&
\targX{}
&
\gate[
    style={
        fill=gray!25,
        minimum width=0.48cm,
        minimum height=0.48cm,
        inner sep=0pt
    }
]{\phantom{X}}
&
\setwiretype{n}\phantom{|\psi\rangle}
&
\setwiretype{n}
&
\setwiretype{n}
&
\setwiretype{n}
&
\inputD[
    draw=none,
    fill=none,
    inner sep=0pt
]{|\psi\rangle}
&
\setwiretype{q}\qw
&
\qw
&
\qw
&
\targX{}
&
\gate[
    style={
        fill=gray!25,
        minimum width=0.48cm,
        minimum height=0.48cm,
        inner sep=0pt
    }
]{\phantom{X}}
&
\setwiretype{n}
&
\setwiretype{n}
\end{quantikz}
}
\caption{
The $t$-th descent stage of SD.
Each dashed box represents one comparison trial and contains the energy
comparator, the measurement-controlled anchor update, and the subsequent
discarding of the candidate register.
If the measured outcome is $q=0$, only the anchor register continues to the
next trial, while a new witness qubit is initialized in $|0\rangle$ and a
fresh candidate state $|\psi\rangle$ is prepared in $B$.
If the outcome is $q=1$, the candidate is swapped into $A$ and the descent
stage terminates.
The stage contains at most $M$ comparison trials.
}
\label{fig:measured-descent-stage}
\end{figure}

\subsection{Convergence to the Ground State}
\label{sec:convergence-sd}

We now prove that the output state $\rho_R$ of
\Cref{alg:spectral-descent} has ground-state fidelity at least $1-\xi$.
To prove this, we establish two properties of the descent.
The first is \emph{downward progress}: if the anchor is excited, successive descent stages drive it toward lower-energy levels. 
The second is \emph{ground-state stability}: once the anchor reaches the ground state, it remains there except for a small probability of an erroneous update.

We begin with an anchor state having a fixed energy and analyze one descent stage. 
Specifically, we determine the probability that the stage finds a witness ($q=1$) and the energy distribution of the candidate accepted upon finding one. 
These quantities describe the action of one stage, but they do not yet provide a convenient way to track its cumulative progress over multiple descent stages.

To quantify this progress, we introduce the notion of \emph{descent potential} that assigns zero potential to the ground-state energy and positive potential to every occupied excited-energy level. 
We show that one descent stage decreases this potential by a constant factor when the anchor is excited, while increasing it only with small probability when the anchor state is already the ground state. 
We then extend this fixed-energy analysis to the general anchor state produced by the algorithm and iterate the resulting potential bound over all $R$ descent stages.

For our analysis below, since the excited spectrum may contain degeneracies, we group together components of $|\psi\rangle$ having the same energy. 
Let $\mathcal E_0<\mathcal E_1<\cdots<\mathcal E_{s-1}$ denote the distinct occupied energies of $|\psi\rangle$ with $s \leq d$, and let
$\Pi_j$ denote the projector onto the eigenspace with energy $\mathcal E_j$. 
Define the corresponding spectral weights by $\pi_j \coloneqq \langle\psi|\Pi_j|\psi\rangle.$ 
These weights satisfy $\pi_j>0$ and $\sum_j\pi_j=1$. 
Since the ground state is nondegenerate and $|\psi\rangle$ has ground-state weight at least $\eta$, we have $\mathcal E_0=E_0$, $\Pi_0 = |\phi_0\rangle\langle\phi_0|$, and $\pi_0=p_0\ge\eta$.

\subsubsection{Probability of Finding a Witness in a Descent Stage}
\label{sec:measured-search}

For the moment, suppose that the anchor state is supported entirely in the eigenspace with energy $\mathcal E_j$, and that the candidate state lies in the eigenspace with energy $\mathcal E_k$. 
Let $r_{jk}$ denote the probability that the energy comparator $C_\zeta$ produces the witness outcome $q=1$ on this pair. 
By~\eqref{eq:theta-ab},~\eqref{eq:y-and-Delta}, and~\eqref{eq:rab}, we have $r_{jk}=\widetilde r(\sin[2\tau(\mathcal E_j-\mathcal E_k-\gamma/2)])$.
Since a fresh candidate state $|\psi\rangle$ has spectral weight $\pi_k$ at energy $\mathcal E_k$, the total witness probability in one comparison trial is simply
\begin{equation}
\lambda_j \coloneqq \sum_k \pi_k r_{jk}. \label{eq:lambda-j}
\end{equation}

We bound $\lambda_j$ for two distinct cases using the comparator guarantees established in~\eqref{eq:pair-pass} and~\eqref{eq:pair-reject}.
First, suppose the anchor state has already reached the ground state~($j=0$).
This implies that $\mathcal E_0-\mathcal E_k\le0$ for every $k$, and hence $r_{0k}\le\zeta$. 
The total witness probability is therefore strictly bounded by the comparator error:
\begin{equation}
\lambda_0 = \sum_k\pi_k r_{0k} \le \zeta \sum_k \pi_k =\zeta. \label{eq:lambda0-upper}
\end{equation}
Second, suppose the anchor state is an excited state~($j>0$). 
Its energy is separated from the ground-state component of the candidate state by at least $\gamma$, that is, $\mathcal E_j-\mathcal E_0\ge\gamma$, and therefore $r_{j0}\ge1-\zeta$. 
Keeping only the ground-state contribution in~\eqref{eq:lambda-j}, we get
\begin{equation}
\lambda_j \ge \pi_0 r_{j0} \ge \eta(1-\zeta), \qquad j>0. \label{eq:lambda-excited-lower}
\end{equation}
Thus
\begin{equation}
\lambda_0\le\zeta, \qquad \lambda_j\ge\eta(1-\zeta) \quad\text{for every }j>0. \label{eq:lambda-separation}
\end{equation}
A ground-state anchor state therefore produces a witness with probability that is bounded from above by the comparator error $\zeta$, whereas every excited anchor state produces one with probability at least of order $\eta$.

From~\eqref{eq:lambda-excited-lower}, observe that a single comparison trial may still have small success probability when $\eta$ is small. 
This is why each descent stage repeats the energy comparison with fresh candidate states while retaining the same anchor state. 
Since a no-witness outcome ($q=0$) leaves the anchor state unchanged, every trial within the same stage again has witness probability $\lambda_j$. 
If the stage contains at most $M$ comparison trials, its total witness probability is
\begin{equation}
s_j^{\rm SD} \coloneqq 1-(1-\lambda_j)^M. \label{eq:sj-meas}
\end{equation}

To boost this stage witness probability to a constant, we choose $M$ as follows:
\begin{equation}
M \coloneqq \left\lceil \frac{\log 4}{\eta(1-\zeta)} \right\rceil. \label{eq:M-choice}
\end{equation}
For any excited anchor state ($j>0$), substituting the bound $\lambda_j \ge \eta(1-\zeta)$ from~\eqref{eq:lambda-excited-lower} into the no-witness probability ($1-s_j^{\rm SD}$) gives:
\begin{equation}
1-s_j^{\rm SD} =(1-\lambda_j)^M \le\e^{-M\lambda_j} \le\e^{-M\eta(1-\zeta)} \le\frac14,
\end{equation}
where the first inequality uses $1-x\le\e^{-x}$, the second inequality uses~\eqref{eq:lambda-excited-lower}, and the last inequality uses~\eqref{eq:M-choice}.
Therefore, executing $M = \mathcal{O}(1/\eta)$ trials ensures a constant witness probability:
\begin{equation}
s_j^{\rm SD}\ge\frac34, \qquad j>0. \label{eq:excited-measured-success}
\end{equation}
Conversely, if the anchor state is already the ground state, each trial generates a witness with probability at most $\zeta$ (see~\eqref{eq:lambda0-upper}). 
Applying the standard union bound, the cumulative probability of a descent stage producing a false-witness is bounded by
\begin{equation}
s_0^{\rm SD} \leq M\zeta. \label{eq:delta-g-meas}
\end{equation}
This is an upper bound on the probability that the ground-state anchor is erroneously replaced during one stage.

Finding a witness does not guarantee that the successful candidate state is exactly the ground state. 
We must also determine which candidate energies are statistically favored upon finding a witness. 
For a single comparison trial with an anchor energy $\mathcal{E}_j$, the joint probability of drawing candidate energy $\mathcal{E}_k$ and measuring $q=1$ is $\pi_k r_{jk}$. By Bayes' theorem, the conditional distribution of the candidate energy, given that a witness was observed ($q=1$), is
\begin{equation}
q_j(k) \coloneqq \Pr[ k \mid j,\text{ witness}] = \frac{\pi_k r_{jk}}{\lambda_j}. \label{eq:qjk}
\end{equation}
Because a descent stage involves repeated trials, we must verify that this repetition does not change the above conditional energy distribution.

\begin{lemma}
\label{lem:measured-direction-preserved}
Fix an anchor energy level $\mathcal{E}_j$. Conditioned on a descent stage successfully finding a witness within its maximum of $M$ comparison trials, the energy of the first successful candidate state is exactly distributed according to $q_j(k)$ defined in~\eqref{eq:qjk}.
\end{lemma}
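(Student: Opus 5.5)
The plan is to compute the joint probability that the first witness occurs at trial $m$ with candidate energy $\mathcal E_k$, and then sum over $m$. The argument rests on the observation that, for an anchor supported in the eigenspace $\Pi_j$, every trial before the first witness acts identically and independently. Each fresh candidate $|\psi\rangle$ is prepared in a product state with the anchor register. Moreover, $C_\zeta$ is built only from SWAP echoes, which by \Cref{lem:exact-signal} are diagonal in the joint eigenbasis of $H_A$ and $H_B$, together with single-qubit gates on $q$. Consequently $C_\zeta$ commutes with $\Pi_j\otimes\Pi_k$ and acts on each joint eigenspace through the same scalar amplitudes $A(z_{jk}),B(z_{jk})$ of \Cref{lem:laurent-qsp-comparator}.

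The first step is to verify that a no-witness outcome leaves the anchor exactly in its original state. On the branch $q=0$, the unnormalized post-measurement state is $\sum_k A(z_{jk})(I\otimes\Pi_k)|\phi\rangle_A|\psi\rangle_B$, where $|\phi\rangle$ is the anchor state in $\Pi_j$. Because the coefficient depends only on $k$, this is still a product of $|\phi\rangle_A$ with a vector on $B$. Discarding $B$ therefore returns precisely $|\phi\rangle\langle\phi|$ with probability $1-\lambda_j$. For a general mixed anchor supported in $\Pi_j$ the same conclusion follows by linearity, and it also follows that $\Pr[q=1]=\sum_k\pi_k|B(z_{jk})|^2=\lambda_j$ with $\Pr[q=1,k]=\pi_k r_{jk}$. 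Here "candidate energy $k$" refers to the outcome of a hypothetical measurement of $\Pi_k$ on $B$ after the witness. This measurement commutes with $C_\zeta$, so its statistics are well defined, and the phrase "distributed as $q_j(k)$" is interpreted in this operational sense.

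The second step is to iterate using independence. Every trial starts from the same anchor state and a fresh, independent copy of $|\psi\rangle$. Hence
\[
\Pr[\text{first witness at trial } m,\ k] = (1-\lambda_j)^{m-1}\pi_k r_{jk}.
\]
Summing over $m=1,\dots,M$ gives $s_j^{\rm SD}\,\pi_k r_{jk}/\lambda_j$, where $s_j^{\rm SD}$ is as in~\eqref{eq:sj-meas}. Dividing by $\Pr[\text{witness within }M]=s_j^{\rm SD}$ then yields exactly $q_j(k)$ as defined in~\eqref{eq:qjk}. The case $\lambda_j=0$ is vacuous, since there the conditioning event has probability zero.

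The main obstacle is the first step: showing rigorously that a failed trial neither disturbs the anchor nor entangles it with the candidate register. This relies on $C_\zeta$ being block-diagonal with $k$-dependent but anchor-state-independent coefficients on each $\Pi_j\otimes\Pi_k$. I would derive this block-diagonality directly from the structure $\e^{-\iu Z_q\otimes(D_{AB}-\gamma/2)\tau}$ of the echo combined with the single-qubit interleaving gates.
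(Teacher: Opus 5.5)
Your proposal is correct and follows essentially the same route as the paper: you compute the joint probability $(1-\lambda_j)^{m-1}\pi_k r_{jk}$ that the first witness occurs at trial $m$ with candidate level $\mathcal E_k$, sum over $m\le M$, and normalize by the stage success probability, so the common geometric factor cancels and leaves $q_j(k)$. Your explicit check that a no-witness outcome returns the anchor unchanged, via the product form $|\phi\rangle_A\otimes\sum_k A(z_{jk})\Pi_k|\psi\rangle_B$, fills in a step the paper only asserts, and it does not change the argument.
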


\begin{proof}
See~\Cref{app:measured-witness-direction}.
\end{proof}

\subsubsection{Descent Potential}
\label{sec:descent-potential}

The above analysis shows that an excited anchor finds a witness with constant probability. 
Finding a witness alone, however, does not
guarantee that the accepted candidate energy is lower than the anchor energy. 
Recall that the energy comparator $C_\zeta$ has a nonzero approximation error, and it also has a smooth transition region when two excited energies differ by less than $\gamma$. 
We therefore need to quantify the average spectral progress made by an accepted candidate after each descent stage.

For each occupied energy level $\mathcal{E}_j$, we define the \emph{descent potential} $W_j$ as the total spectral weight of the input state $|\psi\rangle$ lying strictly below $\mathcal{E}_j$:
\begin{equation}
W_j \coloneqq \sum_{k<j}\pi_k. \label{eq:Wj}
\end{equation}
 
By definition, the ground state has no occupied spectrum below it, yielding $W_0 = 0$.
While for every excited level $j>0$, the ground-state weight is included in the sum, so
\begin{equation}
W_j\ge\pi_0\ge\eta, \qquad j>0. \label{eq:W-excited-lower}
\end{equation}

Fix an excited anchor energy $\mathcal E_j$. 
From \Cref{lem:measured-direction-preserved}, we have that conditioned on observing a witness, the accepted candidate has energy $\mathcal E_k$ with probability $q_j(k)$. 
We define the potential of this accepted candidate state to be a random variable $W_{\rm cand}$, where $ W_{\rm cand} = W_k$ when the accepted candidate state has energy $\mathcal E_k$.
Therefore, the expected potential under $q_j(k)$ is
\begin{equation}
\E[W_{\rm cand}\mid j,\text{ witness}] = \sum_k q_j(k)W_k.
\end{equation}
Note that smaller $W_k$ corresponds to a candidate energy closer to the bottom of the occupied spectrum.

To understand the above expectation, first consider an idealized scenario where the QSP approximation error is zero, and we only consider genuinely lower candidate energy levels $k<j$. 
If these lower-energy components were sampled naively according to their original input weights, their normalized distribution would be $\mu_j(k) = \pi_k / W_j$. 
Now, calculating the expected potential under this naive distribution yields a contraction by a factor of roughly one-half:
\begin{equation}
\E_{k\sim\mu_j}[W_k] = \frac{W_j^2-\sum_{k<j}\pi_k^2}{2W_j} \le \frac{W_j}{2}. \label{eq:unweighted-half-main}
\end{equation}
We explicitly derive this inequality in \Cref{app:candidate-potential}.

The ideal target comparator can only decrease this expected potential. 
Recall the ideal target probability $r^\star_{jk} \coloneqq r_\star(\mathcal{E}_j - \mathcal{E}_k)$ introduced in \Cref{sec:qsp-comparator}. 
For energy levels ordered as $k < \ell < j$, the energy difference satisfies $\mathcal{E}_j - \mathcal{E}_k > \mathcal{E}_j - \mathcal{E}_\ell$. 
Because $r_\star$ is strictly monotonically increasing, it follows that $r^\star_{jk} \ge r^\star_{j\ell}$. 
Concurrently, the descent potentials obey the strict ordering $W_k \le W_\ell$. 
Thus the ideal comparator assigns higher witness probabilities to energies that possess lower descent potentials. 

Define the ideal conditional distribution as
\begin{equation}
\nu_j^\star(k) \coloneqq \frac{\pi_k r^\star_{jk}} {\sum_{\ell<j}\pi_\ell r^\star_{j\ell}}, \qquad k<j. \label{eq:v-j-star-define}
\end{equation}
Because $r^\star_{jk}$ decreases with $k$ while $W_k$ increases with $k$, these quantities are negatively correlated under the naive distribution $\mu_j$. 
This negative correlation guarantees that the expected potential under the ideal comparator is strictly less than the naive expectation in~\eqref{eq:unweighted-half-main}:
\begin{equation}
\E_{k\sim\nu_j^\star}[W_k] \le \E_{k\sim\mu_j}[W_k] \le \frac{W_j}{2}. \label{eq:ideal-candidate-half-main}
\end{equation}
We prove the first inequality explicitly in~\Cref{app:candidate-potential}.

The distribution $q_j(k)$ differs from this ideal distribution $ \nu_j^\star(k)$ in two ways. 
First, the QSP polynomial approximation introduces an error $|r_{jk} - r^\star_{jk}| \le \zeta/32$. 
Second, the energy comparator $C_\zeta$ may produce a small witness probability on candidate energies $k\ge j$. 
The following lemma bounds both these effects.

\begin{lemma}[Potential of an accepted candidate]
\label{lem:successful-candidate-potential}
Fix an excited anchor level $\mathcal{E}_j$ ($j>0$), and assume the comparator error $\zeta$ is sufficiently small such that $\zeta / (32\eta(1-\zeta)) \le 1/16$. 
Conditioned on one comparison trial producing a witness, the accepted candidate satisfies
\begin{equation}
\mathbb{E}[W_{\mathrm{cand}} \mid j, \mathrm{witness}] \le \frac{9}{16}W_j + \frac{\zeta}{\eta(1-\zeta)}. \label{eq:candidate-potential}
\end{equation}
By \Cref{lem:measured-direction-preserved}, this bound also holds for the first successful candidate state obtained across an entire descent stage.
\end{lemma}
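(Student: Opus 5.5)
The plan is to expand the conditional expectation using \eqref{eq:qjk}, so that
\begin{equation*}
\E[W_{\rm cand}\mid j,\text{ witness}] = \frac{1}{\lambda_j}\sum_k \pi_k r_{jk} W_k ,
\end{equation*}
and to split the sum into the strictly lower levels $k<j$ and the levels $k\ge j$. Throughout, the denominator is controlled by \eqref{eq:lambda-excited-lower}, namely $\lambda_j\ge\eta(1-\zeta)$. The hypothesis $\zeta/(32\eta(1-\zeta))\le 1/16$ then gives $\zeta/(32\lambda_j)\le 1/16$.

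\emph{Upper levels $k\ge j$.} Here $\mathcal E_j-\mathcal E_k\le 0$, so \eqref{eq:pair-pass} gives $r_{jk}\le\zeta$. Using also $W_k\le 1$ and $\sum_k\pi_k\le 1$, this part contributes at most $\zeta/\lambda_j\le \zeta/(\eta(1-\zeta))$. This is exactly the additive term in \eqref{eq:candidate-potential}.

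\emph{Lower levels $k<j$.} I will compare with the ideal comparator using \eqref{eq:rab-monotone-close}, i.e.\ $r_{jk}\le r^\star_{jk}+\zeta/32$. Write $S^\star\coloneqq\sum_{k<j}\pi_k r^\star_{jk}$. Then
\begin{equation*}
\sum_{k<j}\pi_k r_{jk}W_k \le S^\star\,\E_{k\sim\nu_j^\star}[W_k] + \frac{\zeta}{32}\sum_{k<j}\pi_k W_k .
\end{equation*}
The two pieces are bounded as follows.
\begin{itemize}[nosep]
\item By \eqref{eq:ideal-candidate-half-main}, $\E_{\nu_j^\star}[W_k]\le W_j/2$.
\item By the computation behind \eqref{eq:unweighted-half-main}, $\sum_{k<j}\pi_kW_k = W_j\,\E_{\mu_j}[W_k]\le W_j^2/2$.
\end{itemize}
To convert $S^\star$ back into the physical quantity, I use the reverse approximation $r_{jk}\ge r^\star_{jk}-\zeta/32$. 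This gives
\begin{equation*}
\lambda_j\ge\sum_{k<j}\pi_k r_{jk}\ge S^\star-\frac{\zeta}{32}W_j, \qquad\text{so}\qquad S^\star\le \lambda_j+\frac{\zeta}{32}W_j .
\end{equation*}
Combining, the lower-level contribution to the conditional expectation is at most
\begin{equation*}
\frac{1}{\lambda_j}\Bigl[\Bigl(\lambda_j+\frac{\zeta W_j}{32}\Bigr)\frac{W_j}{2}+\frac{\zeta}{32}\frac{W_j^2}{2}\Bigr]
= \frac{W_j}{2}+\frac{\zeta}{32\lambda_j}W_j^2 \le \frac{W_j}{2}+\frac{W_j}{16}=\frac{9}{16}W_j ,
\end{equation*}
where the last step uses $W_j\le1$ and $\zeta/(32\lambda_j)\le1/16$. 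Adding the upper-level bound yields \eqref{eq:candidate-potential}. The final sentence of the lemma follows immediately from \Cref{lem:measured-direction-preserved}: the first successful candidate of a stage has exactly the single-trial conditional law $q_j(k)$.

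The main obstacle is the bookkeeping in the lower-level step. A crude treatment of the two $\zeta/32$ corrections, bounding each by $\zeta W_j/32$ instead of $\zeta W_j^2/64$, produces a constant like $19/32$ instead of $9/16$. The key is to retain the factor $\sum_{k<j}\pi_kW_k\le W_j^2/2$ from \eqref{eq:unweighted-half-main} in both corrections, so that they combine into a single $\zeta W_j^2/32$ term. A second, minor point is that $\nu_j^\star$ and $\E_{\nu_j^\star}$ are only well defined when $S^\star>0$. This holds because $r_\star>0$ everywhere and $\pi_0>0$.
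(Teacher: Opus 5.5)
Your proof is correct and has the same skeleton as the paper's. You split the conditional expectation into $k\ge j$, bounded by $\zeta/\lambda_j\le\zeta/(\eta(1-\zeta))$ via $r_{jk}\le\zeta$ and $W_k\le 1$, and $k<j$, which you compare against the ideal monotone comparator through \eqref{eq:ideal-candidate-half-main}.

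The one real difference is how the $\zeta/32$ approximation error is propagated on the lower levels. The paper normalizes first: it defines $\widetilde\nu_j(k)=\pi_k r_{jk}/\sum_{\ell<j}\pi_\ell r_{j\ell}$ and bounds $\operatorname{TV}(\widetilde\nu_j,\nu_j^\star)\le \zeta W_j/(32\eta(1-\zeta))$. Using $0\le W_k\le W_j$, this gives $\E_{\widetilde\nu_j}[W_k]\le W_j/2+\zeta W_j^2/(32\eta(1-\zeta))$. The paper then bounds $\Pr[k<j\mid j,\text{witness}]$ by one and adds $\Pr[k\ge j\mid j,\text{witness}]$.

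You instead work with unnormalized sums and use the approximation bound in both directions. The upper bound handles the numerator, and the lower bound trades $S^\star$ for $\lambda_j$. This lands on the same correction $\zeta W_j^2/(32\lambda_j)$ with no total-variation step, and the factor $\Pr[k<j\mid j,\text{witness}]$ is absorbed automatically. The paper's route isolates a reusable closeness statement between the physical and ideal conditional laws; yours is shorter and avoids normalizing twice.

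You also take \eqref{eq:unweighted-half-main} and \eqref{eq:ideal-candidate-half-main} as given. In the paper, both are proved inside this lemma's proof: the first via the identity $\sum_{k<j}\pi_k W_k=\tfrac12\bigl(W_j^2-\sum_{k<j}\pi_k^2\bigr)$, the second via a Chebyshev-sum negative-correlation identity. Citing them is legitimate, but a self-contained write-up would include those two steps.
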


\begin{proof}
See~\Cref{app:candidate-potential}.
\end{proof}

We can now bound the expected potential after one complete descent stage. 
For an excited anchor level $j>0$, a descent stage either leaves the anchor unchanged or replaces it by the first successful candidate state. 
Let $W_{\rm new}$ denote the descent potential of the anchor energy at the end of this stage. 
If the stage fails to find any witness (probability $1 - s_j^{\mathrm{SD}}$), the anchor state is retained, yielding $W_{\mathrm{new}} = W_j$. 
If the stage successfully finds a witness (probability $s_j^{\mathrm{SD}}$), the anchor state is replaced by the new accepted candidate state, which satisfies the conditional expectation from \Cref{lem:successful-candidate-potential}. 
Using the success probability bound $s_j^{\mathrm{SD}} \ge 3/4$ from~\eqref{eq:excited-measured-success}, the expected potential at the end of the stage evaluates to
\begin{align}
\mathbb{E}[W_{\mathrm{new}} \mid j] &\le (1 - s_j^{\mathrm{SD}})W_j + s_j^{\mathrm{SD}} \left( \frac{9}{16}W_j + \frac{\zeta}{\eta(1-\zeta)} \right) \\
&\le \frac{43}{64}W_j + \frac{\zeta}{\eta(1-\zeta)}. \label{eq:measured-stage-contraction}
\end{align}

If the anchor state is already the ground state ($j=0$), its initial potential is $W_0=0$. 
The potential can only increase if a false witness triggers an erroneous update. 
Because the probability of any false witness occurring during the entire stage is bounded as $s_0^{\rm SD} \leq M\zeta$, and the maximum possible potential is bounded trivially by $W_k \le 1$, we obtain the strict bound:
\begin{equation}
\mathbb{E}[W_{\mathrm{new}} \mid 0] \le M\zeta. \label{eq:measured-ground-potential}
\end{equation}
Equations~\eqref{eq:measured-stage-contraction} and~\eqref{eq:measured-ground-potential} formally capture the stability of the descent: an excited anchor tends to move toward lower-energy levels (downward progress), while the ground state anchor remains highly stable (ground-state stability).

We now extend the one-stage descent potential contraction bounds over repeated descent stages.
The analysis above treated an anchor state with a fixed energy $\mathcal E_j$, whereas the actual anchor state after $t$ decent stages, say $\rho_t$, need not be an energy eigenstate.
Because the anchor register is initialized in $|\psi\rangle$ at the beginning of the SD algorithm (see Step 3 of \Cref{alg:spectral-descent}) and every accepted candidate is obtained from a fresh copy of $|\psi\rangle$, every anchor state produced by SD remains supported on the occupied energy levels of $|\psi\rangle$.

Recall that $\Pi_j$ denotes the projector onto the eigenspace with energy $\mathcal E_j$.
This means that the quantity $\Tr(\Pi_j\rho_t)$ is the population of the anchor state at energy level $\mathcal E_j$.
Equivalently, it is the probability that a projective measurement of $H$ on $\rho_t$ returns the energy $\mathcal E_j$.
We define the mixed-state descent potential $V_t$ as the average of the potentials $W_j$ over these energy-level populations:
\begin{equation}
V_t \coloneqq \sum_j W_j \Tr(\Pi_j\rho_t). \label{eq:Vt}
\end{equation}
Because $W_0=0$, there is no ground-state population term in $V_t$.
Moreover, since $W_j\ge\eta$ for every $j>0$, we have
\begin{equation}
V_t \ge \eta \sum_{j>0}\Tr(\Pi_j\rho_t). \label{eq:Vt-excited-population}
\end{equation}
Thus, a small value of $V_t$ guarantees that the anchor state has little population outside the ground-state eigenspace.

The energy comparator $C_\zeta$ is block diagonal in the energy eigenspaces (see~\eqref{eq:comparator-action}), so the evolution of the energy-level populations after one decent stage can be analyzed by averaging the fixed-energy bounds derived above.
As shown in~\Cref{app:measured-potential-recursion} in detail, this gives the following recurrence relation:
\begin{equation}
V_{t+1} \le \frac{43}{64}V_t + M\zeta + \frac{\zeta}{\eta(1-\zeta)}. \label{eq:V-recursion-measured}
\end{equation}

To drive the final error below the target infidelity $\xi$, we choose the comparator $\zeta$ in the following way:
\begin{equation}
\zeta \coloneqq \frac{\eta^2\xi}{256}. \label{eq:zeta-measured}
\end{equation}
Since $\eta,\xi\le1$, this satisfies $\zeta<1/4$. 
Moreover, $1-\zeta\ge3/4$, and therefore
\begin{equation}
\frac{\zeta}{32\eta(1-\zeta)} \le \frac{\eta\xi}{6144} \le \frac1{16},
\end{equation}
so the assumption of \Cref{lem:successful-candidate-potential} is satisfied.

With $\zeta$ defined, we next bound the additive errors, that is, the second and third terms on the right-hand side of the inequality in~\eqref{eq:V-recursion-measured}.
From~\eqref{eq:M-choice}, we have $M\le \log4/[\eta(1-\zeta)]+1<3/\eta$. 
The cumulative false-witness probability (second term in~\eqref{eq:V-recursion-measured}) is therefore bounded by
\begin{equation}
M\zeta < \frac{3\eta\xi}{256}, \label{eq:delta-g-final}
\end{equation}
while the third term in~\eqref{eq:V-recursion-measured} is bounded by
\begin{equation}
\frac{\zeta}{\eta(1-\zeta)} \le \frac{\eta\xi}{192}. \label{eq:false-candidate-final}
\end{equation}

As mentioned before, the anchor register is initialized in the state $|\psi\rangle$. Because $W_j \le 1$ for all levels, the initial mixed-state potential is trivially bounded by $V_0 \le 1$. 
Iterating the recurrence relation in~\eqref{eq:V-recursion-measured} over $R$ consecutive descent stages yields
\begin{align}
V_R &\le \left(\frac{43}{64}\right)^R V_0 + \sum_{t=0}^{R-1} \left(\frac{43}{64}\right)^t \left( M\zeta + \frac{\zeta}{\eta(1-\zeta)} \right) \nonumber \\
&\le \left(\frac{43}{64}\right)^R + \frac{1}{ 1-\frac{43}{64} } \left( M\zeta + \frac{\zeta}{\eta(1-\zeta)} \right) \nonumber \\
&= \left(\frac{43}{64}\right)^R + \frac{64}{21} \left( M\zeta + \frac{\zeta}{\eta(1-\zeta)} \right). \label{eq:VR-bound}
\end{align}
Using the bounds in~\eqref{eq:delta-g-final} and~\eqref{eq:false-candidate-final}, the second term on the right-hand side of the above inequality is at most $13\eta\xi/252$.

We now choose the total number of descent stages $R$:
\begin{equation}
R \coloneqq \left\lceil \frac{ \log\!\left(2/(\eta\xi)\right) }{ \log(64/43) } \right\rceil. \label{eq:R-measured}
\end{equation}
This choice explicitly guarantees that $(43/64)^R \le \eta\xi/2$, and therefore
\begin{equation}
V_R \le \left( \frac12+\frac{13}{252} \right)\eta\xi = \frac{139}{252}\eta\xi < \eta\xi. \label{eq:VR-final}
\end{equation}

Finally, we map this bound on the potential directly to the target ground-state fidelity.
For every excited level $j>0$, recall that we have $W_j\ge\eta$ from~\eqref{eq:W-excited-lower}. 
Substituting this into the definition of $V_R$ gives:
\begin{equation}
V_R = \sum_{j>0}W_j\Tr(\Pi_j\rho_R) \ge \eta\sum_{j>0}\Tr(\Pi_j\rho_R).
\end{equation}
Together with~\eqref{eq:VR-final}, this implies $\sum_{j>0}\Tr(\Pi_j\rho_R)<\xi$. 
Since the ground state is nondegenerate,
\begin{equation}
\langle \phi_0|\rho_R|\phi_0\rangle \ge 1-\xi. \label{eq:measured-final-fidelity}
\end{equation}
This concludes the convergence analysis for \Cref{alg:spectral-descent}.

\subsection{Main Result}
\label{sec:measured-resources}

\begin{theorem}[Ground-State Preparation via Spectral Descent]
\label{thm:measured-gsp}
Consider an instance of the GSP problem (Problem~\ref{prob:gsp}), and suppose that fresh copies of the input state $|\psi\rangle$ can be prepared.
Choose the elementary evolution time $\tau$ as in \eqref{eq:tau} and define $U_H\coloneqq\e^{-\iu H\tau}$.
Then the SD algorithm (Algorithm~\ref{alg:spectral-descent}), with the parameter choices in
\eqref{eq:zeta-measured}, \eqref{eq:M-choice}, and \eqref{eq:R-measured},
outputs a state $\rho_R$ satisfying
\begin{equation}
\langle\phi_0|\rho_R|\phi_0\rangle\ge1-\xi.
\end{equation}
The SD algorithm uses
\begin{enumerate}[label=(\roman*),leftmargin=*]
\item
    $T_H^{\rm SD} = O\!\left( \frac1{\eta\gamma} \log^2\frac1{\eta\xi} \right) = \widetilde O\!\left( \frac1{\eta\gamma} \right)$
    total Hamiltonian-evolution time;
    \item
    $N_H^{\rm SD} = O\!\left( \frac{\alpha_H}{\eta \gamma} \log^2\frac1{\eta\xi} \right) = \tilde O\!\left( \frac{\alpha_H}{\eta \gamma} \right)$
    queries to the uncontrolled oracles $U_H$ and $U_H^\dagger$; and

    \item
    $N_\psi^{\rm SD} = O\!\left( \frac1\eta \log\frac{2}{\eta\xi} \right) = \tilde O\!\left( \frac1\eta \right)$
    fresh copies of $|\psi\rangle$.
\end{enumerate}
Crucially, it does not require access to $\mathrm{ctrl}\text{-}U_H$.
\end{theorem}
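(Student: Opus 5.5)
The proof has two parts: correctness, which is already done, and resource counting, which is routine bookkeeping. For correctness, the convergence analysis of \Cref{sec:convergence-sd} shows that with $\zeta$, $M$, $R$ as in \eqref{eq:zeta-measured}, \eqref{eq:M-choice}, \eqref{eq:R-measured}, the recurrence \eqref{eq:V-recursion-measured} iterates to $V_R<\eta\xi$. Combined with $W_j\ge\eta$ for $j>0$, this gives \eqref{eq:measured-final-fidelity}. I will cite that chain, noting that its inputs are \Cref{thm:qsp-filter}, \Cref{lem:measured-direction-preserved}, \Cref{lem:successful-candidate-potential}, and the mixed-state recurrence of \Cref{app:measured-potential-recursion}. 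One point to check is that the choice of $\tau$ in \eqref{eq:tau} is exactly the one for which \Cref{thm:qsp-filter} applies. Since $\alpha_H\ge\|H\|$, and the gap satisfies $\gamma\le 2\alpha_H$ whenever $d\ge2$, we get $\tau=\Theta(1/\alpha_H)$, consistent with the statement.

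For the resource count, the algorithm runs at most $RM$ comparison trials. Each trial makes one call to $C_\zeta$ and consumes one fresh copy of $|\psi\rangle$; one further copy initializes the anchor. So $N_\psi^{\rm SD}\le RM+1$. Using $M<3/\eta$ and $R=O(\log(2/(\eta\xi)))$ gives (iii).

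Next, \Cref{thm:qsp-filter} says each $C_\zeta$ uses $d=O((\alpha_H/\gamma+1)\log(1/\zeta))$ SWAP-echo queries. By \Cref{lem:exact-signal} and \Cref{fig:echo}, each SWAP echo applies exactly one $U_H$ on $A$ and one $U_H^\dagger$ on $B$, plus two cSWAPs and a single-qubit phase. So a trial costs $2d$ oracle queries and $O(\gamma^{-1}\log(1/\zeta))$ evolution time. Since $\zeta=\eta^2\xi/256$, we have $\log(1/\zeta)=O(\log(1/(\eta\xi)))$. Multiplying by $RM$ gives
\[
N_H^{\rm SD}=O\!\left(\frac1\eta\log\frac{1}{\eta\xi}\cdot\frac{\alpha_H}{\gamma}\log\frac{1}{\eta\xi}\right),
\]
where $\alpha_H/\gamma+1=O(\alpha_H/\gamma)$ because $\gamma\le2\alpha_H$. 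This is (ii). For (i), $T_H=N_H\tau$ with $\tau=\Theta(1/\alpha_H)$ yields $O(\eta^{-1}\gamma^{-1}\log^2(1/(\eta\xi)))$. This could also be seen directly by multiplying $T_{\rm cmp}$ by $RM$.

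Finally, the circuit never uses ctrl-$U_H$. The comparator is built from \Cref{lem:laurent-qsp-comparator}, which queries $\SWAPECHO_\gamma(\tau)$ itself, never a controlled copy. The echo uses only uncontrolled $U_H$ and $U_H^\dagger$ alongside cSWAP gates. The remaining operations are measurement, classically controlled SWAP, and discarding. The main step needing care is confirming that the early termination of a stage and the measurement-dependent swaps do not invalidate the recurrence. That issue is handled by \Cref{lem:measured-direction-preserved} and the block-diagonal argument of \Cref{app:measured-potential-recursion}, so here it only needs to be cited. Everything else is multiplication of the bounds above.
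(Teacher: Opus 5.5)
Your proposal is correct and follows essentially the same route as the paper: it takes the fidelity bound from the convergence analysis of \Cref{sec:convergence-sd}, bounds the cost as the $RM$ comparator calls times the per-call cost from \Cref{thm:qsp-filter}, and counts $1+RM$ fresh copies. Your explicit use of $\gamma\le E_1-E_0\le 2\alpha_H$ does two things the paper leaves implicit: it absorbs the additive $+1$ in $d=O((\alpha_H/\gamma+1)\log(1/\zeta))$, and it confirms that $\tau=\Theta(1/\alpha_H)$.
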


\begin{proof}
From~\eqref{eq:M-choice}, $M=O(1/\eta)$, while~\eqref{eq:R-measured} gives $R=O(\log(1/(\eta\xi)))$.
The algorithm therefore makes at most $MR = O(1/\eta \log (1/(\eta\xi)) )$ calls to the comparator $C_\zeta$.
Since $\zeta=\eta^2\xi/256$ (see~\eqref{eq:zeta-measured}), we have $\log(1/\zeta)=O(\log(1/(\eta\xi)))$. 
By \Cref{thm:qsp-filter}, one comparator call requires Hamiltonian-evolution time $O(1/\gamma \log 1/(\eta\xi)).$
Since we have $O(1/\eta \log (1/(\eta\xi)) )$ comparator calls in total, we get the total Hamiltonian-evolution time:
\begin{equation}
T_H^{\rm SD} = O\!\left( \frac{1}{\eta\gamma} \log^2\frac{1}{\eta\xi} \right) = \widetilde O\!\left( \frac{1}{\eta\gamma} \right), \label{eq:measured-H-time}
\end{equation}
which proves item~$(i)$ of the theorem statement

From~\eqref{eq:qsp-degree}, one comparator uses $d=O((\alpha_H/\gamma+1)\log(1/(\eta\xi)))$ queries to $\SWAPECHO_\gamma(\tau)$, and hence the same number, up to a constant
factor, of queries to $U_H$ and $U_H^\dagger$. Multiplying this cost
by the number of comparator calls gives
\begin{equation}
N_H^{\rm SD} = O\!\left( \frac{\alpha_H}{\eta \gamma} \log^2\frac1{\eta\xi} \right) = \tilde O\!\left( \frac{\alpha_H}{\eta \gamma} \right),
\end{equation}
which proves item~$(ii)$ of the theorem statement.

Finally, one copy of $|\psi\rangle$ initializes the anchor register,
and each comparison trial uses one fresh candidate state. Thus
\begin{equation}
N_\psi^{\rm SD} \le 1+MR = O\!\left( \frac1\eta \log\frac{2}{\eta\xi} \right),
\end{equation}
which proves item~$(iii)$ of the theorem statement.
\end{proof}

\section{Coherent Spectral Descent}
\label{sec:coherent-descent}

We now use the energy comparator $C_\zeta$ to construct our second GSP algorithm, called Coherent Spectral Descent (CSD).
We will show that CSD achieves $\widetilde O(1/(\gamma\sqrt{\eta}))$ total Hamiltonian-evolution time using only uncontrolled Hamiltonian evolution, answering the question posed in the introduction.
Recall that unlike SD, this algorithm assumes access to an input-state preparation unitary $U_\psi$ satisfying $U_\psi|0^n\rangle=|\psi\rangle$ and to its inverse $U_\psi^\dagger$.

CSD maintains the exact same register structure as SD: an anchor register $A$ that stores the current anchor state, a candidate register $B$ that stores the candidate state, and a single witness qubit $q$.
At the beginning of each descent stage, the anchor register $A$ contains the current anchor state. 
The candidate register $B$ and the witness qubit $q$ are initialized in the states $|0^n\rangle$ and $|0\rangle$, respectively. 
The algorithm then applies the unitary $U_\psi$ to prepare the candidate state $|\psi\rangle$ in register $B$, and subsequently applies the energy comparator $C_\zeta$ jointly to the register $qAB$.

Recall from~\eqref{eq:lambda-excited-lower} that for an excited anchor state with energy $\mathcal E_j$, this energy comparator produces the witness outcome $q=1$ with a probability bounded by $\lambda_j \ge \eta(1-\zeta)$.
SD amplifies this initial probability to a constant (at least $3/4$) by performing $O(1/\eta)$ independent comparison trials, discarding and preparing a fresh candidate state after every unsuccessful trial.
CSD instead keeps $B$ and $q$ coherent throughout the stage and applies fixed-point amplitude amplification (FPAA)~\cite{Yoder2014} to the witness subspace.
This amplifies the witness probability to at least $3/4$ using $O(1/\sqrt\eta)$ coherent applications of $C_\zeta$, $C_\zeta^\dagger$, $U_\psi$, and $U_\psi^\dagger$ as we will see in what follows.
After the amplification, CSD performs the same anchor update as SD.
The only difference between the two algorithms is that the stronger input-state access allows CSD to substitute repeated comparison trials with coherent amplitude amplification, yielding a quadratic improvement in the algorithmic dependence on the overlap $\eta$.

\subsection{Coherent Spectral Descent Algorithm}
\label{sec:coherent-algorithm}

The complete ground-state preparation procedure via CSD is given in~\Cref{alg:coherent-descent}.
To illustrate, one descent stage is shown in \Cref{fig:coherent-descent-stage}.

\begin{figure}[h]
\centering
\begin{quantikz}[row sep=0.45cm,column sep=0.40cm]
\lstick{$q:\ |0\rangle$}
& \gate[3]{\text{FPAA}}
& \meter{}
& \cw
& \cwbend{1}
& \setwiretype{n}
\\
\lstick{$A:\ \rho_{t-1}$}
& \qw
& \qw
& \qw
& \swap{1}
& \rstick{$\rho_t$}\qw
\\
\lstick{$B:\ |0^n\rangle$}
& \qw
& \qw
& \qw
& \targX{}
& \gate[
    style={
        fill=gray!25,
        minimum width=0.48cm,
        minimum height=0.48cm,
        inner sep=0pt
    }
]{\phantom{X}}
\end{quantikz}
\caption{
The $t$-th descent stage of CSD.
The Length-$L$ FPAA sequence described in \Cref{sec:coherent-witness-search} coherently amplifies the witness subspace, after which the witness qubit is measured.
If the measurement outcome is $q=1$, an ordinary SWAP places the accepted candidate state in the anchor register; if the outcome is $q=0$, no SWAP is applied.
The candidate register and witness qubit are then discarded or reset, while the anchor register is retained for the next stage.
}
\label{fig:coherent-descent-stage}
\end{figure}

The specific parameter choices defining the comparator error $\zeta$, the amplification sequence length $L$, and the total number of descent stages $R$ appearing in Step 1 of the algorithm will be derived and justified in Sections~\ref{sec:coherent-witness-search} and~\ref{sec:coherent-convergence}.

\begin{algorithm}[t]
\caption{Coherent Spectral Descent (CSD)}
\label{alg:coherent-descent}
\begin{algorithmic}[1]
\Statex \hspace{-\algorithmicindent} \textbf{Input:} Input-state preparation unitary $U_\psi$ and its inverse $U_\psi^\dagger$; known overlap lower bound $\eta$; known gap lower bound $\gamma$; target infidelity $\xi\in(0,1)$.

\Statex \hspace{-\algorithmicindent} \textbf{Output:} An anchor state $\rho_R$ satisfying $\langle\phi_0|\rho_R|\phi_0\rangle\ge1-\xi$.

\State Set
\begin{equation}
\zeta\leftarrow\frac{\eta^2\xi}{256}, \qquad L\leftarrow \text{smallest odd integer } \ge \frac{\operatorname{arcosh}2} {\sqrt{\eta(1-\zeta)}}, \qquad R\leftarrow \left\lceil \frac{\log\!\left(2/(\eta\xi)\right)} {\log(64/43)} \right\rceil \notag.
\end{equation}
\State Construct the energy comparator $C_\zeta$ from \Cref{sec:qsp-comparator}.
\State Prepare the anchor register $A\leftarrow U_\psi|0^n\rangle$.

\For{$t=1,\ldots,R$}
    \State Initialize $B\leftarrow|0^n\rangle$ and $q\leftarrow|0\rangle$.
    \State Apply $V=C_\zeta(I_q\otimes I_A\otimes U_\psi)$, then apply the $\ell=(L-1)/2$ FPAA iterates in~\eqref{eq:FPAA-sequence}.
    \State Measure $q$ in the computational basis $\{|0\rangle, |1\rangle\}$.
    \If{$q=1$}
        \State Apply $\operatorname{SWAP}_{AB}$, so that the accepted candidate becomes the new anchor.
    \EndIf
    \State Reset registers $B$ and $q$
\EndFor

\State \Return the state $\rho_R$ in the anchor register $A$.
\end{algorithmic}
\end{algorithm}

\subsection{Convergence to the Ground State}
\label{sec:coherent-convergence}

We now prove that the output state $\rho_R$ of \Cref{alg:coherent-descent} has ground-state fidelity of at least $1-\xi$.
The convergence analysis follows the same overall structure as the convergence analysis for SD established in Section~\ref{sec:fresh-descent}.

As before, we first suppose that the anchor state is supported entirely in an eigenspace with energy $\mathcal E_j$ and analyze one descent stage.
There are two CSD-specific questions that we will address in what follows.
First, we must determine how FPAA amplifies the probability of finding a witness.
Second, we must verify that amplifying the witness branch does not change the conditional distribution of the candidate energy once a witness is obtained.
After establishing these two facts, the descent-potential analysis from SD can be reused directly.

\subsubsection{Probability of Finding a Witness in a Descent Stage}
\label{sec:coherent-witness-search}

Fix an anchor state $|\varphi_j\rangle$ supported entirely in the eigenspace with energy $\mathcal E_j$.
At the beginning of a descent stage, the candidate register $B$ is initialized in $|0^n\rangle$ and the witness qubit $q$ is initialized in $|0\rangle$.
We first prepare the candidate state $|\psi\rangle_B$ by applying $U_\psi$ to register $B$, and then apply the comparator $C_\zeta$ to the joint system $qAB$.
Denote this preparation unitary by
\begin{equation}
V\coloneqq C_\zeta\bigl(I_q\otimes I_A\otimes U_\psi\bigr). \label{eq:A-search}
\end{equation}
The resulting state is
\begin{equation}
|\Phi_j\rangle \coloneqq V |0\rangle_q |\varphi_j\rangle_A |0^n\rangle_B . \label{eq:fpaa-prepared-state}
\end{equation}

The witness qubit $q$ decomposes $|\Phi_j\rangle$ into two orthogonal components according to whether we get $q=0$ or $q=1$ after measuring this qubit in the computational basis $\{|0\rangle, |1\rangle\}$.
We may therefore write the above state as follows:
\begin{equation}
|\Phi_j\rangle = \sqrt{1-\lambda_j} |\Phi_j^{(0)}\rangle + \sqrt{\lambda_j} |\Phi_j^{(1)}\rangle, \label{eq:fpaa-witness-decomposition}
\end{equation}
where $|\Phi_j^{(0)}\rangle$ is a quantum state supported on the no-witness subspace $q=0$, while $|\Phi_j^{(1)}\rangle$ is a quantum state supported on the witness subspace $q=1$.
So in the terminology of amplitude amplification, $|\Phi_j^{(1)}\rangle$ is a "good" state and $|\Phi_j^{(0)}\rangle$ is a "bad" state.
Consequently, measuring $q$ in the computational basis $\{|0\rangle, |1\rangle\}$ gives the witness outcome $q=1$ with probability $\lambda_j$.

Recall from~\eqref{eq:lambda-j} that
$\lambda_j=\sum_k \pi_k r_{jk}$.
Also recall the bounds on $\lambda_j$ derived in~\eqref{eq:lambda-separation} for two different cases: $\lambda_0\le\zeta$ for a ground-state anchor and $\lambda_j\ge\eta(1-\zeta)$ for every $j>0$.
Thus, if the anchor is excited ($j > 0$), the state $|\Phi_j\rangle$ already contains a nonzero witness component $|\Phi_j^{(1)}\rangle$ with squared amplitude at least $\eta(1-\zeta)$.
The purpose of FPAA is to increase the amplitude of this $q=1$ component, so that measuring $q$ in the computational basis $\{|0\rangle, |1\rangle\}$ gives $q=1$ with constant probability.
Importantly, the amplification procedure uses only the known lower bound $\eta(1-\zeta)$ and does not require knowledge of the actual value of $\lambda_j$.
This is why we use FPAA rather than the standard amplitude amplification.

We now describe the two generalized reflections required to construct the FPAA sequence.
Define the projector onto the initialized candidate and witness registers by
\begin{equation}
P_{\rm init} \coloneqq |0\rangle\langle0|_q\otimes I_A\otimes|0^n\rangle\langle0^n|_B. \label{eq:fpaa-initial-projector}
\end{equation}
Then the two generalized reflections are
\begin{align}
S_{\rm src}(\alpha) & \coloneqq V\bigl[I_{qAB}-(1-\e^{-\iu\alpha})P_{\rm init}\bigr]V^\dagger \label{eq:Ssrc} \\
S_{\rm wit}(\beta) & \coloneqq I_{qAB}-(1-\e^{\iu\beta}) \bigl( |1\rangle\langle1|_q \otimes I_{AB} \bigr), \label{eq:Swit}
\end{align}
The operator $S_{\rm src}(\alpha)$ applies the phase $\e^{-\iu\alpha}$ to the range of $VP_{\rm init}V^\dagger$ and acts as the identity on its orthogonal complement.
The identity on register $A$ in $P_{\rm init}$ ensures that this operation does not require knowledge of, or a reflection about, the anchor state.
For the fixed anchor state $|\varphi_j\rangle_A$, the comparator preserves this state and acts only on $qB$, so $V$ preserves the subspace in which the anchor is fixed to $|\varphi_j\rangle_A$.
Restricted to this subspace, $VP_{\rm init}V^\dagger$ is exactly $|\Phi_j\rangle\langle\Phi_j|$.
Thus $S_{\rm src}(\alpha)$ acts as the required source-state reflection within this subspace and preserves $\operatorname{span}\{|\Phi_j^{(0)}\rangle,|\Phi_j^{(1)}\rangle\}$.
The operator $S_{\rm wit}(\beta)$ applies the phase $\e^{\iu\beta}$ to the witness subspace $q=1$ and acts as the identity on the no-witness subspace $q=0$.
In particular, it applies the phase $\e^{\iu\beta}$ to the good state $|\Phi_j^{(1)}\rangle$ and leaves the bad state $|\Phi_j^{(0)}\rangle$ unchanged.
Note that implementing $S_{\rm wit}(\beta)$ is straightforward than since it is just a phase gate $P(\beta)$, as defined below, applied to the witness qubit $q$ and identity on the rest:
\begin{equation}
S_{\rm wit}(\beta) = \left(|0\rangle\langle0|_q+\e^{\iu\beta}|1\rangle\langle1|_q\right)\otimes I_{AB} \eqqcolon P(\beta)_q \otimes I_{AB}
\end{equation}

A single FPAA iterate is then
\begin{equation}
G(\alpha,\beta) \coloneqq -S_{\rm src}(\alpha)S_{\rm wit}(\beta). \label{eq:fpaa-grover-iterate}
\end{equation}
In Figure~\ref{fig:fpaa-iterate}, we show how to implement a single FPAA iterate using unitaries $C_\zeta$, $C_\zeta^\dagger$, $U_\psi$, and $U_\psi^\dagger$.
\begin{figure}[t]
    \centering
    \begin{quantikz}[row sep=0.45cm, column sep=0.32cm]
        \lstick{$q$}
        & \gate{P(\beta)}
          \gategroup[
              wires=4,
              steps=1,
              style={dashed,rounded corners,inner xsep=2pt,inner ysep=4pt},
              background,
              label style={label position=above,anchor=south,yshift=0.2cm}
          ]{{\small $S_{\rm wit}(\beta)$}}
        & \qw
        & \gate[3]{C_\zeta^\dagger}
          \gategroup[
              wires=4,
              steps=9,
              style={dashed,rounded corners,inner xsep=2pt,inner ysep=4pt},
              background,
              label style={label position=above,anchor=south,yshift=0.2cm}
          ]{{\small $S_{\rm src}(\alpha)$}}
        & \qw
        & \gate{X}
        & \ctrl{3}
        & \qw
        & \ctrl{3}
        & \gate{X}
        & \qw
        & \gate[3]{C_\zeta}
        & \qw
        \\
        \lstick{$A$}
        & \qw
        & \qw
        &
        & \qw
        & \qw
        & \qw
        & \qw
        & \qw
        & \qw
        & \qw
        &
        & \qw
        \\
        \lstick{$B$}
        & \qw
        & \qw
        &
        & \gate{U_\psi^\dagger}
        & \gate{X^{\otimes n}}
        & \ctrl{1}
        & \qw
        & \ctrl{1}
        & \gate{X^{\otimes n}}
        & \gate{U_\psi}
        &
        & \qw
        \\
        \lstick{$|0\rangle_a$}
        & \qw
        & \qw
        & \qw
        & \qw
        & \qw
        & \targ{}
        & \gate{P(-\alpha)}
        & \targ{}
        & \qw
        & \qw
        & \qw
        & \qw
    \end{quantikz}
    \caption{
    Circuit implementation of one generalized FPAA iterate
    $G(\alpha,\beta)=-S_{\rm src}(\alpha)S_{\rm wit}(\beta)$,
    up to an irrelevant global phase.
    The left dashed box implements the witness reflection
    $S_{\rm wit}(\beta)$ by applying the single-qubit phase gate $P(\beta)=|0\rangle\langle0|+\e^{\iu\beta}|1\rangle\langle1|$
    to the witness qubit $q$.
    The right dashed box implements the source reflection
    $S_{\rm src}(\alpha)$.
    The control shown on register $B$ is shorthand for controls on
    all $n$ qubits of $B$, and $X$ denotes the Pauli-$X$ gate.
    Thus, one FPAA iterate uses one application each of
    $C_\zeta$, $C_\zeta^\dagger$, $U_\psi$, and $U_\psi^\dagger$,
    together with one ancilla qubit.
    }
    \label{fig:fpaa-iterate}
\end{figure}

We now specify how many such iterates are used.
Following Ref.~\cite{Yoder2014}, the FPAA sequence is parameterized by an odd integer $L=2\ell+1$,
where $\ell$ is the number of FPAA iterates.
For a given $L$, the complete FPAA sequence applies
\begin{equation}
\label{eq:FPAA-sequence} G(\alpha_\ell,\beta_\ell) \cdots G(\alpha_2,\beta_2) G(\alpha_1,\beta_1)
\end{equation}
to the state $|\Phi_j\rangle$.
The phase angles $\{\alpha_m,\beta_m\}_{m=1}^{\ell}$ are chosen according to the FPAA construction of Ref.~\cite{Yoder2014}.

For our case, we choose $L$ to be the smallest odd integer satisfying
\begin{equation}
L \ge \frac{\operatorname{arcosh}2} {\sqrt{\eta(1-\zeta)}}. \label{eq:L-choice}
\end{equation}
.
With this choice, the following lemma gives the amplification guarantee.

\begin{lemma}
\label{lem:coherent-witness-amplification}
Let $L$ be the smallest odd integer satisfying~\eqref{eq:L-choice}, and let $s_j^{\rm CSD}$ denote the probability that measuring $q$ after the FPAA sequence gives $q=1$ when the anchor has energy $\mathcal E_j$.
Then
\begin{equation}
s_j^{\rm CSD} \ge \frac34, \qquad j>0. \label{eq:excited-coherent-success}
\end{equation}
If $L^2\zeta\le1/2$, then
\begin{equation}
s_0^{\rm CSD} \le 2L^2\zeta. \label{eq:coherent-ground-witness}
\end{equation}
\end{lemma}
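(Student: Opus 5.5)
The plan is to reduce to the two-dimensional analysis of Yoder, Low and Chuang, applied separately for each fixed anchor eigenspace. Fix an anchor $|\varphi_j\rangle_A$ in the eigenspace of energy $\mathcal E_j$. As noted before the lemma, $V$ preserves the subspace where the anchor equals $|\varphi_j\rangle_A$. On that subspace, $S_{\rm src}(\alpha)$ acts as $I-(1-\e^{-\iu\alpha})|\Phi_j\rangle\langle\Phi_j|$, and $S_{\rm wit}(\beta)$ multiplies the good component $|\Phi_j^{(1)}\rangle$ by the phase $\e^{\iu\beta}$. Hence every iterate $G(\alpha_m,\beta_m)$ preserves $\operatorname{span}\{|\Phi_j^{(0)}\rangle,|\Phi_j^{(1)}\rangle\}$. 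On this span, the sequence \eqref{eq:FPAA-sequence} applied to $|\Phi_j\rangle=\sqrt{1-\lambda_j}|\Phi_j^{(0)}\rangle+\sqrt{\lambda_j}|\Phi_j^{(1)}\rangle$ is exactly the generalized Grover sequence of Ref.~\cite{Yoder2014}, with initial success probability $\lambda_j$.

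I then invoke the closed form from that reference. With their phase choice and length $L$, the final failure probability is
\begin{equation}
1-s_j^{\rm CSD} = \delta^2\, T_L^2\!\left(T_{1/L}(1/\delta)\sqrt{1-\lambda_j}\right),
\end{equation}
where $T_L$ is the Chebyshev polynomial and $\delta\in(0,1)$ is the design parameter. I take $\delta=1/2$, so that $T_{1/L}(1/\delta)=\cosh(\operatorname{arcosh}(2)/L)$. The guarantee of Ref.~\cite{Yoder2014} gives $1-s_j\le\delta^2=1/4$ whenever $\lambda_j\ge w$, where $w$ satisfies $L\ge \operatorname{arcosh}(1/\delta)/\sqrt{w}$. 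This is precisely \eqref{eq:L-choice} with $w=\eta(1-\zeta)$. For $j>0$ we have $\lambda_j\ge\eta(1-\zeta)$ by \eqref{eq:lambda-separation}, which gives \eqref{eq:excited-coherent-success}. If needed, I would verify the small step that $L\ge\operatorname{arcosh}(2)/\sqrt w$ implies $\cosh(\operatorname{arcosh}(2)/L)\sqrt{1-w}\le1$, so that the Chebyshev argument lies in $[-1,1]$. This follows from $\cosh x\le 1/\sqrt{1-x^2}$-type estimates, or equivalently from the derivation in the reference.

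For the ground-state bound I need an upper estimate on the success probability when $\lambda_0\le\zeta$ is small. I would use the general fact that the amplified good amplitude is an odd polynomial of degree $L$ in $\sqrt{\lambda}$ with the sequence unitary. Concretely, each of the $\ell$ iterates can increase the good amplitude by at most a bounded additive amount proportional to the current amplitude. Iterating this gives amplitude at most $L\sqrt{\lambda_0}$, more carefully at most $(2\ell+1)\sqrt{\lambda_0}$, and hence $s_0\le L^2\lambda_0\le L^2\zeta$.

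The robust route is the following. Write the output good amplitude as $\langle\Phi^{(1)}|\prod G|\Phi\rangle$. Decompose $|\Phi\rangle=|\Phi^{(0)}\rangle+O(\sqrt{\lambda})$ and observe the key structural point: when $\lambda=0$, the span degenerates, and $G$ maps $|\Phi^{(0)}\rangle=|\Phi\rangle$ to itself up to phase. The difference between the true sequence and this degenerate one is then controlled by a telescoping (hybrid) argument over the $\ell$ iterates. Each iterate contributes an error of at most $2\sqrt{\lambda_0}$, since $\||\Phi\rangle\langle\Phi|-|\Phi^{(0)}\rangle\langle\Phi^{(0)}|\|\le2\sqrt{\lambda_0}$. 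Adding the initial $\sqrt{\lambda_0}$ gives a good amplitude of at most $(2\ell+1)\sqrt{\lambda_0}\cdot c$. Squaring yields $s_0\le 2L^2\zeta$.

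The factor 2 and the hypothesis $L^2\zeta\le1/2$ would absorb constants. Alternatively, I would use the exact Chebyshev formula, $s_0=1-\delta^2T_L^2(\cdot)$, and Taylor-expand near $\lambda=0$. There $s_0$ vanishes, and its derivative is bounded by $L^2$ times the derivative of $T_L^2$ via Markov's inequality; the condition $L^2\zeta\le 1/2$ keeps the expansion valid. The main obstacle is getting this ground-state constant cleanly. The hybrid argument is where I expect care is needed, since each reflection error must be measured relative to the degenerate $\lambda=0$ sequence rather than naively.
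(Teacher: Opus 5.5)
Your excited-anchor argument is the paper's. You use the Yoder--Low--Chuang closed form with $\delta=1/2$ and reduce $\lambda_j\ge 1-T_{1/L}(2)^{-2}=\tanh^2(\operatorname{arcosh}2/L)$ to \eqref{eq:L-choice} via $\tanh x\le x$. Your estimate $\cosh x\le 1/\sqrt{1-x^2}$ is equivalent to that inequality.

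For the ground-state bound you take a genuinely different route. The paper stays with the closed form. It sets $Q_L(x)=\tfrac12T_L(T_{1/L}(2)x)$ and checks that $|Q_L|\le1$ on $[-1,1]$ and $Q_L(1)=1$. Markov's inequality then gives $|Q_L'|\le L^2$, the mean value theorem gives $1-Q_L(\sqrt{1-\lambda})\le L^2\lambda$, and $1-Q^2\le 2(1-Q)$ finishes. Your Taylor-expansion alternative, as phrased (``derivative bounded by $L^2$ times the derivative of $T_L^2$''), is garbled; its clean form is exactly that argument.

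Your hybrid argument never touches the closed form and does work, but the constants need repair.
\begin{itemize}
\item The operator norm of a difference of two rank-one projectors is $\sqrt{1-|\langle\Phi_0|\Phi_0^{(0)}\rangle|^2}=\sqrt{\lambda_0}$, not $2\sqrt{\lambda_0}$. The factor $2$ per iterate comes from $|1-\e^{-\iu\alpha}|\le 2$.
\item Taken literally, your stated projector bound gives $4\sqrt{\lambda_0}$ per iterate. That yields an amplitude of about $2L\sqrt{\lambda_0}$, which misses the claimed constant.
\item With the exact norm, write $|\Phi_0\rangle=\sqrt{1-\lambda_0}|\Phi_0^{(0)}\rangle+\sqrt{\lambda_0}|\Phi_0^{(1)}\rangle$. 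Let $\tilde G$ be the iterate with the source reflection taken about $|\Phi_0^{(0)}\rangle$. Each $\tilde G$ fixes $|\Phi_0^{(0)}\rangle$ up to phase, and $\|G-\tilde G\|\le 2\sqrt{\lambda_0}$ on the fixed-anchor subspace. So the $q=1$ amplitude after $\ell$ iterates is at most $\sqrt{1-\lambda_0}\cdot 2\ell\sqrt{\lambda_0}+\sqrt{\lambda_0}\le L\sqrt{\lambda_0}$. Hence $s_0^{\rm CSD}\le L^2\zeta$, i.e.\ your $c$ equals $1$.
\item Drop your first heuristic. An increase ``proportional to the current amplitude'' would compound geometrically. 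The correct statement is that each iterate adds at most $2\sqrt{\lambda_0}$, independent of the current amplitude.
\end{itemize}

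What each route buys: yours is phase-agnostic, holds for every $\lambda\in[0,1]$, needs no hypothesis $L^2\zeta\le 1/2$, and halves the constant. It also directly gives the all-$\lambda$ bound that the paper later needs in \eqref{eq:band-coherent-leakage}, where the paper patches the case $L^2\lambda>1/2$ separately. The paper's argument is specific to the FPAA polynomial, but it is short once the closed form is in hand.
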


\begin{proof}
See~\Cref{app:coherent-fixed-point-search}.
\end{proof}

For $\zeta<1/4$, the smallest odd number $L$ satisfying~\eqref{eq:L-choice} satisfies $L\le4/\sqrt\eta$.
Hence a single CSD stage uses $O(1/\sqrt{\eta})$ number of FPAA iterates, and therefore $O(1/\sqrt{\eta})$ applications of $C_\zeta$, $C_\zeta^\dagger$, $U_\psi$, and $U_\psi^\dagger$.
At the same time, the false-witness probability of a ground-state anchor grows from at most $\zeta$ to at most $O(L^2\zeta)=O(\zeta/\eta)$.
These are the coherent counterparts of the stage-level bounds in~\eqref{eq:excited-measured-success} and~\eqref{eq:delta-g-meas} for SD.

Finding a witness does not guarantee that the successful candidate state is exactly the ground state; we must also determine which candidate energies are statistically favored upon success.
Before amplification, the joint probability of candidate energy $\mathcal E_k$ and the witness outcome is $\pi_k r_{jk}$.
Therefore, conditioned on obtaining a witness, the candidate energy is distributed according to $q_j(k) = \pi_k r_{jk}/\lambda_j$,
which is the same distribution defined earlier in~\eqref{eq:qjk} for SD.
The following lemma shows that FPAA changes only the overall probability of reaching the witness subspace.
It does not change the conditional candidate-energy distribution within that subspace.

\begin{lemma}
\label{lem:coherent-direction-preserved}
Fix an anchor energy $\mathcal E_j$.
Conditioned on measuring $q=1$ after the FPAA sequence, the candidate energy $\mathcal E_k$ is distributed according to $q_j(k)$ in~\eqref{eq:qjk}.
\end{lemma}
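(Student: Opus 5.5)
The plan is to exploit the fact that FPAA acts within the two-dimensional subspace $\mathcal S_j\coloneqq\operatorname{span}\{|\Phi_j^{(0)}\rangle,|\Phi_j^{(1)}\rangle\}$. Within that subspace it can only rescale the amplitudes on the good and bad directions, without changing the internal structure of either direction. I will first write the good state explicitly. By \Cref{lem:laurent-qsp-comparator}, the comparator acts on each energy pair as in~\eqref{eq:comparator-action}. Writing $|\psi\rangle=\sum_k\Pi_k|\psi\rangle$ with $\|\Pi_k|\psi\rangle\|^2=\pi_k$, I obtain
\begin{equation}
\sqrt{\lambda_j}\,|\Phi_j^{(1)}\rangle=\sum_k B(z_{jk})\,|1\rangle_q|\varphi_j\rangle_A\,\Pi_k|\psi\rangle_B,
\end{equation}
so that $\|\Pi_k^{B}|\Phi_j^{(1)}\rangle\|^2=\pi_k r_{jk}/\lambda_j=q_j(k)$ by~\eqref{eq:rab} and~\eqref{eq:qjk}. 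Here I use that $C_\zeta$ is diagonal in the joint energy eigenbasis, which holds because it is built from SWAP-echo queries of the form $\e^{-\iu Z_q\otimes(D_{AB}-\gamma/2)\tau}$ and single-qubit gates on $q$. This diagonality also lets me apply the decomposition blockwise even when $\Pi_k$ is degenerate.

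Next I invoke the invariance argument already stated before the lemma. For the fixed anchor $|\varphi_j\rangle_A$, the operator $S_{\rm src}(\alpha)$ restricted to the relevant subspace equals $I-(1-\e^{-\iu\alpha})|\Phi_j\rangle\langle\Phi_j|$, and $S_{\rm wit}(\beta)$ acts as $\e^{\iu\beta}$ on $|\Phi_j^{(1)}\rangle$ and as $1$ on $|\Phi_j^{(0)}\rangle$. Both operators therefore preserve $\mathcal S_j$, and so does each iterate $G(\alpha_m,\beta_m)$. After the full sequence~\eqref{eq:FPAA-sequence}, the state is $a|\Phi_j^{(0)}\rangle+b|\Phi_j^{(1)}\rangle$ for some scalars with $|a|^2+|b|^2=1$ and $|b|^2=s_j^{\rm CSD}$. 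The only care needed is to check that $|\Phi_j^{(0)}\rangle$ lies on $q=0$ and $|\Phi_j^{(1)}\rangle$ on $q=1$, which holds by construction. This ensures that $S_{\rm wit}$ does not mix the two vectors with anything outside $\mathcal S_j$.

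Finally, measuring $q=1$ projects the final state onto $b|\Phi_j^{(1)}\rangle$. The post-measurement state is exactly $|\Phi_j^{(1)}\rangle$, up to a global phase, whenever $b\neq0$. The probability that a subsequent energy measurement on $B$ (equivalently, on $A$ after the SWAP) returns $\mathcal E_k$ is then $\|\Pi_k^B|\Phi_j^{(1)}\rangle\|^2=q_j(k)$, by the computation in the first step. If $b=0$, the conditioning event has probability zero, and the statement is vacuous.

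The main obstacle is justifying rigorously that the good state after FPAA is the same vector $|\Phi_j^{(1)}\rangle$, and not some other state in the $q=1$ sector. This is exactly the two-dimensional invariance in the second step. To settle it, I would verify directly that $S_{\rm src}(\alpha)$ acts as a rank-one phase on the $|\varphi_j\rangle_A$ sector, using that $V$ acts trivially on $A$'s eigenspace label, so that $VP_{\rm init}V^\dagger$ restricted there is $|\Phi_j\rangle\langle\Phi_j|$. Once this holds, the rest of the argument is bookkeeping.
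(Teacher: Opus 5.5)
Your proposal is correct and follows essentially the same route as the paper: you compute the witness branch of $V|0\rangle_q|\varphi_j\rangle_A|0^n\rangle_B$ explicitly to get the conditional distribution $q_j(k)=\pi_k r_{jk}/\lambda_j$ before amplification. You then use that $S_{\rm wit}(\beta)$ and $S_{\rm src}(\alpha)$ both preserve $\operatorname{span}\{|\Phi_j^{(0)}\rangle,|\Phi_j^{(1)}\rangle\}$ in the fixed-anchor sector, so FPAA changes only the amplitude on the unchanged normalized witness direction. Your $b=0$ remark covers the paper's $\lambda_j=0$ endpoint, and the $\lambda_j=1$ endpoint, where the span is one-dimensional, goes through by the same invariance argument.
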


\begin{proof}
See~\Cref{app:coherent-witness-direction}.
\end{proof}

\subsubsection{Descent Potential}
\label{sec:coherent-convergence-descent-potential}

From \Cref{lem:coherent-witness-amplification}, we have that for a fixed excited energy level $j>0$, the probability of finding a witness ($q=1$) after the FPAA sequence is $s_j^{\rm CSD}\ge3/4$.
Furthermore, from \Cref{lem:coherent-direction-preserved}, we have that conditioned on finding a witness, the candidate energy $\mathcal E_k$ is distributed according to $q_j(k)$ in~\eqref{eq:qjk}. 
Therefore the potential bound in \Cref{lem:successful-candidate-potential} that we derived for SD directly applies to CSD as well without modification:
\begin{equation}
\label{eq:successful-candidate-potential-csd} \E[ W_{\rm cand} \mid j,\mathrm{witness} ] \le \frac{9}{16}W_j + \frac{\zeta}{\eta(1-\zeta)}.
\end{equation}
We can now bound the expected potential after one complete descent stage.
Recall that, in addition to the FPAA sequence, a complete descent stage also includes the final SWAP operation that updates the state in the anchor register whenever a witness is obtained.
Let $W_{\rm new}$ denote the descent potential of the anchor state after the current descent stage. 
If the stage fails to find any witness (probability $1 - s_j^{\mathrm{CSD}}$), the anchor state is retained, yielding $W_{\mathrm{new}} = W_j$. 
If the stage successfully finds a witness (probability $s_j^{\mathrm{CSD}}$), the anchor state is replaced by the new candidate state, which satisfies the conditional expectation given in~\eqref{eq:successful-candidate-potential-csd}. 
Using the success probability bound $s_j^{\mathrm{CSD}} \ge 3/4$ from~\eqref{eq:excited-coherent-success}, the expected potential at the end of the stage evaluates to
\begin{align}
\E[W_{\rm new}\mid j] &\le (1-s_j^{\rm CSD})W_j + s_j^{\rm CSD} \left( \frac{9}{16}W_j + \frac{\zeta}{\eta(1-\zeta)} \right) \\
&\le \frac{43}{64}W_j + \frac{\zeta}{\eta(1-\zeta)}. \label{eq:coherent-excited-contraction}
\end{align}

On the other hand, for a ground-state anchor, the initial potential is $W_0=0$, and the potential can only increase if a false witness triggers an erroneous update.
By \Cref{lem:coherent-witness-amplification}, this event has probability at most $2L^2\zeta$.
Since $W_k\le1$ for every occupied energy level, we have
\begin{equation}
\E[W_{\rm new}\mid0] \le 2L^2\zeta. \label{eq:coherent-ground-error}
\end{equation}

The bounds~\eqref{eq:coherent-excited-contraction} and~\eqref{eq:coherent-ground-error} can be extended easily to an arbitrary anchor state. 
In \Cref{app:coherent-potential-recursion}, we rigourously derive the following recurrence relation for the mixed-state descent potential $V_t$ (see~\eqref{eq:Vt} for its definition):
\begin{equation}
V_{t+1} \le \frac{43}{64}V_t + 2L^2\zeta + \frac{\zeta}{\eta(1-\zeta)}. \label{eq:V-recursion-coherent}
\end{equation}

We use the same comparator error $\zeta$ as in SD, namely
\begin{equation}
\zeta = \frac{\eta^2\xi}{256}. \label{eq:zeta-coherent}
\end{equation}
Since $\zeta<1/4$ and $L\le4/\sqrt\eta$, we have
\begin{equation}
L^2\zeta \le \frac{16}{\eta} \cdot \frac{\eta^2\xi}{256} = \frac{\eta\xi}{16} \le \frac{1}{16}.
\end{equation}
In particular, the assumption $L^2\zeta\le1/2$ in \Cref{lem:coherent-witness-amplification} is satisfied.
The second and third terms on the right-hand side of~\eqref{eq:V-recursion-coherent} are bounded from above as follows:
\begin{equation}
2L^2\zeta \le \frac{\eta\xi}{8}, \qquad \frac{\zeta}{\eta(1-\zeta)} \le \frac{\eta\xi}{192}, \label{eq:coherent-additive-errors}
\end{equation}

The anchor register is initialized in the state $|\psi\rangle$. Because $W_j \le 1$ for all levels, the initial mixed-state potential is trivially bounded by $V_0 \le 1$. 
Iterating the recurrence relation in~\eqref{eq:V-recursion-coherent} over $R$ consecutive descent stages yields
\begin{align}
V_R &\le \left(\frac{43}{64}\right)^R V_0 + \sum_{t=0}^{R-1} \left(\frac{43}{64}\right)^t \left( 2L^2\zeta + \frac{\zeta}{\eta(1-\zeta)} \right) \nonumber \\
&\le \left(\frac{43}{64}\right)^R + \frac{1}{ 1-\frac{43}{64} } \left( 2L^2\zeta + \frac{\zeta}{\eta(1-\zeta)} \right) \nonumber \\
&= \left(\frac{43}{64}\right)^R + \frac{64}{21} \left( 2L^2\zeta + \frac{\zeta}{\eta(1-\zeta)} \right). \label{eq:VR-bound-coherent}
\end{align}
Using the bounds in~\eqref{eq:coherent-additive-errors}, the second term on the right-hand side of the above inequality is at most
\begin{equation}
\label{eq:bound-on-second-third-terms-of-VR} \frac{64}{21} \left( \frac18+\frac1{192} \right) \eta\xi = \frac{25}{63}\eta\xi < \frac{\eta\xi}{2}.
\end{equation}

We therefore use the same number of descent stages as SD,
\begin{equation}
\label{eq:R-coherent} R = \left\lceil \frac{\log(2/(\eta\xi))} {\log(64/43)} \right\rceil.
\end{equation}
This ensures that $(43/64)^R\le\eta\xi/2$, so~\eqref{eq:bound-on-second-third-terms-of-VR} gives $V_R<\eta\xi$.

Finally, we map this bound on the potential directly to the target ground-state fidelity.
For every excited level $j>0$, recall that we have $W_j\ge\eta$ from~\eqref{eq:W-excited-lower}. 
Substituting this into the definition of $V_R$ gives:
\begin{equation}
V_R = \sum_{j>0}W_j\Tr(\Pi_j\rho_R) \ge \eta\sum_{j>0}\Tr(\Pi_j\rho_R).
\end{equation}
Together with the inequality $V_R<\eta\xi$, this implies $\sum_{j>0}\Tr(\Pi_j\rho_R)<\xi$. 
Since the ground state is nondegenerate,
\begin{equation}
\langle \phi_0|\rho_R|\phi_0\rangle \ge 1-\xi. \label{eq:coherent-final-fidelity}
\end{equation}
This concludes the convergence analysis for \Cref{alg:coherent-descent}.

We summarize the result below.

\subsection{Main Result}
\label{sec:coherent-resources}

\begin{theorem}[Ground-State Preparation via Coherent Spectral Descent]
\label{thm:coherent-descent}
Consider an instance of the GSP problem (Problem~\ref{prob:gsp}), and 
suppose that we have coherent access to $U_\psi$ and $U_\psi^\dagger$, where $U_\psi|0^n\rangle=|\psi\rangle$.
Choose the elementary evolution time $\tau$ as in~\eqref{eq:tau} and define $U_H\coloneqq\e^{-\iu H\tau}$.
Then the CSD algorithm (Algorithm~\ref{alg:coherent-descent}), with the parameter choices in~\eqref{eq:zeta-coherent}, \eqref{eq:L-choice}, and~\eqref{eq:R-coherent}, outputs a state $\rho_R$ satisfying
\begin{equation}
\langle\phi_0|\rho_R|\phi_0\rangle \ge 1-\xi.
\end{equation}
The CSD algorithm uses
\begin{enumerate}[label=(\roman*),leftmargin=*]
\item
    $T_H^{\rm CSD} = O\!\left( \frac1{\gamma\sqrt\eta} \log^2\frac1{\eta\xi} \right) = \widetilde O\!\left( \frac1{\gamma\sqrt\eta} \right)$
    total Hamiltonian-evolution time;

\item
    $N_H^{\rm CSD} = O\!\left( \frac{\alpha_H}{\gamma\sqrt\eta} \log^2\frac1{\eta\xi} \right) = \widetilde O\!\left( \frac{\alpha_H}{\gamma\sqrt\eta} \right)$
    queries to the uncontrolled oracles $U_H$ and $U_H^\dagger$; and

\item
    $N_{U_\psi}^{\rm CSD} = O\!\left( \frac1{\sqrt\eta} \log\frac{2}{\eta\xi} \right) = \widetilde O\!\left( \frac1{\sqrt\eta} \right)$
    queries to $U_\psi$ and $U_\psi^\dagger$.
\end{enumerate}
Crucially, it does not require access to $\mathrm{ctrl}\text{-}U_H$.
\end{theorem}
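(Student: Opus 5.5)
The fidelity claim is already contained in the convergence analysis of \Cref{sec:coherent-convergence}, so the plan is to cite it and then count resources. For correctness, I will rely on \Cref{lem:coherent-witness-amplification}, \Cref{lem:coherent-direction-preserved} and \Cref{lem:successful-candidate-potential}. These give the per-stage contraction \eqref{eq:coherent-excited-contraction} and the ground-state stability bound \eqref{eq:coherent-ground-error}. Averaging over energy-level populations, as in \Cref{app:coherent-potential-recursion}, then yields the recursion \eqref{eq:V-recursion-coherent}. With $\zeta=\eta^2\xi/256$ and $L\le 4/\sqrt\eta$, the additive terms are bounded by \eqref{eq:coherent-additive-errors}. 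Iterating over $R$ stages gives $V_R<\eta\xi$. Finally, $W_j\ge\eta$ for $j>0$ converts this into $\langle\phi_0|\rho_R|\phi_0\rangle\ge1-\xi$.

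For the resource counts, I will proceed stage by stage. One stage applies $V$ once and then $\ell=(L-1)/2$ FPAA iterates $G(\alpha_m,\beta_m)$. By \Cref{fig:fpaa-iterate}, each iterate uses one $C_\zeta$, one $C_\zeta^\dagger$, one $U_\psi$ and one $U_\psi^\dagger$. A stage therefore uses at most $L$ applications of $C_\zeta$ or $C_\zeta^\dagger$, and at most $L$ queries to $U_\psi$ or $U_\psi^\dagger$. Since $L=O(1/\sqrt\eta)$ and $R=O(\log(1/(\eta\xi)))$, the totals are $O(\frac{1}{\sqrt\eta}\log\frac{1}{\eta\xi})$ comparator applications and, adding the one initial anchor preparation, $N_{U_\psi}^{\rm CSD}=O(\frac{1}{\sqrt\eta}\log\frac{2}{\eta\xi})$ state-oracle queries. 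This proves item (iii).

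Next, \Cref{thm:qsp-filter} says one application of $C_\zeta$ uses $d=O((\alpha_H/\gamma+1)\log(1/\zeta))$ SWAP-echo queries and evolution time $O(\gamma^{-1}\log(1/\zeta))$. Since $\log(1/\zeta)=O(\log(1/(\eta\xi)))$, multiplying by the comparator count gives items (i) and (ii). To do this I must check two points. First, $C_\zeta^\dagger$ costs the same as $C_\zeta$: it is obtained by reversing the sequence and inverting every gate. Second, $\SWAPECHO_\gamma(\tau)^\dagger$ is implemented by swapping the roles of $U_H$ and $U_H^\dagger$ and negating the $Z$ phase, so each echo still uses one $U_H$ and one $U_H^\dagger$. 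The $+1$ in $\alpha_H/\gamma+1$ is absorbed, since $\gamma\le E_1-E_0\le 2\alpha_H$.

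The step I expect to need the most care is the absence of $\mathrm{ctrl}\text{-}U_H$. The source reflection $S_{\rm src}(\alpha)$ is implemented by conjugating a phase on $|0\rangle_q|0^n\rangle_B$ with $V$ and $V^\dagger$. The only controlled operations in that circuit are multi-controlled Toffolis acting on $q$, $B$ and an ancilla; they never control $C_\zeta$ itself. Inside $C_\zeta$, \Cref{lem:laurent-qsp-comparator} uses only uncontrolled SWAP-echo queries, and each echo uses only $\cSWAP$ gates and uncontrolled $U_H^{(\dagger)}$. The only Hamiltonian access in the whole algorithm is therefore through $U_H$ and $U_H^\dagger$.
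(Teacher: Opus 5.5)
Your proposal is correct and follows essentially the same route as the paper. You take the fidelity bound from the convergence analysis of \Cref{sec:coherent-convergence}, count $O(L)$ uses of $C_\zeta$, $C_\zeta^\dagger$, $U_\psi$, $U_\psi^\dagger$ per stage over $R$ stages, and multiply by the comparator cost from \Cref{thm:qsp-filter}. Your explicit checks go beyond what the paper writes out: the inverses $C_\zeta^\dagger$ and $\SWAPECHO_\gamma(\tau)^\dagger$ cost the same as the forward versions, the $+1$ in $\alpha_H/\gamma+1$ is absorbed via $\gamma\le 2\alpha_H$, and no controlled operation in the FPAA circuit touches $U_H$. The paper leaves these implicit, and they are consistent with its argument.
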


\begin{proof}
The choices in~\eqref{eq:L-choice} and~\eqref{eq:R-coherent} give $L=O(1/\sqrt{\eta})$ and $R=O(\log(1/(\eta\xi)))$, respectively.
In a single descent stage, the FPAA sequence, as defined in~\eqref{eq:FPAA-sequence}, uses $O(1/\sqrt{\eta})$ number of FPAA iterates.
Each FPAA iterate makes one query to each of $C_\zeta$, $C_\zeta^\dagger$, $U_\psi$, and $U_\psi^\dagger$. 
Therefore, one descent stage makes $O(1/\sqrt{\eta})$ queries to each of these four unitaries.

The CSD algorithm (Algorithm~\ref{alg:coherent-descent}) consists of at most $R$ descent stages. Hence, the total number of queries to each of $C_\zeta$, $C_\zeta^\dagger$, $U_\psi$, and $U_\psi^\dagger$ is at most $LR = O(1/\sqrt{\eta} \log (1/(\eta\xi)) )$.

We first bound the total Hamiltonian-evolution time.
By \Cref{thm:qsp-filter}, one query to either $C_\zeta$ or $C_\zeta^\dagger$ requires Hamiltonian-evolution time $O(1/\gamma \log 1/(\eta\xi)).$
Since we have $O(1/\sqrt{\eta} \log (1/(\eta\xi)) )$ queries to each of $C_\zeta$ and $C_\zeta^\dagger$ in total as shown above, we get the total Hamiltonian-evolution time
\begin{equation}
T_H^{\rm CSD} = O\!\left( \frac{1}{\gamma\sqrt{\eta}} \log^2\frac{1}{\eta\xi} \right) = \widetilde O\!\left( \frac{1}{\gamma\sqrt{\eta}} \right). \label{eq:coherent-H-time}
\end{equation}
This proves item~$(i)$.
Note that the dependence on $\gamma$ is the same as in SD, but the dependence on $\eta$ improves from $1/\eta$ to $1/\sqrt{\eta}$.
Thus the scaling in~\eqref{eq:coherent-H-time} attains the known optimal joint gap-overlap dependence up to logarithmic factors, using only uncontrolled Hamiltonian evolution.

We next count the number of queries to the uncontrolled oracles $U_H$ and $U_H^\dagger$. 
From~\eqref{eq:qsp-degree}, one query to $C_\zeta$ uses $d=O((\alpha_H/\gamma+1)\log(1/(\eta\xi)))$ queries to $\SWAPECHO_\gamma(\tau)$, and hence the same number, up to a constant
factor, of queries to $U_H$ and $U_H^\dagger$. 
Multiplying this cost
by the number of queries to $C_\zeta$ and $C_\zeta^\dagger$ that we established above, we get
\begin{equation}
N_H^{\rm CSD} = O\!\left( \frac{\alpha_H}{\sqrt{\eta} \gamma} \log^2\frac1{\eta\xi} \right) = \tilde O\!\left( \frac{\alpha_H}{\sqrt{\eta} \gamma} \right)
\end{equation}
number of queries to each of $U_H$ and $U_H^\dagger$.
This proves item~$(ii)$.

Finally, as established above, the total number of queries to $U_\psi$ and $U_\psi^\dagger$ is $O(1/\sqrt{\eta} \log (1/(\eta\xi)) )$.
Including the single initial state preparation of the anchor register gives
\begin{equation}
N_{U_\psi}^{\rm CSD} = O\!\left( \frac1{\sqrt\eta} \log\frac{2}{\eta\xi} \right) = \widetilde O\!\left( \frac1{\sqrt\eta} \right)
\end{equation}
number of queries to each of $U_\psi$ and $U_\psi^\dagger$.
This proves item~$(iii)$ and also concludes the proof of the theorem.
\end{proof}

\section{Ground-State Property and Energy Estimation}
\label{sec:ground-state-estimation}

We now use SD and CSD to solve GSPE and GSEE problems defined formally in \Cref{sec:ground-state-tasks}.
For GSPE, we first prepare an approximate ground state and then estimate the desired expectation value by measuring independently prepared outputs.
We must account for both the error in the prepared state and the statistical error in the measurements.
For GSEE, we can relax the preparation requirement: it is enough to concentrate the output on energies close to $E_0$.
We will show that replacing the comparison scale $\gamma$ by a scale proportional to the target energy precision gives this weaker guarantee without a gap promise.

\subsection{Expectation Values from an Approximate Ground State}
\label{sec:approximate-ground-expectation}

Let $P_0\coloneqq|\phi_0\rangle\langle\phi_0|$ and $Q_0\coloneqq I-P_0$.
Suppose that $\rho$ is an output state of SD or CSD satisfying $\Tr(P_0\rho)\ge1-\xi$, and let $O$ be an operator with $\|O\|\le1$.
Our goal is to estimate $\mu\coloneqq\langle\phi_0|O|\phi_0\rangle$, whereas a measurement on $\rho$ gives information about $\Tr(O\rho)$.
The following lemma bounds the difference between these two quantities.
It applies to both Hermitian observables and unitary operators, where in the latter case, $\mu$ may be complex.

\begin{lemma}[Expectation-value bias]
\label{lem:ground-property-bias}
Let $\rho$ satisfy $\langle\phi_0|\rho|\phi_0\rangle\ge1-\xi$, and let $O$ satisfy $\|O\|\le1$.
Then
\begin{equation}
\left|\Tr(O\rho)-\langle\phi_0|O|\phi_0\rangle\right| \le 2\sqrt{\xi}. \label{eq:general-property-bias}
\end{equation}
If, in addition, $[O,H]=0$, then
\begin{equation}
\left|\Tr(O\rho)-\langle\phi_0|O|\phi_0\rangle\right| \le 2\xi. \label{eq:commuting-property-bias}
\end{equation}
\end{lemma}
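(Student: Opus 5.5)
We decompose $\rho$ into its blocks relative to $P_0$ and $Q_0$. Writing $I=P_0+Q_0$ on both sides gives
\begin{equation}
\Tr(O\rho)=\Tr(P_0OP_0\rho)+\Tr(P_0OQ_0\rho)+\Tr(Q_0OP_0\rho)+\Tr(Q_0OQ_0\rho).
\end{equation}
Set $p\coloneqq\Tr(P_0\rho)\ge1-\xi$. Since $P_0OP_0=\mu P_0$, the first term equals $\mu p$. The difference then splits as
\begin{equation}
\Tr(O\rho)-\mu=-\mu(1-p)+\Tr(P_0OQ_0\rho)+\Tr(Q_0OP_0\rho)+\Tr(Q_0OQ_0\rho).
\end{equation}
Using $|\mu|\le\|O\|\le1$, the first and last terms are each bounded by $1-p\le\xi$. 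For the last term we use $|\Tr(Q_0OQ_0\rho)|=|\Tr(O\,Q_0\rho Q_0)|\le\|O\|\Tr(Q_0\rho)$.

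The cross terms carry the square-root dependence, and we bound them with Cauchy--Schwarz for the Hilbert--Schmidt inner product. Write
\begin{equation}
\Tr(P_0OQ_0\rho)=\Tr\bigl((\rho^{1/2}P_0)(OQ_0\rho^{1/2})\bigr).
\end{equation}
Its modulus is at most $\sqrt{\Tr(P_0\rho)}\,\sqrt{\Tr(\rho^{1/2}Q_0O^\dagger OQ_0\rho^{1/2})}\le\sqrt{p}\sqrt{1-p}$, using $O^\dagger O\le I$. The other cross term gets the same bound.

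Adding everything gives a total of at most $2(1-p)+2\sqrt{p(1-p)}$. We still need this to be at most $2\sqrt{\xi}$, which is the step to check carefully. Put $x=1-p\le\xi$. Then
\begin{equation}
2x+2\sqrt{x(1-x)}\le2\sqrt{x}\quad\Longleftrightarrow\quad\sqrt{x}+\sqrt{1-x}\le1,
\end{equation}
which is false, so the crude bound is too loose. We tighten it by grouping the diagonal terms: they contribute $(1-p)\bigl(\Tr(O\sigma)-\mu\bigr)$ with $\sigma\coloneqq Q_0\rho Q_0/(1-p)$. We estimate this combined term by $2(1-p)$, or better by a $\|O\|$-type bound.

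A cleaner route avoids the grouping altogether. We reduce to a pure state by purifying $\rho$ to $|\Psi\rangle$ on system plus environment, with $|\langle\phi_0|\otimes\cdot\rangle|$ weight $p$. We then write $|\Psi\rangle=\sqrt{p}\,|\phi_0\rangle|e\rangle+\sqrt{1-p}\,|\chi\rangle$ with $|\chi\rangle$ orthogonal to the $\phi_0$ sector. From this form, $|\langle\Psi|O\otimes I|\Psi\rangle-\mu|\le2\|\,|\Psi\rangle-|\phi_0\rangle|e\rangle\|$, and we use $\||\Psi\rangle-|\phi_0 e\rangle\|^2=2-2\sqrt{p}\le2(1-p)\le2\xi$. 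This yields a constant times $\sqrt{\xi}$; to pin the constant at exactly $2$ we instead use the trace-distance bound $\frac12\|\rho-P_0\|_1\le\sqrt{1-p}\le\sqrt{\xi}$ quoted in the introduction's footnote. Combined with $|\Tr(O(\rho-P_0))|\le\|O\|\,\|\rho-P_0\|_1$, this gives exactly $2\sqrt{\xi}$, so this is the argument we will actually use.

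For the commuting case $[O,H]=0$, nondegeneracy of $E_0$ means $O$ preserves the ground eigenspace, so $P_0OQ_0=Q_0OP_0=0$ and the cross terms vanish. The decomposition then gives
\begin{equation}
|\Tr(O\rho)-\mu|\le|\mu|(1-p)+\|O\|(1-p)\le2\xi.
\end{equation}
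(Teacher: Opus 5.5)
Your proposal is correct, and the argument you settle on is the paper's proof: the Fuchs--van de Graaf bound $\tfrac12\|\rho-P_0\|_1\le\sqrt{1-\Tr(P_0\rho)}\le\sqrt{\xi}$ combined with Hölder's inequality gives the general case, and the $P_0/Q_0$ block decomposition, in which the cross blocks vanish when $[O,H]=0$, gives $2\xi$. The Cauchy--Schwarz and purification detours are never used in the final argument, so you can drop them.
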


\begin{proof}
By the Fuchs-van de Graaf inequality and the assumption $\Tr(P_0\rho)\ge 1-\xi$,
\begin{equation}
\left \|\rho-P_0 \right\|_1 \le 2\sqrt{1-\Tr(P_0\rho)} \le 2\sqrt{\xi}.
\end{equation}
Therefore, by Hölder's inequality,
\begin{equation}
\left |\Tr(O\rho)-\mu \right| = \left|\Tr[O(\rho-P_0)]\right| \le \left\|O\right\| \left\|\rho-P_0\right \|_1 \le 2\sqrt{\xi},
\end{equation}
which proves~\eqref{eq:general-property-bias}.

Now suppose that $[O,H]=0$.
For this case, we have $P_0OQ_0=Q_0OP_0=0$ and $P_0OP_0=\mu P_0$.
Writing $r\coloneqq\Tr(Q_0\rho)\le\xi$, we obtain $\Tr(O\rho)-\mu=-\mu r+\Tr(OQ_0\rho Q_0)$.
Since $Q_0\rho Q_0$ is positive semidefinite with trace $r$, we have $\left |\Tr(OQ_0\rho Q_0)\right |\le\|O\|r\le r$.
Together with $|\mu|\le1$, this gives $\left |\Tr(O\rho)-\mu \right |\le2r\le2\xi$.
\end{proof}

To allocate at most $\varepsilon/2$ of the estimation error to state preparation, we therefore choose
\begin{equation}
\xi_O\coloneqq \begin{cases} \varepsilon/4, & [O,H]=0,\\[1mm] \varepsilon^2/16, & \text{otherwise}. \end{cases} \label{eq:property-xi-choice}
\end{equation}
Indeed, the first choice gives $2\xi_O=\varepsilon/2$, while the second gives $2\sqrt{\xi_O}=\varepsilon/2$.

\subsection{Ground-State Property Estimation}
\label{sec:gspe-from-gsp}

We next specify how $O$ is measured.
This access assumption is separate from access to the Hamiltonian dynamics and is needed to state the resource bounds precisely.
Throughout this subsection, let $\varepsilon\in(0,1)$ be the target additive error and let $\delta\in(0,1)$ be the allowed failure probability.

\subsubsection{Hermitian observables}
\label{sec:gspe-hermitian}

Suppose that $O=O^\dagger$ and $\|O\|\le1$.
A projective measurement of $O$ on $\rho$ returns an eigenvalue $X_O\in[-1,1]$ with mean $\E[X_O]=\Tr(O\rho)$.
The same argument applies to any available measurement procedure that returns an unbiased estimator in $[-1,1]$.
We assume access to such a procedure, and count its uses separately from the preparation resources.

Prepare $N$ independent outputs of SD or CSD, each with target infidelity $\xi_O$, and let $\widehat\mu$ be the average of their measurement outcomes.
The standard Hoeffding's inequality gives $\Pr[|\widehat\mu-\Tr(O\rho)|>\varepsilon/2]\le2\exp(-N\varepsilon^2/8)$.
Thus $N=\lceil8\varepsilon^{-2}\log(2/\delta)\rceil$ is sufficient to bound the statistical error by $\varepsilon/2$ with probability at least $1-\delta$.
On this event, \Cref{lem:ground-property-bias} and the triangle inequality give $|\widehat\mu-\mu|\le|\widehat\mu-\Tr(O\rho)|+|\Tr(O\rho)-\mu|\le\varepsilon$.

\subsubsection{Unitary operators}
\label{sec:gspe-unitary}

Suppose instead that $O$ is unitary and controlled-$O$ is available.
We estimate $\Tr(O\rho)$ using the standard Hadamard test.
Applying the Hadamard test with an $X$- or $Y$-basis measurement on the readout qubit gives a random variable in $\{-1,1\}$ whose expectation is $\operatorname{Re}\Tr(O\rho)$ or $\operatorname{Im}\Tr(O\rho)$, respectively.

Use independent preparations for the two components, with $N_0=\lceil16\varepsilon^{-2}\log(4/\delta)\rceil$ samples for each component.
Hoeffding's inequality then bounds each component error by $\varepsilon/(2\sqrt2)$ except with probability $\delta/2$.
By a union bound, the resulting complex estimate differs from $\Tr(O\rho)$ in modulus by at most $\sqrt{2(\varepsilon/(2\sqrt2))^2}=\varepsilon/2$ with probability at least $1-\delta$.
Since $\|O\|=1$, the preparation bias is again at most $\varepsilon/2$ under~\eqref{eq:property-xi-choice}, so the final error is at most $\varepsilon$.
If $O$ is also Hermitian, its expectation value is real and only the $X$-basis measurement is needed.

\subsubsection{Main results}
\label{sec:gspe-resource-bounds}

The following theorems state the resulting bounds in the two input-state access models.

\begin{theorem}[Ground-State Property Estimation via Spectral Descent]
\label{thm:gspe-sd}
Consider an instance of the GSPE problem (Problem~\ref{prob:gspe}), and suppose that fresh copies of the input state $|\psi\rangle$ can be prepared.
Suppose further that $O$ is either a Hermitian observable with measurement access as in \Cref{sec:gspe-hermitian}, or unitary with controlled-$O$ access as in \Cref{sec:gspe-unitary}.
Choose $\xi_O$ as in~\eqref{eq:property-xi-choice}, choose $\tau$ as in~\eqref{eq:tau}, and define $U_H\coloneqq\e^{-\iu H\tau}$.
Then independent applications of the SD algorithm with target infidelity $\xi_O$, followed by the corresponding measurements of $O$, produce an estimate $\widehat\mu$ satisfying $|\widehat\mu-\langle\phi_0|O|\phi_0\rangle|\le\varepsilon$ with probability at least $1-\delta$.
The procedure uses
\begin{enumerate}[label=(\roman*),leftmargin=*]
    \item $T_H^{\rm SD\text{-}GSPE}=O\!\left(\frac{1}{\eta\gamma\varepsilon^2}\log\frac{2}{\delta}\,\log^2\frac{2}{\eta\xi_O}\right)$ total Hamiltonian-evolution time;
    \item $N_H^{\rm SD\text{-}GSPE}=O\!\left(\frac{\alpha_H}{\eta\gamma\varepsilon^2}\log\frac{2}{\delta}\,\log^2\frac{2}{\eta\xi_O}\right)$ queries to the uncontrolled oracles $U_H$ and $U_H^\dagger$; and
    \item $N_\psi^{\rm SD\text{-}GSPE}=O\!\left(\frac{1}{\eta\varepsilon^2}\log\frac{2}{\delta}\,\log\frac{2}{\eta\xi_O}\right)$ fresh copies of $|\psi\rangle$.
\end{enumerate}
In addition, the procedure uses $O(\varepsilon^{-2}\log(2/\delta))$ measurements of the specified type.
Crucially, it does not require access to $\mathrm{ctrl}\text{-}U_H$.
\end{theorem}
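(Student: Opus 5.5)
The proof combines \Cref{thm:measured-gsp}, \Cref{lem:ground-property-bias}, and the Hoeffding arguments of \Cref{sec:gspe-hermitian} and \Cref{sec:gspe-unitary}. The only work is to check correctness via an error split, and then to multiply per-run costs by the number of runs.

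\emph{Correctness.} Each run of \Cref{alg:spectral-descent} with target infidelity $\xi_O$ is independent. It uses fresh copies of $|\psi\rangle$ and fresh ancillas. By \Cref{thm:measured-gsp}, each run outputs a state $\rho$ with $\langle\phi_0|\rho|\phi_0\rangle\ge1-\xi_O$.

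Since the runs are independent and identically distributed, each output is the same mixed state $\rho$. The measured outcomes are therefore i.i.d.\ bounded random variables, with mean $\Tr(O\rho)$ in the Hermitian case, and mean $\operatorname{Re}\Tr(O\rho)$ or $\operatorname{Im}\Tr(O\rho)$ in the unitary case.

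By the choice \eqref{eq:property-xi-choice} and \Cref{lem:ground-property-bias}, the bias satisfies $|\Tr(O\rho)-\mu|\le\varepsilon/2$. It uses the $2\xi$ bound when $[O,H]=0$ and the $2\sqrt\xi$ bound otherwise.

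The statistical error is handled as follows. In the Hermitian case, Hoeffding with $N=\lceil 8\varepsilon^{-2}\log(2/\delta)\rceil$ gives error at most $\varepsilon/2$ except with probability $\delta$. In the unitary case, $2N_0$ samples with $N_0=\lceil16\varepsilon^{-2}\log(4/\delta)\rceil$ and a union bound give the same guarantee. The triangle inequality then gives $|\widehat\mu-\mu|\le\varepsilon$ with probability at least $1-\delta$.

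One subtle point needs checking. The guarantee of \Cref{thm:measured-gsp} is on the output state averaged over measurement records, not conditional on them. This is exactly what Hoeffding needs, because each sample's outcome distribution is determined by that averaged state $\rho$.

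\emph{Resource count.} The total number of SD runs is $N_{\rm tot}=O(\varepsilon^{-2}\log(2/\delta))$ in both cases, since $\log(4/\delta)\le 2\log(2/\delta)$. Multiplying by the per-run bounds of \Cref{thm:measured-gsp} with $\xi=\xi_O$ gives the following:
\begin{itemize}[leftmargin=*]
\item Hamiltonian-evolution time $O(\eta^{-1}\gamma^{-1}\log^2(1/(\eta\xi_O)))$ per run yields item (i).
\item Queries $O(\alpha_H\eta^{-1}\gamma^{-1}\log^2(1/(\eta\xi_O)))$ per run yields item (ii).
\item Fresh copies $O(\eta^{-1}\log(2/(\eta\xi_O)))$ per run yields item (iii).
\end{itemize}
Writing $\log(2/(\eta\xi_O))$ in place of $\log(1/(\eta\xi_O))$ only changes constants, and it keeps the bound valid when $\eta\xi_O$ is close to $1$. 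The number of measurements of $O$, or of Hadamard tests, is one per run, which is $O(\varepsilon^{-2}\log(2/\delta))$.

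Every subroutine is either SD, which uses no $\mathrm{ctrl}\text{-}U_H$ by \Cref{thm:measured-gsp}, or a measurement or controlled-$O$ on the output register, which does not involve $U_H$. Hence no $\mathrm{ctrl}\text{-}U_H$ is needed.

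I expect no real obstacle here. The only delicate point is the one noted above: the i.i.d.\ structure and the use of the averaged output state in Hoeffding. The remaining work is bookkeeping of the logarithmic factors.
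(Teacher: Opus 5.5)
Your proposal is correct and follows essentially the same route as the paper. You invoke \Cref{thm:measured-gsp} with target infidelity $\xi_O$, split the error into the bias term from \Cref{lem:ground-property-bias} and the Hoeffding statistical term, and multiply the per-run costs by the $O(\varepsilon^{-2}\log(2/\delta))$ independent runs. Your extra remarks, that the SD guarantee concerns the averaged output state (exactly what the i.i.d.\ Hoeffding argument needs) and that $\log(4/\delta)\le 2\log(2/\delta)$ absorbs the unitary case, are correct details the paper leaves implicit.
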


\begin{proof}
By \Cref{thm:measured-gsp}, one SD run with target infidelity $\xi_O$ produces a state $\rho$ satisfying $\langle\phi_0|\rho|\phi_0\rangle\ge1-\xi_O$.
The measurement analysis above then proves the claimed accuracy using $N=O(\varepsilon^{-2}\log(2/\delta))$ independent runs.
Each run uses $O((\eta\gamma)^{-1}\log^2(2/(\eta\xi_O)))$ evolution time, $O(\alpha_H(\eta\gamma)^{-1}\log^2(2/(\eta\xi_O)))$ queries to $U_H$ and $U_H^\dagger$, and $O(\eta^{-1}\log(2/(\eta\xi_O)))$ fresh input copies.
Multiplying these three bounds by $N$ proves items~$(i)$--$(iii)$.
\end{proof}

\begin{theorem}[Ground-State Property Estimation via Coherent Spectral Descent]
\label{thm:gspe-csd}
Consider an instance of the GSPE problem (Problem~\ref{prob:gspe}), and suppose that we have access to $U_\psi$ and $U_\psi^\dagger$, where $U_\psi|0^n\rangle=|\psi\rangle$.
Let $O$ have either of the two measurement access types specified in \Cref{thm:gspe-sd}.
Choose $\xi_O$ as in~\eqref{eq:property-xi-choice}, choose $\tau$ as in~\eqref{eq:tau}, and define $U_H\coloneqq\e^{-\iu H\tau}$.
Then independent applications of the CSD algorithm with target infidelity $\xi_O$, followed by the corresponding measurements of $O$, produce an estimate $\widehat\mu$ satisfying $|\widehat\mu-\langle\phi_0|O|\phi_0\rangle|\le\varepsilon$ with probability at least $1-\delta$.
The procedure uses
\begin{enumerate}[label=(\roman*),leftmargin=*]
    \item $T_H^{\rm CSD\text{-}GSPE}=O\!\left(\frac{1}{\gamma\sqrt\eta\,\varepsilon^2}\log\frac{2}{\delta}\,\log^2\frac{2}{\eta\xi_O}\right)$ total Hamiltonian-evolution time;
    \item $N_H^{\rm CSD\text{-}GSPE}=O\!\left(\frac{\alpha_H}{\gamma\sqrt\eta\,\varepsilon^2}\log\frac{2}{\delta}\,\log^2\frac{2}{\eta\xi_O}\right)$ queries to the uncontrolled oracles $U_H$ and $U_H^\dagger$; and
    \item $N_{U_\psi}^{\rm CSD\text{-}GSPE}=O\!\left(\frac{1}{\sqrt\eta\,\varepsilon^2}\log\frac{2}{\delta}\,\log\frac{2}{\eta\xi_O}\right)$ queries to $U_\psi$ and $U_\psi^\dagger$.
\end{enumerate}
In addition, the procedure uses $O(\varepsilon^{-2}\log(2/\delta))$ measurements of the specified type.
Crucially, it does not require access to $\mathrm{ctrl}\text{-}U_H$.
\end{theorem}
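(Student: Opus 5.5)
The proof mirrors that of \Cref{thm:gspe-sd}, with \Cref{thm:coherent-descent} replacing \Cref{thm:measured-gsp} as the state-preparation subroutine. First, fix the target infidelity $\xi_O$ according to~\eqref{eq:property-xi-choice}. By \Cref{thm:coherent-descent}, one run of \Cref{alg:coherent-descent} with this $\xi_O$ outputs $\rho$ with $\langle\phi_0|\rho|\phi_0\rangle\ge1-\xi_O$. Then \Cref{lem:ground-property-bias} shows that the preparation bias $|\Tr(O\rho)-\mu|$ is at most $\varepsilon/2$. The general bound $2\sqrt{\xi_O}$ applies when $O$ does not commute with $H$, and the sharper bound $2\xi_O$ applies when $[O,H]=0$.

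Next comes the statistical part. Since the CSD runs are independent and each begins from fresh registers, the output states are i.i.d.\ copies of the same $\rho$. For a Hermitian $O$, we perform $N=\lceil 8\varepsilon^{-2}\log(2/\delta)\rceil$ runs, each followed by a measurement of $O$. Hoeffding's inequality then places the empirical mean within $\varepsilon/2$ of $\Tr(O\rho)$ with probability at least $1-\delta$, as in \Cref{sec:gspe-hermitian}. For a unitary $O$, we use $2N_0$ runs split between the two Hadamard-test components, with $N_0=\lceil16\varepsilon^{-2}\log(4/\delta)\rceil$. A union bound gives modulus error at most $\varepsilon/2$, as in \Cref{sec:gspe-unitary}. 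In either case, the triangle inequality combines the statistical and bias errors into total error at most $\varepsilon$. The number of runs, and hence of measurements, is $O(\varepsilon^{-2}\log(2/\delta))$.

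The resource bounds follow by multiplying the per-run costs of \Cref{thm:coherent-descent}, items (i)--(iii) with $\xi=\xi_O$, by the number of runs. The per-run costs are $O(\gamma^{-1}\eta^{-1/2}\log^2(2/(\eta\xi_O)))$ evolution time, $O(\alpha_H\gamma^{-1}\eta^{-1/2}\log^2(2/(\eta\xi_O)))$ queries to $U_H$ and $U_H^\dagger$, and $O(\eta^{-1/2}\log(2/(\eta\xi_O)))$ queries to $U_\psi$ and $U_\psi^\dagger$. Each of these is multiplied by $O(\varepsilon^{-2}\log(2/\delta))$. The measurement step touches only the anchor register and the readout qubit, using controlled-$O$ in the unitary case. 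It therefore introduces no Hamiltonian queries and in particular no $\mathrm{ctrl}\text{-}U_H$.

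The only step requiring care is confirming that \Cref{thm:coherent-descent} applies unchanged with $\xi=\xi_O$. This holds because $\xi_O\in(0,1)$ whenever $\varepsilon<1$. It also matters that the fidelity guarantee holds for the output state itself, which is a mixture over measurement outcomes, so no extra failure probability must be absorbed into $\delta$. With these checks in place the argument is bookkeeping.
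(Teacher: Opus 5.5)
Your proposal is correct and follows the paper's proof. Like the paper, you invoke \Cref{thm:coherent-descent} with $\xi=\xi_O$ to get the $\varepsilon/2$ bias from \Cref{lem:ground-property-bias}, reuse the unchanged Hoeffding/Hadamard-test sampling with $N=O(\varepsilon^{-2}\log(2/\delta))$ independent runs, and multiply the per-run CSD costs by $N$; your added checks (that $\xi_O\in(0,1)$ and that the fidelity guarantee holds deterministically for the output state, so nothing extra is charged to $\delta$) are valid details the paper leaves implicit.
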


\begin{proof}
The fidelity guarantee in \Cref{thm:coherent-descent} gives the same preparation bias as in \Cref{thm:gspe-sd}, and the measurement procedure is unchanged.
Thus $N=O(\varepsilon^{-2}\log(2/\delta))$ independent preparations again suffice.
One CSD run uses $O((\gamma\sqrt\eta)^{-1}\log^2(2/(\eta\xi_O)))$ evolution time, $O(\alpha_H(\gamma\sqrt\eta)^{-1}\log^2(2/(\eta\xi_O)))$ Hamiltonian queries, and $O(\eta^{-1/2}\log(2/(\eta\xi_O)))$ input-state oracle queries.
Multiplying each cost by $N$ proves the stated bounds.
\end{proof}

The same bias analysis also applies when a Hermitian observable is given through an $\alpha_O$-block encoding $U_O$, with $\alpha_O\ge1$ and controlled-$U_O$ available.
A Hadamard test on $|0\rangle\langle0|\otimes\rho$, where $|0\rangle$ initializes the block-encoding ancillas, then estimates $\Tr(O\rho)/\alpha_O$.
Rescaling the outcome by $\alpha_O$ gives an unbiased estimator in $[-\alpha_O,\alpha_O]$.
The required sample count is then $O(\alpha_O^2\varepsilon^{-2}\log(2/\delta))$.
This factor multiplies both the measurement count and the total preparation resources, while the choice of $\xi_O$ remains unchanged.

\paragraph{Going from $1/\varepsilon^2$ to $1/\varepsilon$ with coherent observable access.}
The above GSPE algorithms prepare a new state for each measurement,
which leads to the $1/\varepsilon^2$ dependence.
For CSD, amplitude estimation primitive can improve this to $1/\varepsilon$, up to
logarithmic factors, under the following additional access assumption.

For a Hermitian observable $O$ with $\|O\|\le1$, assume access to a unitary $V_O$ and its inverse.
Starting with the system register in any state $\sigma$ and the ancillary register in $|0\cdots0\rangle$, we apply $V_O$ jointly to both registers.
Measuring one particular ancilla qubit then gives outcome $1$ with probability
\begin{equation}
p_\sigma=\frac{1+\Tr(O\sigma)}{2}.
\end{equation}
Thus, estimating $p_\sigma$ also estimates $\Tr(O\sigma)=2p_\sigma-1$.
We will use this circuit without performing its final measurement.

We must also make the entire CSD preparation reversible.
Instead of measuring each witness qubit, we use it directly to control the SWAP, and instead of resetting or discarding registers, we retain them and use fresh registers for the next stage.
These changes give a unitary circuit $W$ whose output on the anchor register, after tracing out the retained registers, is the same state $\rho_R$ as in CSD.
Furthermore, reversing the circuit implements $W^\dagger$.
Note that both these circuits use only $U_H$ and $U_H^\dagger$ for Hamiltonian access.

Now choose the CSD infidelity to be $\xi=\varepsilon^2/16$.
The preparation bias then satisfies
\begin{equation}
\left|\Tr(O\rho_R)-\langle\phi_0|O|\phi_0\rangle\right| \le 2\sqrt{\xi}=\frac{\varepsilon}{2}.
\end{equation}
Now apply $W$, followed by $V_O$ on the anchor and fresh ancillas.
The readout probability is $p=(1+\Tr(O\rho_R))/2$.
Iterative amplitude estimation uses this combined circuit and its inverse to obtain an estimate $\widehat p$ satisfying $|\widehat p-p|\le\varepsilon/4$ with probability at least $1-\delta$, using $\widetilde O(1/\varepsilon)$ circuit applications~\cite{Grinko2021}.
Consequently, $\widehat\mu=2\widehat p-1$ satisfies
\begin{equation}
|\widehat\mu-\langle\phi_0|O|\phi_0\rangle| \le 2|\widehat p-p|+2\sqrt{\xi}\le\varepsilon.
\end{equation}
Multiplying the cost of one CSD run by the number of circuit applications gives
\begin{equation}
N_H^{\rm CSD\text{-}GSPE}=\widetilde O\!\left( \frac{\alpha_H}{\gamma\sqrt{\eta}\,\varepsilon} \right), \qquad T_H^{\rm CSD\text{-}GSPE}=\widetilde O\!\left( \frac{1}{\gamma\sqrt{\eta}\,\varepsilon} \right).
\end{equation}
The procedure also uses $\widetilde O(1/(\sqrt{\eta}\,\varepsilon))$ queries to
$U_\psi$ and $U_\psi^\dagger$ and $\widetilde O(1/\varepsilon)$ queries to $V_O$ and $V_O^\dagger$.
The suppressed factors include logarithmic dependence on $1/\delta$.

Iterative amplitude estimation does not require controlled applications of the combined circuit.
Hence this improvement introduces no controlled Hamiltonian evolution.
It does require keeping the additional registers and being able to reverse the preparation, which fresh-copy assumption of SD alone does not provide.

\subsection{Ground-State Energy Estimation}
\label{sec:gsee-from-gsp}

For GSEE, an output supported near the ground-state energy is sufficient, even if it has little overlap with the ground state itself.
We therefore replace $\gamma$ in the comparator construction by an energy resolution $h>0$.
The resulting comparator rejects candidates whose energy is at least the anchor energy and accepts candidates whose energy is lower by at least $h$, with the same error $\zeta$ as before.
No spectral-gap assumption is used in constructing this comparator: $h$ specifies its transition width.
The SD and CSD circuits, including their numbers of trials or FPAA iterates per stage, can otherwise be used unchanged.

There is a separate access requirement for reporting an absolute energy.
Replacing $H$ by $H+cI$ multiplies every uncontrolled evolution by a global phase and leaves every SWAP echo unchanged.
Thus these operations alone cannot determine the additive energy offset.
As in the property-estimation results, we must specify a measurement procedure that supplies this information.
We first prove the gap-free preparation guarantee, then give GSEE bounds under the bounded energy-measurement assumption.

\subsubsection{Descent into a low-energy interval}

For $h>0$, let $P_{\le E_0+h}\coloneqq\sum_{j:\mathcal E_j\le E_0+h}\Pi_j$ denote the projector onto the occupied energy levels at most $h$ above $E_0$.
This projector is used only in the analysis; neither algorithm needs to know $E_0$.

\begin{lemma}[Low-energy preparation without a gap promise]
\label{lem:gap-free-descent}
Let $H$ satisfy $\|H\|\le\alpha_H$ and the overlap promise, and let $h\in(0,2\alpha_H]$ and $\xi\in(0,1)$.
Run SD or CSD with $\gamma$ replaced by $h$ in the comparator, with $\zeta=\eta^2\xi/256$, and with the same choices of $M$, $L$, and $R$ as in~\eqref{eq:M-choice}, \eqref{eq:L-choice}, and~\eqref{eq:R-measured}, respectively, using $M$ for SD and $L$ for CSD.
Then the output satisfies
\begin{equation}
\Tr(P_{\le E_0+h}\rho_R)\ge1-\xi. \label{eq:gap-free-low-energy}
\end{equation}
The resource bounds in \Cref{thm:measured-gsp,thm:coherent-descent} hold with $\gamma$ replaced by $h$.
No lower bound on the spectral gap is required.
\end{lemma}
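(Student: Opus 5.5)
The plan is to rerun the potential argument of \Cref{sec:convergence-sd} and \Cref{sec:coherent-convergence} with a truncated potential. Levels inside the target window $[E_0,E_0+h]$ will be declared ``good'', in the same way that only $\mathcal E_0$ was good before. Concretely, for each occupied level $\mathcal E_j$ I would set
\[
W'_j \coloneqq \begin{cases} 0, & \mathcal E_j\le E_0+h,\\ W_j=\sum_{k<j}\pi_k, & \mathcal E_j>E_0+h,\end{cases}
\]
and $V'_t\coloneqq\sum_j W'_j\Tr(\Pi_j\rho_t)$. Since $W'_j\ge\pi_0\ge\eta$ on every bad level, we have $V'_R\ge\eta\,(1-\Tr(P_{\le E_0+h}\rho_R))$. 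It therefore suffices to show $V'_R<\eta\xi$, and this will follow from the same recursion $V'_{t+1}\le\frac{43}{64}V'_t+(\text{error})$ with the same error constants as before. The comparator built with $h$ in place of $\gamma$ satisfies \eqref{eq:pair-pass}--\eqref{eq:rab-monotone-close} with $\gamma\to h$. The proof of \Cref{thm:qsp-filter} never used the gap promise, only $h>0$ and $h\le 2\alpha_H$ (which keeps $\tau$ and $\Delta\gtrsim h/\alpha_H$ as before). Hence the resource bounds transfer verbatim with $\gamma\to h$.

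\emph{Bad anchor levels} ($\mathcal E_j>E_0+h$). Here $\mathcal E_j-\mathcal E_0\ge h$, so $r_{j0}\ge1-\zeta$ and $\lambda_j\ge\eta(1-\zeta)$ exactly as in \eqref{eq:lambda-excited-lower}. Consequently $s_j^{\rm SD},s_j^{\rm CSD}\ge3/4$ by the unchanged arguments (including \Cref{lem:coherent-witness-amplification}, whose excited-case proof only uses $\lambda_j\ge\eta(1-\zeta)$). The proof of \Cref{lem:successful-candidate-potential} uses only the monotone tracking bound, the comparator bounds and $\lambda_j\ge\eta(1-\zeta)$, never the gap. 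It therefore gives $\E[W_{\rm cand}]\le\frac9{16}W_j+\zeta/(\eta(1-\zeta))$, and since $W'\le W$ pointwise this bounds $\E[W'_{\rm cand}]$ too. Combined with \Cref{lem:measured-direction-preserved,lem:coherent-direction-preserved}, this yields $\E[W'_{\rm new}\mid j]\le\frac{43}{64}W'_j+\zeta/(\eta(1-\zeta))$.

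\emph{Good anchor levels} ($\mathcal E_j\le E_0+h$, so $W'_j=0$). This replaces the ground-state stability step, and I expect it to be the main obstacle: $\lambda_j$ need not be small, since candidates below $\mathcal E_j$ legitimately fire. The key observation is that any accepted candidate with $W'_k>0$ has $\mathcal E_k>E_0+h\ge\mathcal E_j$. For such a candidate $\mathcal E_j-\mathcal E_k\le0$, so $r_{jk}\le\zeta$, and the joint per-trial probability of ``witness and bad candidate'' is $\sum_{k\,\mathrm{bad}}\pi_kr_{jk}\le\zeta$. Swaps to good candidates keep $W'=0$.
\begin{itemize}
\item For SD, a union bound over at most $M$ trials gives $\E[W'_{\rm new}\mid j]\le M\zeta$.
\item For CSD, \Cref{lem:coherent-direction-preserved} gives $\Pr[\text{bad swap}]=s_j^{\rm CSD}\,q_j(\mathrm{bad})\le (s_j^{\rm CSD}/\lambda_j)\,\zeta$. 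To control the ratio, I would use that the FPAA output amplitude on the witness subspace is an odd polynomial $P_L(\sqrt{\lambda_j})$ of degree $L$ bounded by $1$ on $[-1,1]$. By the Markov/Bernstein-type bound $|P_L(x)|\le L|x|$, this gives $s_j^{\rm CSD}\le L^2\lambda_j$, hence $\E[W'_{\rm new}\mid j]\le L^2\zeta\le 2L^2\zeta$.
\end{itemize}
If the Markov step proves awkward, I would instead redo the two-dimensional rotation analysis from \Cref{app:coherent-fixed-point-search} directly.

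Finally, the block-diagonality argument of \Cref{app:measured-potential-recursion,app:coherent-potential-recursion} lifts these fixed-level bounds to mixed anchors. Every anchor stays supported on occupied levels of $|\psi\rangle$, and the argument gives \eqref{eq:V-recursion-measured} and \eqref{eq:V-recursion-coherent} with $V'$ in place of $V$. With $\zeta=\eta^2\xi/256$ and the same $M,L,R$, the identical arithmetic yields $V'_R<\eta\xi$, and hence $\Tr(P_{\le E_0+h}\rho_R)\ge1-\xi$.
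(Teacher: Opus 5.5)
Your proposal is correct and follows the paper's proof essentially step for step: the same truncated potential $W^{(h)}_j$ and bad/good case split, the union bound $M\zeta$ for SD, the bound $\frac{P_L(\lambda_j)}{\lambda_j}\sum_{k\,\mathrm{bad}}\pi_k r_{jk}\le O(L^2)\zeta$ for CSD via preservation of $q_j(k)$, and the same block-diagonal lift to mixed anchors. The only divergence is how you get $P_L(\lambda)=O(L^2\lambda)$ for all $\lambda$, and there your label ``Markov'' is too weak. The paper reuses its Markov-inequality estimate \eqref{eq:CSD-fixed-point-low-side} when $L^2\lambda\le 1/2$ and the trivial bound $P_L\le 1$ otherwise. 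Your bound $|P(x)|\le L|x|$ for a bounded odd degree-$L$ polynomial is true, but it is Schur's inequality (equivalently $|\sum_k b_k\sin k\theta|\le L|\sin\theta|$ for a bounded sine polynomial), not Markov's. Markov at $x=0$ would only give $L^2|x|$, i.e.\ $s\le L^4\lambda$. With $L^2=\Theta(1/\eta)$ and $\zeta=\eta^2\xi/256$, that leaves an additive error of order $\xi$ rather than $\eta\xi$, which is too large for the stated parameters.
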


\begin{proof}
Choose $\tau=\pi/[4(2\alpha_H+h/2)]$ and replace the shift $-\gamma/2$ by $-h/2$ in the SWAP echo.
The comparator construction in \Cref{sec:qsp-comparator} then gives $r_{jk}\le\zeta$ when $\mathcal E_k\ge\mathcal E_j$, $r_{jk}\ge1-\zeta$ when $\mathcal E_j-\mathcal E_k\ge h$, and the same monotonic tracking guarantee as before.

The original potential $W_j$ distinguishes the ground state from every excited energy.
Here we instead assign zero potential to the entire desired interval, by defining
\begin{equation}
W_j^{(h)}\coloneqq \begin{cases} 0, & \mathcal E_j\le E_0+h,\\ W_j, & \mathcal E_j>E_0+h, \end{cases} \qquad V_t^{(h)}\coloneqq\sum_jW_j^{(h)}\Tr(\Pi_j\rho_t). \label{eq:band-potential}
\end{equation}
Since $W_j\ge\eta$ for every $j>0$, we have $V_t^{(h)}\ge\eta\Tr[(I-P_{\le E_0+h})\rho_t]$ on the occupied support.
We will show that this potential obeys the same recurrence as in the ground-state analysis.

First fix an anchor energy $\mathcal E_j>E_0+h$.
The ground-state candidate is lower by more than $h$, so $\lambda_j\ge\pi_0r_{j0}\ge\eta(1-\zeta)$.
Consequently, a descent stage finds a witness with probability at least $3/4$ for both SD and CSD.
The proof of \Cref{lem:successful-candidate-potential} still applies to this anchor: its only use of the ground-state gap was to obtain the lower bound $\sum_{k<j}\pi_kr_{jk}\ge\eta(1-\zeta)$, which we have just verified.
The ordering of the ideal comparator and its approximation error are unchanged.
Since $W_k^{(h)}\le W_k$ for every candidate and $W_j^{(h)}=W_j$, the conditional candidate bound gives
\begin{equation}
\E[W_{\rm new}^{(h)}\mid j] \le\frac{43}{64}W_j^{(h)} +\frac{\zeta}{\eta(1-\zeta)}, \qquad \mathcal E_j>E_0+h. \label{eq:band-excited-contraction}
\end{equation}
Here, as before, both repetition and FPAA preserve the conditional candidate distribution $q_j(k)=\pi_kr_{jk}/\lambda_j$.

Now fix $\mathcal E_j\le E_0+h$.
Any candidate outside the desired interval has $\mathcal E_k>E_0+h\ge\mathcal E_j$, so its witness probability is at most $\zeta$.
The total one-trial witness weight on such candidates is therefore $\sum_{k:\mathcal E_k>E_0+h}\pi_kr_{jk}\le\zeta$.
For SD, a union bound over at most $M$ trials bounds the probability of leaving the interval by $M\zeta$.
This argument allows witnesses that move the anchor to other energies inside the interval; those outcomes still have zero potential.

For CSD, we need to control amplification of the witness weight outside the interval even when $\lambda_j$ itself is large.
The FPAA success probability satisfies $P_L(\lambda)\le2L^2\lambda$ for every $\lambda\in[0,1]$.
Indeed, \eqref{eq:CSD-fixed-point-low-side} proves this when $L^2\lambda\le1/2$, while for $L^2\lambda>1/2$ it follows from $P_L(\lambda)\le1<2L^2\lambda$.
When $\lambda_j>0$, preservation of $q_j(k)$ therefore gives
\begin{equation}
\Pr[\text{leave the interval}\mid j] =\frac{P_L(\lambda_j)}{\lambda_j} \sum_{k:\mathcal E_k>E_0+h}\pi_kr_{jk} \le2L^2\zeta. \label{eq:band-coherent-leakage}
\end{equation}
If $\lambda_j=0$, no witness is produced and the same bound holds.
Since every potential is at most one, the expected potential after a stage starting inside the interval is at most $M\zeta$ for SD and $2L^2\zeta$ for CSD.

These fixed-energy bounds also apply to an arbitrary anchor state.
Before the conditional SWAP, the comparison and coherent-search operations preserve the anchor-energy sectors.
On a no-witness branch, an off-diagonal anchor-energy block contributes zero to every output energy population; on a witness branch, it vanishes when the old anchor is discarded.
Thus the population argument in \Cref{app:measured-potential-recursion,app:coherent-potential-recursion} applies with $W_j^{(h)}$ in place of $W_j$.
Averaging the fixed-energy bounds yields
\begin{equation}
V_{t+1}^{(h)}\le\frac{43}{64}V_t^{(h)} +\frac{\zeta}{\eta(1-\zeta)} +\begin{cases} M\zeta, & \mathrm{SD},\\ 2L^2\zeta, & \mathrm{CSD}. \end{cases} \label{eq:band-potential-recursion}
\end{equation}
The parameters satisfy $M<3/\eta$, $L\le4/\sqrt\eta$, and $1-\zeta\ge3/4$.
Also, $\zeta=\eta^2\xi/256$ satisfies the small-error condition in \Cref{lem:successful-candidate-potential}.
Since $V_0^{(h)}\le1$, iterating~\eqref{eq:band-potential-recursion} gives, for either algorithm, $V_R^{(h)}\le(43/64)^R+(25/63)\eta\xi$.
The stated choice of $R$ makes the first term at most $\eta\xi/2$, so $V_R^{(h)}\le(113/126)\eta\xi<\eta\xi$.
Dividing by $\eta$ proves~\eqref{eq:gap-free-low-energy}.

Finally, one comparator uses $O(h^{-1}\log(1/\zeta))$ evolution time and $O((\alpha_H/h+1)\log(1/\zeta))$ elementary evolution queries.
The numbers of stages, trials, and FPAA iterates have not changed, and $h\le2\alpha_H$ allows the additive one in the query bound to be absorbed.
The resource counts therefore follow exactly as in \Cref{thm:measured-gsp,thm:coherent-descent}, with $h$ in place of $\gamma$.
\end{proof}

\subsubsection{Estimating the energy from bounded measurement outcomes}

We first retain the measurement model in which one output state supplies a real random variable $X_H$ satisfying $\E[X_H]=\Tr(H\rho)$ and $|X_H|\le\beta_H$.
We may take the known bound $\beta_H\ge\alpha_H$ by enlarging it if necessary.
This model includes, for example, randomized measurements of terms in a known decomposition of $H$.
If $H=\sum_a h_aP_a$ with Hermitian unitaries $P_a$ and real coefficients $h_a$, selecting $a$ with probability $|h_a|/\sum_b|h_b|$, measuring $P_a$, and multiplying the outcome by $\operatorname{sgn}(h_a)\sum_b|h_b|$ gives such an estimator with $\beta_H=\sum_a|h_a|$.

Recall that both algorithms preserve the occupied spectral support of $|\psi\rangle$.
For SD this follows from the comparator action and the use of fresh candidates.
For CSD, the normalized witness component before and after FPAA is the same, while a no-witness outcome retains the original anchor-energy sector, so the same support statement holds.
Define the occupied spectral width by
\begin{equation}
D_H\coloneqq\max_{j:c_j\ne0}(E_j-E_0)\le2\alpha_H. \label{eq:Dham}
\end{equation}
For an output satisfying~\eqref{eq:gap-free-low-energy}, splitting the energy expectation into levels inside and outside the interval gives
\begin{align}
0\le\Tr(H\rho_R)-E_0 &=\sum_j(\mathcal E_j-E_0)\Tr(\Pi_j\rho_R) \nonumber \\
&\le h\Tr(P_{\le E_0+h}\rho_R) +D_H\Tr[(I-P_{\le E_0+h})\rho_R] \le h+2\alpha_H\xi. \label{eq:gsee-bias}
\end{align}
Thus, for $0<\varepsilon<\alpha_H$, choosing $h=\varepsilon/4$ and $\xi=\varepsilon/(8\alpha_H)$ bounds the preparation bias by $\varepsilon/2$.
These choices depend only on known parameters, not on $E_0$ or $D_H$.
Hoeffding's inequality then shows that $N_E=\lceil8\beta_H^2\varepsilon^{-2}\log(2/\delta)\rceil$ independent preparations and measurements estimate $\Tr(H\rho_R)$ to error at most $\varepsilon/2$, except with probability $\delta$.

\begin{theorem}[Ground-State Energy Estimation via Spectral Descent]
\label{thm:gsee-spectral-descent}
Consider an instance of the GSEE problem (Problem~\ref{prob:gsee}) with $0<\varepsilon<\alpha_H$, and suppose that fresh copies of the input state $|\psi\rangle$ can be prepared.
Suppose further that the bounded energy measurement described above is available with $\beta_H\ge\alpha_H$.
Choose $h=\varepsilon/4$, $\xi=\varepsilon/(8\alpha_H)$, $\tau=\pi/[4(2\alpha_H+h/2)]$, and $U_H\coloneqq\e^{-\iu H\tau}$.
Then independent applications of the SD algorithm with the parameters in \Cref{lem:gap-free-descent}, followed by the energy measurements, produce an estimate $\widehat E_0$ satisfying
$|\widehat E_0-E_0|\le\varepsilon$ with probability at least $1-\delta$.
The procedure uses
\begin{enumerate}[label=(\roman*),leftmargin=*]
    \item $T_H^{\rm SD\text{-}GSEE}=O\!\left(\frac{\beta_H^2}{\eta\varepsilon^3}\log\frac{2}{\delta}\,\log^2\frac{16\alpha_H}{\eta\varepsilon}\right)$ total Hamiltonian-evolution time;
    \item $N_H^{\rm SD\text{-}GSEE}=O\!\left(\frac{\alpha_H\beta_H^2}{\eta\varepsilon^3}\log\frac{2}{\delta}\,\log^2\frac{16\alpha_H}{\eta\varepsilon}\right)$ queries to the uncontrolled oracles $U_H$ and $U_H^\dagger$; and
    \item $N_\psi^{\rm SD\text{-}GSEE}=O\!\left(\frac{\beta_H^2}{\eta\varepsilon^2}\log\frac{2}{\delta}\,\log\frac{16\alpha_H}{\eta\varepsilon}\right)$ fresh copies of $|\psi\rangle$.
\end{enumerate}
In addition, the procedure uses $O(\beta_H^2\varepsilon^{-2}\log(2/\delta))$ energy measurements.
No gap promise is required, and the state preparations do not require access to $\mathrm{ctrl}\text{-}U_H$.
\end{theorem}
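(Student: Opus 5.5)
The proof combines three ingredients already established: the gap-free preparation guarantee of \Cref{lem:gap-free-descent}, the bias bound~\eqref{eq:gsee-bias}, and a Hoeffding estimate over independent runs.

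\textbf{Accuracy.} We run SD with the comparator shift $-h/2$ and time step $\tau=\pi/[4(2\alpha_H+h/2)]$, where $h=\varepsilon/4$ and $\xi=\varepsilon/(8\alpha_H)$. The hypothesis $0<\varepsilon<\alpha_H$ gives $\xi\in(0,1)$ and $h\le 2\alpha_H$, so \Cref{lem:gap-free-descent} applies. It yields, for each run, $\Tr(P_{\le E_0+h}\rho_R)\ge 1-\xi$. By~\eqref{eq:gsee-bias}, each output then satisfies
\[
0\le \Tr(H\rho_R)-E_0\le h+2\alpha_H\xi=\tfrac{\varepsilon}{4}+\tfrac{\varepsilon}{4}=\tfrac{\varepsilon}{2}.
\]
We produce $N_E=\lceil 8\beta_H^2\varepsilon^{-2}\log(2/\delta)\rceil$ independent outputs and apply the bounded estimator $X_H\in[-\beta_H,\beta_H]$ to each. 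Let $\widehat E_0$ be the empirical mean. Hoeffding's inequality gives $|\widehat E_0-\Tr(H\rho_R)|\le\varepsilon/2$ except with probability $\delta$. The triangle inequality then gives $|\widehat E_0-E_0|\le\varepsilon$. Independence of the runs is what licenses Hoeffding, and each sample has mean exactly $\Tr(H\rho_R)$ because every run has the same output distribution $\rho_R$.

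\textbf{Resources.} By \Cref{lem:gap-free-descent}, one SD run has the cost of \Cref{thm:measured-gsp} with $\gamma$ replaced by $h$:
\begin{itemize}
\item evolution time $O(h^{-1}\eta^{-1}\log^2(1/(\eta\xi)))$;
\item $O(\alpha_H h^{-1}\eta^{-1}\log^2(1/(\eta\xi)))$ queries to $U_H$ and $U_H^\dagger$;
\item $O(\eta^{-1}\log(2/(\eta\xi)))$ fresh copies of $|\psi\rangle$.
\end{itemize}
Substituting $h=\varepsilon/4$ and $\xi=\varepsilon/(8\alpha_H)$ gives $1/(\eta\xi)=8\alpha_H/(\eta\varepsilon)$. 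Hence all logarithms are $O(\log(16\alpha_H/(\eta\varepsilon)))$, and the factor $1/h$ becomes $4/\varepsilon$. Multiplying by $N_E=O(\beta_H^2\varepsilon^{-2}\log(2/\delta))$ gives items (i)--(iii) and the stated count of energy measurements.

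The only point needing care is the query bound. The comparator query count is $O((\alpha_H/h+1)\log(1/\zeta))$, and the additive $1$ must be absorbed using $h\le 2\alpha_H$, which \Cref{lem:gap-free-descent} already handles. Finally, no gap promise is used anywhere, and SD uses only $U_H$, $U_H^\dagger$ and cSWAP gates, so no $\mathrm{ctrl}\text{-}U_H$ is required.
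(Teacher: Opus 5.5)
Your proposal is correct and follows the paper's proof essentially step for step. You invoke \Cref{lem:gap-free-descent} and the bias bound~\eqref{eq:gsee-bias} to get $0\le\Tr(H\rho_R)-E_0\le\varepsilon/2$, apply Hoeffding with $N_E=\lceil 8\beta_H^2\varepsilon^{-2}\log(2/\delta)\rceil$ samples, and multiply the per-run SD costs (with $\gamma$ replaced by $h=\varepsilon/4$) by $N_E$. Your explicit checks that $\xi\in(0,1)$, that $h\le 2\alpha_H$, and that $2/(\eta\xi)=16\alpha_H/(\eta\varepsilon)$ supply details the paper leaves implicit.
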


\begin{proof}
By \Cref{lem:gap-free-descent} and~\eqref{eq:gsee-bias}, each SD output has energy expectation in $[E_0,E_0+\varepsilon/2]$.
For the average $\widehat E_0$ of $N_E$ independent measurement outcomes, Hoeffding's inequality gives $\Pr[|\widehat E_0-\Tr(H\rho_R)|>\varepsilon/2]\le2\exp[-N_E\varepsilon^2/(8\beta_H^2)]\le\delta$.
On the complementary event, the triangle inequality gives $|\widehat E_0-E_0|\le\varepsilon$.
One preparation uses $O((\eta\varepsilon)^{-1}\log^2(16\alpha_H/(\eta\varepsilon)))$ evolution time, $O(\alpha_H(\eta\varepsilon)^{-1}\log^2(16\alpha_H/(\eta\varepsilon)))$ Hamiltonian queries, and $O(\eta^{-1}\log(16\alpha_H/(\eta\varepsilon)))$ fresh input copies.
Multiplying by $N_E=O(\beta_H^2\varepsilon^{-2}\log(2/\delta))$ gives items~$(i)$--$(iii)$.
\end{proof}

\begin{theorem}[Ground-State Energy Estimation via Coherent Spectral Descent]
\label{thm:gsee-coherent-descent}
Consider an instance of the GSEE problem (Problem~\ref{prob:gsee}) with $0<\varepsilon<\alpha_H$, and suppose that we have coherent access to $U_\psi$ and $U_\psi^\dagger$, where $U_\psi|0^n\rangle=|\psi\rangle$.
Assume the same bounded energy-measurement access as in \Cref{thm:gsee-spectral-descent}.
Choose $h=\varepsilon/4$, $\xi=\varepsilon/(8\alpha_H)$, $\tau=\pi/[4(2\alpha_H+h/2)]$, and $U_H\coloneqq\e^{-\iu H\tau}$.
Then independent applications of the CSD algorithm with the parameters in \Cref{lem:gap-free-descent}, followed by the energy measurements, produce an estimate $\widehat E_0$ satisfying $|\widehat E_0-E_0|\le\varepsilon$ with probability at least $1-\delta$.
The procedure uses
\begin{enumerate}[label=(\roman*),leftmargin=*]
    \item $T_H^{\rm CSD\text{-}GSEE}=O\!\left(\frac{\beta_H^2}{\sqrt\eta\,\varepsilon^3}\log\frac{2}{\delta}\,\log^2\frac{16\alpha_H}{\eta\varepsilon}\right)$ total Hamiltonian-evolution time;
    \item $N_H^{\rm CSD\text{-}GSEE}=O\!\left(\frac{\alpha_H\beta_H^2}{\sqrt\eta\,\varepsilon^3}\log\frac{2}{\delta}\,\log^2\frac{16\alpha_H}{\eta\varepsilon}\right)$ queries to the uncontrolled oracles $U_H$ and $U_H^\dagger$; and
    \item $N_{U_\psi}^{\rm CSD\text{-}GSEE}=O\!\left(\frac{\beta_H^2}{\sqrt\eta\,\varepsilon^2}\log\frac{2}{\delta}\,\log\frac{16\alpha_H}{\eta\varepsilon}\right)$ queries to $U_\psi$ and $U_\psi^\dagger$.
\end{enumerate}
In addition, the procedure uses $O(\beta_H^2\varepsilon^{-2}\log(2/\delta))$ energy measurements.
No gap promise is required, and the state preparations do not require access to $\mathrm{ctrl}\text{-}U_H$.
\end{theorem}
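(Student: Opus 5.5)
The proof mirrors that of \Cref{thm:gsee-spectral-descent}, with CSD in place of SD. It has two ingredients: a bias bound for one prepared state, and a Hoeffding bound for the averaged energy measurements.

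\emph{Bias of one preparation.} Apply \Cref{lem:gap-free-descent} to CSD with $h=\varepsilon/4$ and $\xi=\varepsilon/(8\alpha_H)$. The hypotheses hold: $h\le 2\alpha_H$ because $\varepsilon<\alpha_H$, and $\xi\in(0,1)$. The lemma then gives $\Tr(P_{\le E_0+h}\rho_R)\ge 1-\xi$ with no gap promise. By the support-preservation remark for CSD, the output stays on the occupied spectrum of $|\psi\rangle$, so \eqref{eq:gsee-bias} applies and yields
\begin{equation}
0\le \Tr(H\rho_R)-E_0\le h+2\alpha_H\xi=\varepsilon/2 .
\end{equation}
The one point to check is that \eqref{eq:gsee-bias} really holds for CSD outputs. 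Its only use of support is the bound $D_H\le 2\alpha_H$, and that bound holds trivially for any state because $\|H\|\le\alpha_H$. So the step is not an obstacle.

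\emph{Averaging the measurements.} Run CSD independently $N_E=\lceil 8\beta_H^2\varepsilon^{-2}\log(2/\delta)\rceil$ times and apply the bounded energy measurement $X_H\in[-\beta_H,\beta_H]$ to each output. Let $\widehat E_0$ be the average of the outcomes. Hoeffding's inequality gives
\begin{equation}
\Pr\bigl[|\widehat E_0-\Tr(H\rho_R)|>\varepsilon/2\bigr]\le 2\exp\!\left[-\frac{N_E\varepsilon^2}{8\beta_H^2}\right]\le\delta .
\end{equation}
Combining this with the bias bound through the triangle inequality gives $|\widehat E_0-E_0|\le\varepsilon$ with probability at least $1-\delta$.

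\emph{Resources.} By \Cref{lem:gap-free-descent}, each run obeys the bounds of \Cref{thm:coherent-descent} with $\gamma$ replaced by $h=\varepsilon/4$. We have $1/(\eta\xi)=8\alpha_H/(\eta\varepsilon)$, so $\log(1/(\eta\xi))\le\log(16\alpha_H/(\eta\varepsilon))$. One run therefore uses
\begin{itemize}
\item $O\bigl((\sqrt\eta\,\varepsilon)^{-1}\log^2(16\alpha_H/(\eta\varepsilon))\bigr)$ evolution time;
\item $O\bigl(\alpha_H(\sqrt\eta\,\varepsilon)^{-1}\log^2(16\alpha_H/(\eta\varepsilon))\bigr)$ queries to $U_H$ and $U_H^\dagger$;
\item $O\bigl(\eta^{-1/2}\log(16\alpha_H/(\eta\varepsilon))\bigr)$ queries to $U_\psi$ and $U_\psi^\dagger$.
\end{itemize}
The additive $+1$ in the per-comparator query count is absorbed because $h\le2\alpha_H$. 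Multiplying each bound by $N_E=O(\beta_H^2\varepsilon^{-2}\log(2/\delta))$ gives items (i)--(iii), and the number of energy measurements is $N_E$.

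Finally, CSD itself queries only $U_H$ and $U_H^\dagger$, so none of the state preparations needs $\mathrm{ctrl}\text{-}U_H$. No step is a genuine obstacle: all the substance is already contained in \Cref{lem:gap-free-descent}, in particular its CSD leakage bound \eqref{eq:band-coherent-leakage}.
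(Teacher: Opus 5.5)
Your proposal is correct and follows the paper's proof essentially step for step. You bound the bias by $\Tr(H\rho_R)-E_0\le h+2\alpha_H\xi=\varepsilon/2$ using \Cref{lem:gap-free-descent} and \eqref{eq:gsee-bias}, apply Hoeffding with the same $N_E$, and multiply the per-run CSD costs of \Cref{thm:coherent-descent} (with $\gamma$ replaced by $h=\varepsilon/4$) by $N_E$. Your extra observation that \eqref{eq:gsee-bias} needs only $\|H\|\le\alpha_H$, rather than preservation of the occupied spectral support, is valid, since every eigenvalue lies within $2\alpha_H$ of $E_0$; this makes that step slightly more robust than the paper's appeal to the support remark.
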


\begin{proof}
The low-energy guarantee in \Cref{lem:gap-free-descent} is the same for CSD as for SD, so~\eqref{eq:gsee-bias} again bounds the preparation bias by $\varepsilon/2$.
The measurement analysis in \Cref{thm:gsee-spectral-descent} therefore proves the claimed accuracy with the same number $N_E$ of independent outputs.
For each output, the CSD resource bounds are $O((\varepsilon\sqrt\eta)^{-1}\log^2(16\alpha_H/(\eta\varepsilon)))$ evolution time, $O(\alpha_H(\varepsilon\sqrt\eta)^{-1}\log^2(16\alpha_H/(\eta\varepsilon)))$ Hamiltonian queries, and $O(\eta^{-1/2}\log(16\alpha_H/(\eta\varepsilon)))$ input-state oracle queries.
Multiplying by $N_E$ gives the three stated bounds.
\end{proof}

\section{Conclusions and Discussion}
\label{sec:conclusions}

We introduced spectral descent as a general approach to ground-state preparation using only uncontrolled Hamiltonian evolution.
Our two algorithms, SD and CSD, use fresh copies of an input state and coherent access to its preparation unitary, respectively.
CSD matches the known GSP lower bound up to logarithmic factors, establishing that controlled Hamiltonian evolution is unnecessary for attaining the near-optimal scaling.
The same framework also gives ground-state property and energy-estimation algorithms under the measurement-access assumptions specified in \Cref{sec:ground-state-estimation}.

For GSEE, allowing controlled Hamiltonian evolution only after descent gives a near-optimal algorithm.
By \Cref{lem:gap-free-descent}, running SD or CSD with $h=\varepsilon/2$ and $\xi=1/16$ produces a state with weight at least $15/16$ on energies in $[E_0,E_0+\varepsilon/2]$, without a spectral-gap promise.
Phase estimation with additive error $\varepsilon/2$ and failure probability at most $1/16$ then estimates $E_0$ to error $\varepsilon$ with probability at least $7/8$, using $O(1/\varepsilon)$ additional evolution time.
Repeating the entire procedure $O(\log(2/\delta))$ times and taking the median gives
\begin{equation}
T_H^{\rm SD+PE} = O\!\left( \frac{\log^2(2/\eta)\log(2/\delta)} {\eta\varepsilon} \right), \qquad T_H^{\rm CSD+PE} = O\!\left( \frac{\log^2(2/\eta)\log(2/\delta)} {\sqrt{\eta}\,\varepsilon} \right).
\end{equation}
The CSD bound matches the known lower bound
$\Omega(\log(1/\delta)/(\sqrt{\eta}\,\varepsilon))$
up to logarithmic factors in $\eta$, in the applicable parameter and dimension regime~\cite{Somma2026}.
Only the phase-estimation step requires controlled evolution, with total controlled evolution time $O(\varepsilon^{-1}\log(2/\delta))$.
This replaces the bounded energy-measurement assumption with controlled Hamiltonian access.

For GSPE, the coherent estimation extension in \Cref{sec:gspe-from-gsp} improves the precision dependence from $1/\varepsilon^2$ to $1/\varepsilon$, up to logarithmic factors, while retaining uncontrolled Hamiltonian access under the stated observable-access assumption.
The resulting total evolution time is $\widetilde O(1/(\gamma\sqrt{\eta}\,\varepsilon))$, but we have not established whether this joint dependence is optimal.
In particular, it remains open whether the precision cost $1/\varepsilon$ must multiply the full preparation cost $1/(\gamma\sqrt{\eta})$, or whether an algorithm can estimate the property with a better joint dependence. Resolving this question would clarify the optimal complexity of GSPE under these access assumptions.

\section{Acknowledgments}
The authors acknowledge the use of ChatGPT 5.6 and 6 for help with writing and reviewing the manuscript, including identifying technical errors and inconsistencies in earlier versions.
This work was supported by the U.S. Department of Energy, Office of Science, National Quantum Information Science Research Centers, Co-design Center for Quantum Advantage (C2QA) under contract number DE-SC0012704, and by the U.S. National Science Foundation National Quantum Virtual Laboratory (NQVL), Erasure Qubits and Dynamic Circuits for Quantum Advantage (ERASE), under grant number OSI-2435244.

\appendix

\section{The SWAP-Echo Identity}
\label{sec:proof-of-exact-signal-lemma}

Expanding the two controlled-SWAPs in~\eqref{eq:W} gives
\begin{equation}
\SWAPECHO(\tau) =|0\rangle\langle0|_q\otimes \e^{-\iu D_{AB}\tau} +|1\rangle\langle1|_q\otimes \operatorname{SWAP}_{AB}\e^{-\iu D_{AB}\tau} \operatorname{SWAP}_{AB}. \label{eq:W-branch-expansion}
\end{equation}
The SWAP exchanges the two tensor factors, so
\begin{align}
\operatorname{SWAP}_{AB}\e^{-\iu D_{AB}\tau} \operatorname{SWAP}_{AB} &=\operatorname{SWAP}_{AB} \left(\e^{-\iu H_A\tau}\otimes\e^{+\iu H_B\tau}\right) \operatorname{SWAP}_{AB} \\
&=\e^{+\iu H_A\tau}\otimes\e^{-\iu H_B\tau} \\
&=\e^{+\iu D_{AB}\tau}.
\end{align}
Substituting this into~\eqref{eq:W-branch-expansion} gives the first equality in~\eqref{eq:W-exact}.  Since $Z_q|0\rangle=|0\rangle$ and $Z_q|1\rangle=-|1\rangle$, the spectral decomposition of $Z_q$ gives the second equality.

Finally, $Z_q\otimes D_{AB}$ commutes with $Z_q\otimes sI$.  Hence
\begin{align}
\e^{-\iu s\tau Z_q}\SWAPECHO(\tau) &=\e^{-\iu Z_q\otimes sI\,\tau} \e^{-\iu Z_q\otimes D_{AB}\tau} \\
&=\e^{-\iu Z_q\otimes(D_{AB}+sI)\tau},
\end{align}
which proves~\eqref{eq:Ws}.

\section{Polynomial Construction of the Energy Comparator}
\label{app:soft-comparator-polynomial}

In this appendix, we prove \Cref{lem:soft-comparator-polynomial}. 
Recall from~\eqref{eq:rstar} that
\begin{equation}
r_\star(y) = \frac{1+\operatorname{erf}(\kappa y)}{2}, \qquad \kappa = \frac{1}{\Delta} \sqrt{\log\frac{8}{\zeta}}.
\end{equation}
We will first approximate the error function $\operatorname{erf}(\cdot)$ and then rescale the resulting polynomial so that the final comparison probability remains in the interval $[0,1]$.

\begin{proof}[Proof of \Cref{lem:soft-comparator-polynomial}]

Set $\epsilon_p\coloneqq \frac{\zeta}{32}$.
The polynomial approximation to the error function given by~\cite[Corollary~4]{Low2017} implies that there exists a real odd polynomial $p_n$ satisfying
\begin{equation}
\sup_{y\in[-1,1]} \left| p_n(y)-\operatorname{erf}(\kappa y) \right| \le \epsilon_p, \label{eq:erf-poly-error}
\end{equation}
with degree
\begin{equation}
d = O\!\left( \sqrt{ \left( \kappa^2+\log\frac{1}{\epsilon_p} \right) \log\frac{1}{\epsilon_p} } \right). \label{eq:erf-degree}
\end{equation}

The polynomial $p_n$ approximates the desired function uniformly, but it is not necessarily bounded by $1$. 
We therefore first enforce this boundedness.
Since $\left| \operatorname{erf}(\kappa y) \right| \le1$.
for every real $y$, \eqref{eq:erf-poly-error} gives
\begin{equation}
|p_n(y)| \le \left| p_n(y)-\operatorname{erf}(\kappa y) \right| + \left| \operatorname{erf}(\kappa y) \right| \le \epsilon_p+1.
\end{equation}
Define the rescaled polynomial
\begin{equation}
\bar p_n(y) \coloneqq \frac{p_n(y)}{1+\epsilon_p}. \label{eq:pbar}
\end{equation}
It follows immediately that $|\bar p_n(y)| \le1$ for every $y\in[-1,1]$.

We next verify that this rescaling does not significantly worsen the approximation error. 
Using~\eqref{eq:pbar}, we have
\begin{align}
\left| \bar p_n(y) - \operatorname{erf}(\kappa y) \right| &= \left| \frac{ p_n(y) - (1+\epsilon_p) \operatorname{erf}(\kappa y) } {1+\epsilon_p} \right| \\
&\le \frac{ \left| p_n(y)-\operatorname{erf}(\kappa y) \right| + \epsilon_p \left| \operatorname{erf}(\kappa y) \right| } {1+\epsilon_p} \\
&\le \frac{2\epsilon_p}{1+\epsilon_p} \\
&\le 2\epsilon_p \\
&= \frac{\zeta}{16}. \label{eq:pbar-error}
\end{align}

We now define the polynomial that will represent the witness probability:
\begin{equation}
\widetilde r(y) \coloneqq \frac{1+\bar p_n(y)}{2}. \label{eq:rtilde-poly}
\end{equation}
Since $|\bar p_n(y)|\le1$, we have $0\le \widetilde r(y) \le 1$ for every $y\in[-1,1],$.

It remains now to compare $\widetilde r$ with $r_\star$. 
Using~\eqref{eq:rstar} and~\eqref{eq:rtilde-poly}, we have
\begin{align}
\left| \widetilde r(y)-r_\star(y) \right| &= \frac12 \left| \bar p_n(y) - \operatorname{erf}(\kappa y) \right| \\
&\le \frac12 \cdot \frac{\zeta}{16} \\
&= \frac{\zeta}{32},
\end{align}
where the inequality uses~\eqref{eq:pbar-error}. 
This proves~\eqref{eq:rtilde-monotone-close}.

We finally verify the degree claimed in the lemma statement. 
From the definition of $\kappa$, we have $\kappa^2 = \frac{1}{\Delta^2} \log\frac{8}{\zeta}$,
while $\log\frac{1}{\epsilon_p} = \log\frac{32}{\zeta}$.
Since $\zeta\in(0,1/4)$, both
$\log(8/\zeta)$ and $\log(32/\zeta)$ are
$O(\log(1/\zeta))$. 
Moreover, $0<\Delta\le1$, so $1/\Delta^2\ge1$. 
Therefore
\begin{align}
\kappa^2+\log\frac{1}{\epsilon_p} &= O\!\left( \frac{1}{\Delta^2} \log\frac{1}{\zeta} \right).
\end{align}
Substituting this bound into~\eqref{eq:erf-degree},
\begin{equation}
d = O\!\left( \sqrt{ \frac{1}{\Delta^2} \log\frac{1}{\zeta} \log\frac{1}{\zeta} } \right) = O\!\left( \frac{1}{\Delta} \log\frac{1}{\zeta} \right).
\end{equation}
This is exactly the degree bound in the lemma statement.
\end{proof}

\section{Laurent-QSP Implementation of the Energy Comparator}
\label{app:laurent-qsp-synthesis}

In this appendix, we prove \Cref{lem:laurent-qsp-comparator}. 
Recall from~\eqref{eq:R-Laurent} that
\begin{equation}
\mathcal R(z) = \widetilde r\!\left( \frac{z-z^{-1}}{2\iu} \right).
\end{equation}
Since $\widetilde r$ is a real polynomial of degree $n$, $\mathcal R$ is a Laurent polynomial of degree at most $n$. 
For $|z|=1$, we may write $z=\e^{2\iu\theta}$, in which case
$\frac{z-z^{-1}}{2\iu} = \sin(2\theta)\in[-1,1].$
Also, from~\Cref{lem:soft-comparator-polynomial}, we have $0\le\widetilde r(y)\le1$ for every $y\in[-1,1]$.
This implies that $0\le\mathcal R(z)\le1$ for $|z|=1$.

\begin{proof}[Proof of \Cref{lem:laurent-qsp-comparator}]

Because $\mathcal R(z)$ and $1-\mathcal R(z)$ are nonnegative Laurent polynomials on the unit circle, the Fej\'er--Riesz theorem gives Laurent-polynomial spectral factors $A(z)$ and $B(z)$, each of degree at most $n$, satisfying
\begin{equation}
A(z)A(z)^* = 1-\mathcal R(z), \qquad B(z)B(z)^* = \mathcal R(z), \label{eq:FR-factors}
\end{equation}
on $|z|=1$, where $f(z)^* \coloneqq \overline{f(1/\bar z)}$.
On the unit circle, $f(z)^*=\overline{f(z)}$, so~\eqref{eq:FR-factors} gives
$|A(z)|^2+|B(z)|^2=1.$

Next, define the matrix Laurent polynomial as follows:
\begin{equation}
F(z) \coloneqq \begin{pmatrix} A(z) & -B(z)^*\\ B(z) & A(z)^* \end{pmatrix}. \label{eq:F-Laurent}
\end{equation}
Using~\eqref{eq:FR-factors}, we have on $|z|=1$
$F(z)^\dagger F(z) = I,$
and
\begin{align}
\det F(z) &= A(z)A(z)^* + B(z)B(z)^* \\
&= 1-\mathcal R(z)+\mathcal R(z) \\
&= 1.
\end{align}
Hence
$F(z)\in SU(2)$
for $|z|=1.$

We now express this matrix polynomial in terms of the phase variable. 
Recall our notation
$t=\e^{\iu\theta}$ 
and
$z=t^2.$
Since $F(z)$ has Laurent degree at most $n$ in $z$, the matrix Laurent polynomial $F(t^2)$ has degree at most $2n$ in $t$.
Moreover, because the substitution $z=t^2$ doubles every exponent, all Laurent powers appearing in $F(t^2)$ are even.
Thus $F(t^2)$ satisfies the parity condition required by Haah's Laurent-QSP factorization theorem~\cite[Theorem~2]{Haah2019}.
The theorem therefore decomposes $F(t^2)$ into a product of at most $2n$ primitive signal factors, together with a residual single-qubit unitary.

Each primitive factor has the form
\begin{equation}
E_P(t) = tP+t^{-1}(I-P), \label{eq:primitive-factor}
\end{equation}
where $P$ is a rank-one projector on the signal qubit.

Define
\begin{equation}
V(t) \coloneqq \begin{pmatrix} t^{-1} & 0\\ 0 & t \end{pmatrix}.
\end{equation}
For any rank-one projector $P$, choose a single-qubit unitary $R_P$ satisfying $R_P|1\rangle\langle1|R_P^\dagger=P$. 
Then
\begin{equation}
E_P(t) = R_PV(t)R_P^\dagger. \label{eq:primitive-conjugation}
\end{equation}
Thus every primitive factor can be implemented using one call to $V(t)$, surrounded only by single-qubit rotations.

We now connect this Laurent-QSP factorization to the physical signal used in the main text. 
Equation~\eqref{eq:physical-Vt} already shows that, on every occupied energy pair $(a,b)$,
\begin{equation}
\SWAPECHO_\gamma(\tau)\big|_{a,b} = V(t_{ab}), \qquad t_{ab}=\e^{\iu\theta_{ab}}.
\end{equation}
Therefore each occurrence of $V(t)$ in the Laurent-QSP factorization is implemented by one direct call to $\SWAPECHO_\gamma(\tau)$, surrounded only by single-qubit rotations. 
Since $F(t^2)$ has Laurent degree at most $2n$ in $t$, the resulting sequence uses at most $2n$ calls to $\SWAPECHO_\gamma(\tau)$.

Let $C_\zeta$ denote this sequence. 
For
$z_{ab}=t_{ab}^2=\e^{2\iu\theta_{ab}},$
the action of $C_\zeta$ on an initial signal state $|0\rangle_q$ is given by the first column of $F(z_{ab})$. 
Hence
\begin{equation}
C_\zeta |0\rangle_q|E_aE_b\rangle = A(z_{ab}) |0\rangle_q|E_aE_b\rangle + B(z_{ab}) |1\rangle_q|E_aE_b\rangle,
\end{equation}
which is~\eqref{eq:comparator-action}.

Finally, the Fej\'er--Riesz factorization gives $|B(z)|^2=\mathcal R(z)$ for $|z|=1$.
Evaluating this identity at $z=z_{ab}$ and using~\eqref{eq:R-Laurent}, we get
\begin{align}
|B(z_{ab})|^2 &= \mathcal R(z_{ab}) \\
&= \widetilde r\!\left( \frac{z_{ab}-z_{ab}^{-1}}{2\iu} \right) \\
&= \widetilde r\!\left( \sin(2\theta_{ab}) \right) \\
&= \widetilde r(y_{ab}),
\end{align}
which proves~\eqref{eq:qsp-witness-probability}.

The Laurent-QSP sequence calls $\SWAPECHO_\gamma(\tau)$ directly; it does not require controlled access to the signal unitary. 
By \Cref{sec:swap-echo}, each call to $\SWAPECHO_\gamma(\tau)$ uses only uncontrolled forward and backward Hamiltonian evolution, controlled-SWAP gates, and single-qubit phases.
\end{proof}

\section{Candidate Distribution Conditioned on a Witness in SD}
\label{app:measured-witness-direction}

\begin{proof}[Proof of \Cref{lem:measured-direction-preserved}]
Fix an anchor energy level $\mathcal E_j$, and consider one descent stage containing at most $M$ comparison trials.

Let $\ell\in\{1,\ldots,M\}$ denote the index of the first comparison trial that produces the witness outcome $q=1$. 
For the first witness to occur on trial $\ell$, the preceding $\ell-1$ comparison trials must all produce $q=0$.

Because a no-witness outcome leaves the anchor unchanged, the anchor remains at energy $\mathcal E_j$ throughout these preceding trials. 
Therefore each of them has the same no-witness probability $1-\lambda_j$.

On comparison trial $\ell$, the probability that the fresh candidate register occupies energy level $\mathcal E_k$ and produces a witness is $\pi_k r_{jk}$. 
Hence the joint probability that the first successful comparison trial is trial $\ell$ and that its candidate register occupies level $\mathcal E_k$ is
\begin{equation}
\Pr[\text{first witness at } \ell, \mathcal E_k] = (1-\lambda_j)^{\ell-1} \pi_k r_{jk}. \label{eq:first-witness-joint}
\end{equation}

We now sum over the possible locations of the first successful comparison trial. 
The probability that the descent stage succeeds (finds a witness) and that the successful candidate register occupies level $\mathcal E_k$ is
\begin{equation}
\Pr[ \mathcal E_k,\text{ stage succeeds} ] = \sum_{\ell=1}^{M} (1-\lambda_j)^{\ell-1} \pi_k r_{jk} = \pi_k r_{jk} \sum_{\ell=1}^{M} (1-\lambda_j)^{\ell-1}. \label{eq:witness-k-summed}
\end{equation}

Similarly, summing over all possible cadidate energies gives the total probability that the descent stage succeeds:
\begin{align}
\Pr[ \text{stage succeeds} ] &= \sum_{\ell=1}^{M} (1-\lambda_j)^{\ell-1} \sum_k \pi_k r_{jk} \\
&= \lambda_j \sum_{\ell=1}^{M} (1-\lambda_j)^{\ell-1}, \label{eq:witness-total-summed}
\end{align}
where the definition of $\lambda_j$ in~\eqref{eq:lambda-j} was used in the second line.

Conditioning on the event that the descent stage succeeds, we therefore obtain
\begin{align}
\Pr[ k \mid \text{stage succeeds} ] &= \frac{ \pi_k r_{jk} \sum_{\ell=1}^{M} (1-\lambda_j)^{\ell-1} }{ \lambda_j \sum_{\ell=1}^{M} (1-\lambda_j)^{\ell-1} } \\
&= \frac{\pi_k r_{jk}}{\lambda_j} \\
&= q_j(k),
\end{align}
where the final equality uses~\eqref{eq:qjk}.

Thus the first successful candidate register in a descent stage has exactly the same conditional energy distribution as the candidate register in a single successful comparison trial. 
The additional comparison trials change only the probability that the descent stage finds a witness; they do not change the candidate-energy distribution conditioned on success.
\end{proof}

\section{Expected Candidate Potential Conditioned on a Witness}
\label{app:candidate-potential}

In this appendix, we prove \Cref{lem:successful-candidate-potential}, which bounds the expected potential of an accepted candidate state. 
Fix an excited anchor level $\mathcal E_j$, with $j>0$.
We proceed in three steps that mirror the narrative in~\Cref{sec:descent-potential}.
First, we compute the expected potential of a candidate sampled from the naive input distribution $\mu_j$, proving the factor of $1/2$ claimed in~\eqref{eq:unweighted-half-main}. 
Second, we show that weighting these candidates by the ideal monotone comparator $\nu_j^\star$ (defined in~\eqref{eq:v-j-star-define}) can only improve (decrease) this expected potential due to negative correlation, proving~\eqref{eq:ideal-candidate-half-main}.
Finally, we introduce the two physical sources of error, that is, QSP polynomial approximations and false witnesses (candidates with $k \ge j$), and bound their impact to arrive at our final result.

\begin{proof}[Proof of \Cref{lem:successful-candidate-potential}]

Recall from~\eqref{eq:Wj} the definition of the descent potential $W_j$ and also recall that the naive distribution $\mu_j$ is defined as $\mu_j(k) = \pi_k/W_j$ for $k<j$.
Since $j>0$, \eqref{eq:W-excited-lower} gives $W_j\ge\eta>0$, so this distribution is well defined.

We first calculate the expected potential under $\mu_j$. 
Using the definition $W_k=\sum_{\ell<k}\pi_\ell$, we get
\begin{equation}
\sum_{k<j}\pi_k W_k = \sum_{k<j} \pi_k \sum_{\ell<k}\pi_\ell = \sum_{\ell<k<j} \pi_\ell\pi_k. \label{eq:pair-sum-app}
\end{equation}
On the other hand, we have
\begin{equation}
W_j^2 = \left( \sum_{k<j}\pi_k \right)^2 = \sum_{k<j}\pi_k^2 + 2 \sum_{\ell<k<j} \pi_\ell\pi_k. \label{eq:W-square-app}
\end{equation}
Solving~\eqref{eq:W-square-app} for the pair sum and substituting it back into~\eqref{eq:pair-sum-app} yields the following identity:
\begin{equation}
\sum_{k<j}\pi_kW_k = \frac12 \left( W_j^2-\sum_{k<j}\pi_k^2 \right). \label{eq:pair-sum-evaluated-app}
\end{equation}
Dividing by $W_j$ to form the expectation gives us the strict contraction bound claimed in~\eqref{eq:unweighted-half-main}:
\begin{equation}
\E_{k\sim\mu_j}[W_k] = \sum_{k<j} \frac{\pi_k}{W_j}W_k = \frac{ W_j^2-\sum_{k<j}\pi_k^2 }{ 2W_j }\le \frac{W_j}{2}. \label{eq:unweighted-half-app}
\end{equation}

We next prove the first inequality in~\eqref{eq:ideal-candidate-half-main}. 
Recall that $\nu_j^\star$ (defined in~\eqref{eq:v-j-star-define}) is obtained from $\mu_j$ by weighting each lower-energy candidate $k<j$ by $r^\star_{jk}$. 
Because $r^\star_{jk}$ is monotonically decreasing with $k$, while the potential $W_k$ is monotonically increasing with $k$, they are negatively correlated.

We can make this negative correlation explicit.
Writing expectations with respect to $\mu_j$, we have
\begin{equation}
\E_{\mu_j}[r^\star] \E_{\mu_j}[W] - \E_{\mu_j}[r^\star W] = \frac12 \sum_{k,\ell<j} \mu_j(k)\mu_j(\ell) \left( r^\star_{jk}-r^\star_{j\ell} \right) \left( W_\ell-W_k \right). \label{eq:negative-correlation-identity-app}
\end{equation}
Note that every term on the right-hand side is nonnegative. 
Indeed, if $k<\ell$, then
$r^\star_{jk}\ge r^\star_{j\ell}$ and $W_k\le W_\ell$, while for $k>\ell$ both differences change sign. 
Therefore, the overall sum is nonnegative, giving:
\begin{equation}
\E_{\mu_j}[r^\star W] \le \E_{\mu_j}[r^\star] \E_{\mu_j}[W]. \label{eq:negative-correlation-app}
\end{equation}
By the definition of $\nu_j^\star$, its expected potential is exactly the ratio of these expectations, and then using the inequality above, we get
\begin{equation}
\E_{k\sim\nu_j^\star}[W_k] = \frac{ \E_{\mu_j}[r^\star W] }{ \E_{\mu_j}[r^\star] }\le \E_{\mu_j}[W_k] \le \frac{W_j}{2}, \label{eq:ideal-weighted-half-app}
\end{equation}
where the final inequality follows from~\eqref{eq:unweighted-half-app}. 
This proves~\eqref{eq:ideal-candidate-half-main}.
In particular, the factor $1/2$ follows from the cumulative form of $W_k$, while monotonicity of $r^\star_{jk}$ ensures that the comparator weighting cannot make this bound worse.

We now replace the ideal probabilities $r^\star_{jk}$ in the definition of $\nu_j^\star$ (see~\eqref{eq:v-j-star-define}) by the probabilities $r_{jk}$, and define the following distribution:
\begin{equation}
\widetilde\nu_j(k) \coloneqq \frac{\pi_k r_{jk}}{\sum_{\ell<j}\pi_\ell r_{j\ell}}, \qquad k<j. \label{eq:nu-physical-app}
\end{equation}
With that in place, we now compare $\widetilde\nu_j$ with the ideal distribution $\nu_j^\star$.
Consider
\begin{equation}
\sum_{k<j} \pi_k |r_{jk}-r^\star_{jk}| \le \frac{\zeta}{32} \sum_{k<j}\pi_k = \frac{\zeta}{32}W_j. \label{eq:unnormalized-error-app}
\end{equation}
Also, the total witness weight coming from these lower-energy candidates satisfies
\begin{equation}
\sum_{k<j}\pi_k r_{jk} \ge \pi_0 r_{j0} \ge \eta(1-\zeta), \label{eq:lower-witness-weight-app}
\end{equation}
where we used $j>0$, the overlap promise $\pi_0 \geq \eta$, and~\eqref{eq:pair-reject}.

We can now bound the total variation distance between these two distributions. 
Using their definitions, consider
\begin{align}
2\operatorname{TV} \left( \widetilde\nu_j, \nu_j^\star \right) &= \sum_{k<j} \left| \frac{\pi_k r_{jk}} {\sum_{\ell<j}\pi_\ell r_{j\ell}} - \frac{\pi_k r^\star_{jk}} {\sum_{\ell<j}\pi_\ell r^\star_{j\ell}} \right| \\
&\le \sum_{k<j} \left| \frac{\pi_k r_{jk}} {\sum_{\ell<j}\pi_\ell r_{j\ell}} - \frac{\pi_k r^\star_{jk}} {\sum_{\ell<j}\pi_\ell r_{j\ell}} \right| + \sum_{k<j} \left| \frac{\pi_k r^\star_{jk}} {\sum_{\ell<j}\pi_\ell r_{j\ell}} - \frac{\pi_k r^\star_{jk}} {\sum_{\ell<j}\pi_\ell r^\star_{j\ell}} \right|. \label{eq:TV-split-app}
\end{align}
The first term in~\eqref{eq:TV-split-app} is
\begin{equation}
\frac{ \sum_{k<j}\pi_k|r_{jk}-r^\star_{jk}| }{ \sum_{\ell<j}\pi_\ell r_{j\ell} }. \label{eq:TV-first-term-app}
\end{equation}
For the second sum, factoring out the numerator and applying the triangle inequality yields the exact same bound:
\begin{align}
\left| \frac{ \sum_{k<j}\pi_k r^\star_{jk} }{ \sum_{\ell<j}\pi_\ell r_{j\ell} } - 1 \right| &= \frac{ \left| \sum_{k<j}\pi_k (r^\star_{jk}-r_{jk}) \right| }{ \sum_{\ell<j}\pi_\ell r_{j\ell} } \\
&\le \frac{ \sum_{k<j}\pi_k |r_{jk}-r^\star_{jk}| }{ \sum_{\ell<j}\pi_\ell r_{j\ell} }. \label{eq:TV-second-term-app}
\end{align}
Combining~\eqref{eq:TV-split-app}--\eqref{eq:TV-second-term-app}, the factor of 2 cancels, and gives
\begin{equation}
\operatorname{TV} \left( \widetilde\nu_j, \nu_j^\star \right) \le \frac{ \sum_{k<j}\pi_k |r_{jk}-r^\star_{jk}| }{ \sum_{\ell<j}\pi_\ell r_{j\ell} }. \label{eq:TV-direct-app}
\end{equation}
Using~\eqref{eq:unnormalized-error-app} and~\eqref{eq:lower-witness-weight-app}, we obtain
\begin{equation}
\operatorname{TV} \left( \widetilde\nu_j, \nu_j^\star \right) \le \frac{ \zeta W_j }{ 32\eta(1-\zeta) }. \label{eq:TV-lower-distributions-app}
\end{equation}

For $k<j$, we have $0\le W_k\le W_j$. 
Therefore~\eqref{eq:ideal-weighted-half-app} and~\eqref{eq:TV-lower-distributions-app} imply
\begin{align}
\E_{k\sim\widetilde\nu_j}[W_k] &\le \E_{k\sim\nu_j^\star}[W_k] + W_j \operatorname{TV} \left( \widetilde\nu_j, \nu_j^\star \right) \\
&\le \frac{W_j}{2} + \frac{ \zeta W_j^2 }{ 32\eta(1-\zeta) } \\
&\le \left( \frac12 + \frac{\zeta} {32\eta(1-\zeta)} \right) W_j, \label{eq:physical-lower-expectation-app}
\end{align}
where the last inequality uses $W_j\le1$.

At this point, we use the assumption that we stated in \Cref{lem:successful-candidate-potential}: the comparator error $\zeta$ is small enough such that $\frac{\zeta}{32\eta(1-\zeta)} \le \frac{1}{16}$. 
Substituting this in directly yields:
\begin{equation}
\E_{k\sim\widetilde\nu_j}[W_k] \le \frac{9}{16}W_j. \label{eq:nine-sixteen-app}
\end{equation}
Thus, conditioned on the witness corresponding to a genuinely lower-energy candidate, the physical comparator still contracts the expected candidate potential by a constant factor of $9/16$.

It remains to account for false witnesses with $k\ge j$. 
For every such candidate, we have
$\mathcal E_j-\mathcal E_k \le 0,$
so~\eqref{eq:pair-pass} gives $r_{jk}\le\zeta$. 
Conditioned on obtaining a witness, the probability that the candidate is not actually below the anchor is therefore
\begin{equation}
\Pr[ k\ge j \mid j,\text{ witness} ] = \frac{ \sum_{k\ge j}\pi_k r_{jk} }{ \lambda_j }\le \frac{ \zeta\sum_{k\ge j}\pi_k }{ \lambda_j }\le \frac{\zeta}{\lambda_j} \le \frac{\zeta}{\eta(1-\zeta)}, \label{eq:false-witness-fraction-app}
\end{equation}
where the final inequality uses~\eqref{eq:lambda-excited-lower}.

To find the total expected potential of an accepted candidate, we combine the lower-energy case~\eqref{eq:nine-sixteen-app} with the worst-case scenario for a false witness (which trivially has potential $W_k \le 1$). This final sum gives:
\begin{align}
\E[ W_{\rm cand} \mid j,\text{ witness} ] &\le \frac{9}{16}W_j + \Pr[ k\ge j \mid j,\text{ witness} ] \\
&\le \frac{9}{16}W_j + \frac{\zeta}{\eta(1-\zeta)}. \label{eq:candidate-potential-app}
\end{align}
This explicitly proves the bound in~\eqref{eq:candidate-potential} for a single successful comparison trial.

Finally, \Cref{lem:measured-direction-preserved} shows that, conditioned on a descent stage finding a witness, its first successful comparison register has the same distribution $q_j(k)$ as a single successful comparison trial. 
Therefore~\eqref{eq:candidate-potential} also applies to the first successful comparison register in a descent stage.
\end{proof}

\section{SD Potential Recursion for General Anchor States}
\label{app:measured-potential-recursion}

We now justify the recursion~\eqref{eq:V-recursion-measured} when the anchor at the beginning of a descent stage is an arbitrary state $\rho_t$. 
The bounds in \Cref{sec:descent-potential} were derived for an anchor with fixed energy $\mathcal E_j$. 
Our goal is to show that these fixed-energy bounds can be applied separately to the different energy sectors of $\rho_t$.

Decompose the anchor state as
\begin{equation}
\rho_t = \sum_{j,j'} \Pi_j\rho_t\Pi_{j'}. \label{eq:rho-energy-decomposition}
\end{equation}
For convenience, define $X_{jj'}\coloneqq\Pi_j\rho_t\Pi_{j'}$. 
The diagonal block $X_{jj}$ has weight
$\Tr(X_{jj})=\Tr(\Pi_j\rho_t)$, whereas the blocks $X_{jj'}$ with $j\neq j'$ describe coherences between distinct anchor energies.

We first show that these off-diagonal blocks do not contribute to the energy populations of the anchor after a descent stage.

Let $\sigma_\psi\coloneqq|\psi\rangle\langle\psi|$ denote the state of a fresh candidate register $B$. 
Its population at energy $\mathcal E_k$ is
$\pi_k=\Tr(\Pi_k\sigma_\psi)$.
By the comparator action~\eqref{eq:comparator-action}, measuring the witness qubit induces the two operators
\begin{equation}
K_0 = \sum_{a,b} A(z_{ab})\, \Pi_a^A\otimes\Pi_b^B, \qquad K_1 = \sum_{a,b} B(z_{ab})\, \Pi_a^A\otimes\Pi_b^B, \label{eq:measured-comparator-Kraus}
\end{equation}
for the no-witness and witness outcomes, respectively. 
By~\eqref{eq:comparator-action} and~\eqref{eq:rab}, $|B(z_{ab})|^2=r_{ab}$ and $|A(z_{ab})|^2=1-r_{ab}$.

Consider first the no-witness outcome. 
Applying $K_0$ to a single energy block $X_{jj'}$ gives
\begin{align}
&K_0 \bigl( X_{jj'}\otimes\sigma_\psi \bigr) K_0^\dagger \nonumber \\
&\qquad= \sum_{a,b,a',b'} A(z_{ab})A(z_{a'b'})^* \bigl( \Pi_a X_{jj'}\Pi_{a'} \bigr) \otimes \bigl( \Pi_b\sigma_\psi\Pi_{b'} \bigr). \label{eq:no-witness-full-expansion}
\end{align}
Since $X_{jj'}=\Pi_j\rho_t\Pi_{j'}$, orthogonality of the spectral projectors gives
$\Pi_aX_{jj'}\Pi_{a'}=0$ unless $a=j$ and $a'=j'$. 
Hence
\begin{equation}
K_0 \bigl( X_{jj'}\otimes\sigma_\psi \bigr) K_0^\dagger = \sum_{k,k'} A(z_{jk})A(z_{j'k'})^* X_{jj'} \otimes \Pi_k\sigma_\psi\Pi_{k'}. \label{eq:no-witness-block-before-trace}
\end{equation}
After a no-witness outcome, register $B$ is discarded. 
Using cyclicity of the trace and $\Pi_{k'}\Pi_k=\delta_{kk'}\Pi_k$,
\begin{equation}
\Tr(\Pi_k\sigma_\psi\Pi_{k'})=\delta_{kk'}\pi_k.
\end{equation}
Therefore
\begin{equation}
\Tr_B \left[ K_0 \bigl( X_{jj'}\otimes\sigma_\psi \bigr) K_0^\dagger \right] = c_{jj'}X_{jj'}, \qquad c_{jj'} \coloneqq \sum_k \pi_k A(z_{jk})A(z_{j'k})^*. \label{eq:no-witness-block-after-trace}
\end{equation}
Thus a no-witness outcome preserves the pair of anchor-energy sectors $(j,j')$: the block $X_{jj'}$ is only multiplied by a scalar.

If $j\neq j'$, this block contributes zero to every anchor energy population. 
Indeed, for any energy level $\mathcal E_\ell$,
\begin{equation}
\Tr( \Pi_\ell X_{jj'} ) = \Tr( \Pi_\ell\Pi_j\rho_t\Pi_{j'} ) = \Tr( \Pi_{j'}\Pi_\ell\Pi_j\rho_t ) = 0, \label{eq:no-witness-coherence-zero}    
\end{equation}
because $j\neq j'$ implies $\Pi_{j'}\Pi_\ell\Pi_j=0$ for every $\ell$. 
By~\eqref{eq:no-witness-block-after-trace}, the same conclusion remains true after any number of consecutive no-witness outcomes.

Now consider a witness outcome. 
Replacing $K_0$ by $K_1$ in the same calculation gives
\begin{equation}
K_1 \bigl( X_{jj'}\otimes\sigma_\psi \bigr) K_1^\dagger = \sum_{k,k'} B(z_{jk})B(z_{j'k'})^* X_{jj'} \otimes \Pi_k\sigma_\psi\Pi_{k'}. \label{eq:witness-block-before-swap}
\end{equation}
After a witness, registers $A$ and $B$ are swapped. 
The state in~\eqref{eq:witness-block-before-swap} therefore becomes
\begin{equation}
\sum_{k,k'} B(z_{jk})B(z_{j'k'})^* \Pi_k\sigma_\psi\Pi_{k'} \otimes X_{jj'}.
\end{equation}
The second factor is now the old anchor and is discarded. 
Tracing it out gives
\begin{equation}
\Tr_B \left[ \operatorname{SWAP} K_1 \bigl( X_{jj'}\otimes\sigma_\psi \bigr) K_1^\dagger \operatorname{SWAP} \right] = \Tr(X_{jj'}) \sum_{k,k'} B(z_{jk})B(z_{j'k'})^* \Pi_k\sigma_\psi\Pi_{k'}. \label{eq:witness-block-after-trace}
\end{equation}
For $j\neq j'$,
\begin{equation}
\Tr(X_{jj'}) = \Tr( \Pi_j\rho_t\Pi_{j'} ) = \Tr( \Pi_{j'}\Pi_j\rho_t ) = 0. \label{eq:off-diagonal-trace-zero}
\end{equation}
Hence a witness outcome removes every off-diagonal anchor-energy block completely.

We can now apply these observations to a complete descent stage. 
Each branch of the stage consists of a sequence of no-witness outcomes, followed either by a witness or by termination after $M$ unsuccessful trials. 
By~\eqref{eq:no-witness-block-after-trace}, consecutive no-witness outcomes keep an initial off-diagonal block $X_{jj'}$ within the same pair of sectors $(j,j')$, so by~\eqref{eq:no-witness-coherence-zero} it never contributes to an anchor energy population. 
If a witness eventually occurs,~\eqref{eq:off-diagonal-trace-zero} removes the block completely. 
Therefore, in every branch of the descent stage, the blocks $X_{jj'}$ with $j\neq j'$ make zero contribution to the final anchor energy populations.

Let $\mathcal S$ denote the map implemented by one complete descent stage. 
The preceding calculation implies, for every output energy $\mathcal E_\ell$,
\begin{equation}
\Tr\!\left( \Pi_\ell\mathcal S(\rho_t) \right) = \sum_j \Tr\!\left( \Pi_\ell\mathcal S(X_{jj}) \right). \label{eq:output-populations-diagonal-blocks}
\end{equation}
Thus only the diagonal energy blocks of $\rho_t$ are relevant for computing the output potential.

Set $p_j\coloneqq\Tr(X_{jj})=\Tr(\Pi_j\rho_t)$. 
For $p_j>0$, define the normalized state
$\widehat\rho_j\coloneqq X_{jj}/p_j$.
This state is supported entirely in the eigenspace with energy $\mathcal E_j$. 
By linearity of $\mathcal S$,
$\mathcal S(X_{jj})=p_j\mathcal S(\widehat\rho_j)$.
Using~\eqref{eq:output-populations-diagonal-blocks} and the definition~\eqref{eq:Vt}, we therefore obtain
\begin{align}
V_{t+1} &= \sum_\ell W_\ell \Tr\!\left( \Pi_\ell\mathcal S(\rho_t) \right) \nonumber \\
&= \sum_j p_j \sum_\ell W_\ell \Tr\!\left( \Pi_\ell\mathcal S(\widehat\rho_j) \right). \label{eq:Vt-diagonal-decomposition}
\end{align}

For $j=0$, the fixed-energy bound~\eqref{eq:measured-ground-potential} gives
\begin{equation}
\sum_\ell W_\ell \Tr\!\left( \Pi_\ell\mathcal S(\widehat\rho_0) \right) \le M\zeta.
\end{equation}
For every $j>0$, the fixed-energy contraction~\eqref{eq:measured-stage-contraction} gives
\begin{equation}
\sum_\ell W_\ell \Tr\!\left( \Pi_\ell\mathcal S(\widehat\rho_j) \right) \le \frac{43}{64}W_j + \frac{\zeta}{\eta(1-\zeta)}.
\end{equation}
Substituting these bounds into~\eqref{eq:Vt-diagonal-decomposition} yields
\begin{align}
V_{t+1} &\le \Tr(\Pi_0\rho_t)\, M\zeta+ \sum_{j>0} \Tr(\Pi_j\rho_t) \left( \frac{43}{64}W_j + \frac{\zeta}{\eta(1-\zeta)} \right) \nonumber \\
&= \frac{43}{64} \sum_{j>0} W_j\Tr(\Pi_j\rho_t) + \Tr(\Pi_0\rho_t)\, M\zeta + \frac{\zeta}{\eta(1-\zeta)} \sum_{j>0} \Tr(\Pi_j\rho_t). \label{eq:Vt-arbitrary-anchor-bound}
\end{align}
Since $\Tr(\Pi_0\rho_t)\le1$ and
$\sum_{j>0}\Tr(\Pi_j\rho_t)\le1$, while $W_0=0$ and hence
$\sum_{j>0}W_j\Tr(\Pi_j\rho_t)=V_t$ by~\eqref{eq:Vt}, we conclude that
\begin{equation}
V_{t+1} \le \frac{43}{64}V_t + M\zeta + \frac{\zeta}{\eta(1-\zeta)}.
\end{equation}
This is exactly~\eqref{eq:V-recursion-measured}.

\section{Witness Amplification in CSD}
\label{app:coherent-fixed-point-search}

\begin{proof}[Proof of \Cref{lem:coherent-witness-amplification}]
Fix an anchor energy level $\mathcal E_j$. 
After preparing the candidate state and applying the comparator, the witness probability is $\lambda_j$. 
For a fixed anchor state in this energy eigenspace, the source reflection in~\eqref{eq:Ssrc} acts as the phase reflection about the prepared state, as shown in \Cref{sec:coherent-witness-search}.
Together with the witness reflection, it preserves the span of the normalized witness and no-witness components, so the success probability depends on the anchor level only through $\lambda_j$.

For an odd sequence length $L$, the FPAA construction of Ref.~\cite{Yoder2014} with failure parameter $\delta_{\rm fp}=1/2$ gives the amplified witness probability
\begin{equation}
P_L(\lambda) = 1-\frac14 T_L\!\left( T_{1/L}(2)\sqrt{1-\lambda} \right)^2, \label{eq:CSD-YLC-success}
\end{equation}
where $T_L$ is the Chebyshev polynomial of the first kind. 
The construction guarantees $P_L(\lambda)\ge3/4$ whenever
\begin{equation}
\lambda \ge 1-T_{1/L}(2)^{-2}.
\end{equation}
Using $T_{1/L}(2)=\cosh(\operatorname{arcosh}(2)/L)$, the right-hand side is
\begin{equation}
1- \frac{1}{ \cosh^2(\operatorname{arcosh}(2)/L) } = \tanh^2\!\left( \frac{\operatorname{arcosh}2}{L} \right) \le \left( \frac{\operatorname{arcosh}2}{L} \right)^2.
\end{equation}
Now choosing $L$ so that it satisfies~\eqref{eq:L-choice}, the right-hand side of the above inequality is at most $\eta(1-\zeta)$. 
Also, we know that every excited anchor level satisfies $\lambda_j\ge\eta(1-\zeta)$ by~\eqref{eq:lambda-excited-lower}.
This means that $\lambda_j \ge 1-T_{1/L}(2)^{-2}$ is satisfied.
Hence,
\begin{equation}
s_j^{\rm CSD} = P_L(\lambda_j) \ge \frac34, \qquad j>0,
\end{equation}
which proves~\eqref{eq:excited-coherent-success}.

It remains to control the opposite regime, where the raw witness probability is very small. 
This is needed because for a ground-state anchor we only know $\lambda_0\le\zeta$, rather than $\lambda_0=0$.

Define the degree-$L$ polynomial
\begin{equation}
Q_L(x) \coloneqq \frac12 T_L\!\left( T_{1/L}(2)x \right).
\end{equation}
Equation~\eqref{eq:CSD-YLC-success} becomes $P_L(\lambda)=1-Q_L(\sqrt{1-\lambda})^2$. 
Moreover, $T_L(T_{1/L}(2))=2$, so $Q_L(1)=1$.

We next verify that $|Q_L(x)|\le1$ for $x\in[-1,1]$. 
If $|T_{1/L}(2)x|\le1$, then $|T_L(T_{1/L}(2)x)|\le1$, and therefore $|Q_L(x)|\le1/2$. 
For $1\le T_{1/L}(2)x\le T_{1/L}(2)$, the Chebyshev polynomial increases from $1$ to $2$, so $Q_L(x)\in[1/2,1]$. 
Because $L$ is odd, the same absolute bound holds for negative $x$.

Since $Q_L$ has degree $L$ and is bounded by one on $[-1,1]$, Markov's polynomial inequality (also called Markov brothers' inequality) gives
\begin{equation}
\max_{x\in[-1,1]} |Q_L'(x)| \le L^2.
\end{equation}
Also, we have
$1-\sqrt{1-\lambda} = \sfrac{\lambda} {(1+\sqrt{1-\lambda})} \le \lambda.$
Applying the mean-value theorem between $\sqrt{1-\lambda}$ and $1$ gives
\begin{equation}
Q_L(\sqrt{1-\lambda}) \ge Q_L(1) - L^2\left(1-\sqrt{1-\lambda}\right) \ge 1-L^2\lambda.
\end{equation}
If $L^2\lambda\le1/2$, then $Q_L(\sqrt{1-\lambda})\ge1/2$. 
Therefore
\begin{align}
P_L(\lambda) &= 1-Q_L(\sqrt{1-\lambda})^2 \\
&= \left[ 1-Q_L(\sqrt{1-\lambda}) \right] \left[ 1+Q_L(\sqrt{1-\lambda}) \right] \\
&\le 2\left[ 1-Q_L(\sqrt{1-\lambda}) \right] \\
&\le 2L^2\lambda. \label{eq:CSD-fixed-point-low-side}
\end{align}

For a ground-state anchor, $\lambda_0\le\zeta$ by~\eqref{eq:lambda0-upper}. 
Hence, whenever $L^2\zeta\le1/2$,
\begin{equation}
s_0^{\rm CSD} = P_L(\lambda_0) \le 2L^2\lambda_0 \le 2L^2\zeta.
\end{equation}
This proves~\eqref{eq:coherent-ground-witness}.
\end{proof}

\section{Candidate Distribution Conditioned on a Witness in CSD}
\label{app:coherent-witness-direction}

\begin{proof}[Proof of \Cref{lem:coherent-direction-preserved}]
Fix an anchor state with energy $\mathcal E_j$. 
We first derive the state of the witness branch produced before any amplitude amplification is applied.

Recall that at the beginning, the anchor register contains $|\mathcal E_j\rangle_A$, while the candidate register and witness qubit are initialized in $|0^n\rangle_B$ and $|0\rangle_q$. 
The initial state is therefore
$|\mathcal E_j\rangle_A|0^n\rangle_B|0\rangle_q,$
and we apply the unitary $V$ as defined in~\eqref{eq:A-search} to this state.

The first operation in $V$ is $U_\psi$, which prepares the candidate state in register $B$. 
Using the notation introduced in \Cref{sec:measured-search}, we write the input state as
$|\psi\rangle = \sum_k\sqrt{\pi_k}\,|\mathcal E_k\rangle,$
where $|\mathcal E_k\rangle$ denotes the normalized component of the input state $|\psi\rangle$ with energy $\mathcal E_k$. 
Thus, after applying $U_\psi$, we get
\begin{equation}
|\mathcal E_j\rangle_A|0^n\rangle_B|0\rangle_q \longmapsto \sum_k \sqrt{\pi_k}\, |\mathcal E_j\rangle_A |\mathcal E_k\rangle_B |0\rangle_q. \label{eq:CSD-after-Upsi}
\end{equation}

We now apply the comparator $C_\zeta$ to all the registers. 
For a single energy pair $(\mathcal E_j,\mathcal E_k)$, its action is given by the following (see~\eqref{eq:comparator-action}):
\begin{equation}
C_\zeta |0\rangle_q |\mathcal E_j\rangle_A |\mathcal E_k\rangle_B = A(z_{jk}) |0\rangle_q |\mathcal E_j\rangle_A |\mathcal E_k\rangle_B + B(z_{jk}) |1\rangle_q |\mathcal E_j\rangle_A |\mathcal E_k\rangle_B.
\end{equation}
Applying this relation term by term to~\eqref{eq:CSD-after-Upsi}, we obtain
\begin{equation}
V |\mathcal E_j\rangle_A|0^n\rangle_B|0\rangle_q = \sum_k \sqrt{\pi_k}\, A(z_{jk}) |\mathcal E_j\rangle_A |\mathcal E_k\rangle_B |0\rangle_q + \sum_k \sqrt{\pi_k}\, B(z_{jk}) |\mathcal E_j\rangle_A |\mathcal E_k\rangle_B |1\rangle_q. \label{eq:CSD-before-amplification}
\end{equation}
The second term in~\eqref{eq:CSD-before-amplification} is the witness branch, that is, $q=1$.

The states $|\mathcal E_k\rangle_B$ belonging to distinct energies are orthogonal. 
Therefore the contribution of candidate energy $\mathcal E_k$ to the squared norm of the witness branch is
$\pi_k|B(z_{jk})|^2 = \pi_k r_{jk},$
where we used $|B(z_{jk})|^2=r_{jk}$. 
Summing over all candidate energies gives the total witness probability
$\sum_k\pi_k r_{jk} = \lambda_j,$
in agreement with~\eqref{eq:lambda-j}. 
Hence, before amplitude amplification, conditioning on the witness ($q=1$) gives
\begin{equation}
\Pr[\mathcal E_k\mid j,\mathrm{witness}] = \frac{\pi_k r_{jk}}{\lambda_j} = q_j(k). \label{eq:CSD-single-witness-distribution}
\end{equation}

It remains to show that the FPAA operation changes the total probability of obtaining this witness branch, but not the relative amplitudes of the candidate-energy components within it.
We first consider $0<\lambda_j<1$, and the endpoint cases are addressed below.

Recall from~\eqref{eq:fpaa-witness-decomposition} that
$|\Phi_j^{(1)}\rangle$ and $|\Phi_j^{(0)}\rangle$ denote the
normalized witness and no-witness components of the prepared state,
respectively.
Using~\eqref{eq:CSD-before-amplification}, the witness component is
\begin{equation}
|\Phi_j^{(1)}\rangle = \frac{1}{\sqrt{\lambda_j}} \sum_k \sqrt{\pi_k}\, B(z_{jk}) |\mathcal E_j\rangle_A |\mathcal E_k\rangle_B |1\rangle_q. \label{eq:CSD-normalized-witness-direction}
\end{equation}
Similarly, $|\Phi_j^{(0)}\rangle$ is the normalized $q=0$ component
of~\eqref{eq:CSD-before-amplification}.
The state prepared by $V$ can therefore be written as
\begin{equation}
|\Phi_j\rangle = \sqrt{1-\lambda_j}\,|\Phi_j^{(0)}\rangle + \sqrt{\lambda_j}\,|\Phi_j^{(1)}\rangle. \label{eq:CSD-two-dimensional-decomposition}
\end{equation}

We now follow the action of the two phase operations used in the FPAA sequence. 
The witness phase operation $S_{\rm wit}(\beta)$ acts as the identity on the $q=0$ component and multiplies the entire $q=1$ component by $\e^{\iu\beta}$. 
Hence
\begin{equation}
|\Phi_j^{(0)}\rangle \longmapsto |\Phi_j^{(0)}\rangle, \qquad |\Phi_j^{(1)}\rangle \longmapsto \e^{\iu\beta}|\Phi_j^{(1)}\rangle.
\end{equation}
In particular, every candidate-energy component inside $|\Phi_j^{(1)}\rangle$ receives the same phase. 
Their relative amplitudes are therefore unchanged.

Next consider $S_{\rm src}(\alpha)$.
By~\eqref{eq:Ssrc}, within the fixed-anchor sector it applies the phase $\e^{-\iu\alpha}$ to $|\Phi_j\rangle$ and acts as the identity on its orthogonal complement.
Since $|\Phi_j\rangle$ lies in $\operatorname{span}\{|\Phi_j^{(0)}\rangle,|\Phi_j^{(1)}\rangle\}$, this operation preserves the same two-dimensional space.

Every step of the FPAA sequence therefore maps
$\operatorname{span}\{|\Phi_j^{(0)}\rangle,|\Phi_j^{(1)}\rangle\}$ to itself. 
Starting from~\eqref{eq:CSD-two-dimensional-decomposition}, the state after the complete sequence must consequently have the form
$a_j|\Phi_j^{(0)}\rangle+b_j|\Phi_j^{(1)}\rangle$
for some amplitudes $a_j$ and $b_j$. 
The FPAA sequence changes $b_j$, and hence changes the total witness probability, but the normalized witness direction $|\Phi_j^{(1)}\rangle$ itself is unchanged.

Conditioning on the final witness branch therefore leaves the candidate-energy probabilities exactly as in~\eqref{eq:CSD-single-witness-distribution}:
\begin{equation}
\Pr[\mathcal E_k\mid j,\mathrm{witness}] = \frac{\pi_k r_{jk}}{\lambda_j} = q_j(k).
\end{equation}
If $\lambda_j=0$, both phase operations preserve the no-witness prepared state up to a phase, so no witness can occur and there is no conditional distribution to consider.
If $\lambda_j=1$, both phase operations preserve the witness prepared state up to a phase, so its candidate-energy distribution is unchanged.
Thus FPAA changes the probability of finding a witness without changing the conditional distribution of the successful candidate energy whenever the final witness probability is nonzero.
\end{proof}

\section{CSD Potential Recursion for General Anchor States}
\label{app:coherent-potential-recursion}

We now justify the recursion~\eqref{eq:V-recursion-coherent} for an arbitrary anchor state in CSD. 
The corresponding argument for SD was proved in detail in \Cref{app:measured-potential-recursion}. 
Here we only verify the additional point required for CSD: the coherent witness-search circuit preserves the anchor-energy sectors before the final SWAP.

Let $\rho_t$ be the anchor state at the beginning of a CSD descent stage, and decompose it as
$\rho_t=\sum_{j,j'}\Pi_j\rho_t\Pi_{j'}.$
As in \Cref{app:measured-potential-recursion}, the diagonal blocks $\Pi_j\rho_t\Pi_j$ determine the initial energy populations, whereas the blocks with $j\neq j'$ describe coherences between distinct anchor energies.

Consider the coherent witness search before the witness qubit measurement and conditional anchor update via SWAP operation.
The operations $U_\psi$ and $U_\psi^\dagger$ act only on the candidate register, and $C_\zeta$ and $C_\zeta^\dagger$ are block diagonal in the anchor-energy basis by~\eqref{eq:comparator-action}.
Thus $V=C_\zeta(I_q\otimes I_A\otimes U_\psi)$ commutes with each anchor spectral projector $\Pi_j$, where identities on the other registers are implicit.
Since $P_{\rm init}$ also commutes with $\Pi_j$, the operator $S_{\rm src}(\alpha)$ in~\eqref{eq:Ssrc} satisfies
\begin{equation}
[S_{\rm src}(\alpha),\Pi_j] = \left[ V\bigl[I_{qAB}-(1-\e^{-\iu\alpha})P_{\rm init}\bigr] V^\dagger,\Pi_j \right] =0.
\end{equation}
The operator $S_{\rm wit}(\beta)$ acts only on $q$, so it also commutes with $\Pi_j$.
Therefore the entire coherent search preserves each
anchor-energy eigenspace, and an input block
$\Pi_j\rho_t\Pi_{j'}$ remains between the same two
anchor-energy sectors $(j,j')$ throughout the search.

After the search, the witness qubit separates the state into the no-witness and witness branches. 
On the no-witness branch, no SWAP is performed. 
Thus an off-diagonal block with $j\neq j'$ remains supported between the same two anchor-energy sectors and contributes zero to every output energy population, exactly as in~\eqref{eq:no-witness-coherence-zero}.

On the witness branch, we apply $\operatorname{SWAP}_{AB}$, which moves the old anchor into register $B$, and then discard $B$.
For an input block $\Pi_j\rho_t\Pi_{j'}$ with $j\neq j'$, tracing out the old anchor gives zero by the same calculation as~\eqref{eq:off-diagonal-trace-zero}. 
Finally, cross terms between the witness and no-witness branches vanish when the witness qubit is traced out because the states $|0\rangle_q$ and $|1\rangle_q$ are orthogonal.

It follows that, exactly as in \Cref{app:measured-potential-recursion}, the energy populations after one CSD descent stage depend only on the diagonal input blocks $\Pi_j\rho_t\Pi_j.$
Let $p_j\coloneqq\Tr(\Pi_j\rho_t),$ and, for $p_j>0$, define $\widehat\rho_j\coloneqq\Pi_j\rho_t\Pi_j/p_j.$
Each $\widehat\rho_j$ is supported entirely in the eigenspace of energy $\mathcal E_j$.

Let $\mathcal S_{\rm CSD}$ denote one complete CSD descent stage. 
Repeating the decomposition derived in~\eqref{eq:Vt-diagonal-decomposition} gives
\begin{equation}
V_{t+1} = \sum_j p_j \sum_k W_k \Tr\!\left( \Pi_k \mathcal S_{\rm CSD}(\widehat\rho_j) \right). \label{eq:CSD-Vt-diagonal-decomposition}
\end{equation}

For $j=0$, the state $\widehat\rho_0$ lies entirely in the ground-energy eigenspace, so~\eqref{eq:coherent-ground-error} gives
\begin{equation}
\sum_k W_k \Tr\!\left( \Pi_k\mathcal S_{\rm CSD}(\widehat\rho_0) \right) \le 2L^2\zeta.
\end{equation}
For every $j>0$, the fixed-energy contraction~\eqref{eq:coherent-excited-contraction} gives
\begin{equation}
\sum_k W_k \Tr\!\left( \Pi_k\mathcal S_{\rm CSD}(\widehat\rho_j) \right) \le \frac{43}{64}W_j + \frac{\zeta}{\eta(1-\zeta)}.
\end{equation}

Substituting these bounds into~\eqref{eq:CSD-Vt-diagonal-decomposition}, we obtain
\begin{align}
V_{t+1} &\le \Tr(\Pi_0\rho_t)\,2L^2\zeta \nonumber \\
&\quad+ \sum_{j>0} \Tr(\Pi_j\rho_t) \left( \frac{43}{64}W_j + \frac{\zeta}{\eta(1-\zeta)} \right) \nonumber \\
&\le \frac{43}{64} \sum_{j>0} W_j\Tr(\Pi_j\rho_t) + 2L^2\zeta + \frac{\zeta}{\eta(1-\zeta)}.
\end{align}
Since $W_0=0$, the definition~\eqref{eq:Vt} gives
$\sum_{j>0}W_j\Tr(\Pi_j\rho_t)=V_t.$
Therefore
\begin{equation}
V_{t+1} \le \frac{43}{64}V_t + 2L^2\zeta + \frac{\zeta}{\eta(1-\zeta)},
\end{equation}
which is exactly~\eqref{eq:V-recursion-coherent}.

\bibliographystyle{alpha}
\bibliography{ref}

\end{document}